\crefname{hypothesis}{Hypothesis}{Hypotheses}
\title{Series expansions and direct inversion for the Heston model\thanks{Submitted to the editors DATE.
\funding{James Watt Scholarship.}}}
\author{Simon J. A. Malham\thanks{Maxwell Institute for Mathematical Sciences and School of Mathematical and Computer Sciences, Heriot-Watt University, Edinburgh, EH14 4AS, UK
  (\email{S.J.A.Malham@hw.ac.uk}, \email{js46@hw.ac.uk}, \email{A.Wiese@hw.ac.uk}).}
\and Jiaqi Shen\footnotemark[2]
\and Anke Wiese\footnotemark[2]}
\Crefname{ALC@unique}{Line}{Lines}
\colorlet{texcscolor}{blue!50!black}
\colorlet{texemcolor}{red!70!black}
\colorlet{texpreamble}{red!70!black}
\colorlet{codebackground}{black!25!white!25}
\lstdefinestyle{siamlatex}{%
  style=tcblatex,
  texcsstyle=*\color{texcscolor},
  texcsstyle=[2]\color{texemcolor},
  keywordstyle=[2]\color{texemcolor},
  moretexcs={eq:rho_taylor,eq:rho_taylor,maketitle,mathcal,text,headers,email,url},
}
\DeclareTotalTCBox{\code}{ v O{} }
{ 
  fontupper=\ttfamily\color{black},
  nobeforeafter,
  tcbox raise base,
  colback=codebackground,colframe=white,
  top=0pt,bottom=0pt,left=0mm,right=0mm,
  leftrule=0pt,rightrule=0pt,toprule=0mm,bottomrule=0mm,
  boxsep=0.5mm,
  #2}{#1}
\patchcmd\newpage{\vfil}{}{}{}
\title{Series Expansions and direct inversion for the Heston model\thanks{Submitted to the editors DATE.
\funding{James Watt Scholarship.}}}
\author{Simon J. A. Malham\thanks{Maxwell Institute for Mathematical Sciences and School of Mathematical and Computer Sciences, Heriot-Watt University, Edinburgh, EH14 4AS, UK (\email{S.J.A.Malham@hw.ac.uk}, \email{js46@hw.ac.uk}, \email{A.Wiese@hw.ac.uk}).}
\and Jiaqi Shen\footnotemark[2] 
\and Anke Wiese\footnotemark[2]}
\begin{document}
\maketitle

\begin{tcbverbatimwrite}{tmp_\jobname_abstract.tex}
\begin{abstract}
Efficient sampling for the conditional time integrated variance process in the Heston stochastic volatility model is key to the simulation of the stock price based on its exact distribution. We construct a new series expansion for this integral in terms of double infinite weighted sums of particular independent random variables through a change of measure and the decomposition of squared Bessel bridges. When approximated by series truncations, this representation has exponentially decaying truncation errors. We propose feasible strategies to largely reduce the implementation of the new series to simulations of simple random variables that are independent of any model parameters. We further develop direct inversion algorithms to generate samples for such random variables based on Chebyshev polynomial approximations for their inverse distribution functions. These approximations can be used under any market conditions. Thus, we establish a strong, efficient and almost exact sampling scheme for the Heston model.

\end{abstract}

\begin{keywords}
  Series expansion, Direct inversion, Chebyshev approximation, Stochastic volatility
\end{keywords}

\begin{AMS}
 91G60, 41A58, 34E05, 41A10, 60H30 	
\end{AMS}
\end{tcbverbatimwrite}
\input{tmp_\jobname_abstract.tex}

\section{Introduction}
\label{sec:intro}
Stochastic volatility models involving a pair of stochastic differential equations, with the diffusion term of the first one governed by the evolution of the second equation, are immensely popular in the pricing of derivatives. Among the existing stochastic volatility models, the Heston model plays an important role and is used widely. It can be expressed in the form of a two-dimensional system
\begin{align}
    \frac{d S_t }{S_t} & = \mu \, dt + \sqrt{V_t} \left( \rho \, dW_t^1 + \sqrt{1-\rho^2} \, dW_t^2 \right), \label{eq:asset_pro} \\
    d V_t & = \kappa \left( \theta - V_t \right) \, dt + \sigma \sqrt{V_t} \, dW_t^1, \label{eq:variance_pro}
\end{align}
where $W^1$ and $W^2$ are two independent standard Brownian motions, and $\kappa$, $\theta$, $\sigma$ and typically also $\mu$ are positive constants with $\rho \in [-1,1]$. The component $S$ characterises the dynamics of the stock price while the component $V$ specifies the variances of its returns. The introduction of randomness to the volatility has been used to explain the long-observed features of the implied volatility surface in a self-consistent way. The variance process follows a mean-reverting square-root or Cox-Ingersoll-Ross (CIR) process (Cox, Ingersoll and Ross \cite{cox2005theory}).

Closed form solutions for standard vanilla option prices under the Heston model are available; see Heston \cite{heston1993closed} and Kahl and J{\"a}ckel \cite{kahl2005not}. However for exotic options, especially path-dependent options, such closed form solutions are not known in general and Monte Carlo simulation is often employed. Typically, continuous stochastic processes are approximated by paths simulated on discrete time grids. It is normally natural to consider the Euler-Maruyama scheme which converges weakly with convergence rate one under certain regularity conditions; see Section $14.5$ in Kloeden and Platen \cite{kloeden1999numerical}, or other standard higher-order discretization approaches such as the Milstein \cite{milstein1994numerical} and It\^{o}-Taylor schemes introduced in Chapter $14$ and $15$ in Kloeden and Platen \cite{kloeden1999numerical}; see Section $6.2$ in Glasserman \cite{glasserman2003monte} as well. However, these conditions do not hold in the Heston model, which will be discussed in detail below.

Discretization schemes such as those have several drawbacks for the Heston model. The first issue is that the probability of the discretised variance process becoming negative is nonzero, which will bring considerable biases to the simulation estimators. Correction techniques such as absorption and reflection are designed to overcome this problem, see Gatheral \cite{gatheral2006volatility}, Bossy and Diop \cite{bossy2006efficient} and Higham and Mao \cite{higham2005convergence}. Lord, Koekkoek and Van Dijk \cite{lord2006comparison} unify a large number of traditional correction techniques and design a new scheme, the full truncation method, which seems to perform well in many situations. Taking advantage of the qualitative properties of the true distributions, Andersen \cite{andersen2007efficient} proposes two new time-discretization algorithms based on moment-matching strategies, namely the truncated Gaussian scheme and the quadratic-exponential scheme. These positivity-preserving schemes are reported to have substantial improvements in efficiency and robustness over other existing methods; see Andersen \cite{andersen2007efficient}, Lord, Koekkoek and Van Dijk \cite{lord2006comparison} and Haastrecht and Pelsser \cite{van2010efficient}.

The second issue is related to convergence, which requires the drift and diffusion coefficients to be globally Lipschitz, see Kloeden and Platen \cite{kloeden1999numerical}. However, the square root functions embedded in the Heston model are not Lipschitz. Thus, convergence of these discretization schemes is difficult to establish; see Glasserman \cite{glasserman2003monte} and Andersen \cite{andersen2007efficient}.  Recently, Altmayer and Neuenkirch \cite{altmayer2017discretising} have studied the weak convergence rate for a numerical scheme under the Heston model, which typically reaches order one with mild assumptions. Hefter and Jentzen \cite{Hefter2019arbitrarily} consider the one-dimensional CIR process and show that equidistant discretization methods may have an arbitrary slow convergence rate in the strong sense. See Alfonsi \cite{alfonsi2010high} and Berkaoui, Bossy and Diop \cite{berkaoui2008euler} for more discussions on the convergence of the discretised univariate variance process.
 

Apart from discretization schemes, there are also (almost) exact simulation methods based on the exact distributions of the stock price and variance processes. The respective transition laws of these follow a conditional lognormal distribution and a conditional scaled noncentral chi-square distribution; see Cox, Ingersoll and Ross \cite{cox2005theory}. Broadie and Kaya \cite{broadie2006exact} take this approach to generate sample variance and stock price. They apply an acceptance-rejection method to the noncentral chi-square sampling for the variance process. Malham and Wiese \cite{malham2013chi} propose an exact acceptance-rejection method  and a high-accuracy direct inversion method for the simulation of the generalised Gaussian distribution, which are then applied to the noncentral chi-squared sampling. Haastrecht and Pelsser \cite{van2010efficient} focus on the efficient approximation of the variance. They explore the features of the distribution for the variance process and suggest a cache for its inverse distribution functions, leading to an almost exact simulation scheme.    

To realise the stock price, the key task of Broadie and Kaya \cite{broadie2006exact} is to sample from the time integrated variance conditional at the endpoints, i.e. $\left( \int_0^t V_s \, ds \middle\vert V_0, V_t  \right)$. They build on the results $($2.m$)$ and $($6.d$)$ in Pitman and Yor \cite{pitman1982decomposition} to derive the explicit form for the corresponding characteristic function. Fourier inversion techniques in conjunction with the trapezoidal rule are applied to numerically evaluate the probability distribution function. This is followed by inverse transform sampling to simulate the value of the above integral. Their numerical results imply that the proposed method has a faster convergence rate compared to the Euler scheme with bias-free simulation. 

Because of the dependence on $V_0$ and $V_t$ , Broadie and Kaya \cite{broadie2006exact} compute the characteristic function for each step and path in the Monte Carlo simulation. At the expense of a small bias, Smith \cite{smith2007almost} presents an approximation to the characteristic function, which makes it possible to precalculate and store the values of the characteristic function for all the points required in advance. Glasserman and Kim \cite{glasserman2011gamma} provide another sampling method 
for the time integrated conditional variance, which relies on an explicit representation as infinite sums and mixtures of gamma random variables. When combined with the exact simulation method suggested by Broadie and Kaya \cite{broadie2006exact}, their method is highly effective in terms of both accuracy and computational speed for pricing non-path-dependent options across a full range of model parameter values. 

Motivated by the decomposition in Glasserman and Kim \cite{glasserman2011gamma} $($Theorem $2.2)$, we simplify the variance process to a squared Bessel process by a measure transformation and construct a new series expansion for its conditional integral under the new measure with exponentially decaying truncation errors. We demonstrate that the task of sampling the new series can be largely reduced to simulations of simple random variables, which are independent of any model parameters. We provide highly accurate Chebyshev polynomial approximations to the inverse distribution functions of such random variables and design direct inversion algorithms to generate their samples. Thus, we establish a flexible, efficient and almost exact simulation scheme for the Heston model. To summarise, the advantages of our method are that truncation errors decay exponentially, high-accuracy samples can be generated efficiently by direct inversions and approximations of inverse distribution functions can be used under any market conditions. 

The paper is organised as follows. In \cref{sec:main_results}, we present our new representation under the new probability measure and the acceptance-rejection algorithm for changing back to the original measure. In \cref{sec:simulation}, we detail the simulation methods for each individual part of the representation. We include the derivation of asymptotic expansions for the corresponding distribution functions and the construction of Chebyshev polynomial approximations for their inverse. We apply our method for pricing purposes and we compare its efficiency and accuracy to Glasserman and Kim \cite{glasserman2011gamma} with numerical results reported in \cref{sec:numerical_analysis}. Conclusions are drawn in \cref{sec:conclusion}.

\section{Main results}
\label{sec:main_results}
The method we propose closely follows the lead of Broadie and Kaya \cite{broadie2006exact} and Glasserman and Kim \cite{glasserman2011gamma} with the key difference for the simulation of the conditional integral of the variance process. To complete the understanding of the motivation for sampling from the conditional integral, we first quote some properties with regard to the Heston model.

We start with the variance process governed by \cref{eq:variance_pro}, which is a CIR process (Cox, Ingersoll and Ross \cite{cox2005theory}) with transition probability given explicitly as a scaled noncentral chi-squared distribution. With the degrees of freedom for this process defined to be $\delta \coloneqq 4 \kappa \theta / \sigma^2$, we have  
\begin{align}
 V_t \sim  \frac{\sigma^2 \left( 1-\exp{ \left( -\kappa t \right)} \right) }{4 \kappa} \chi_{\delta}^{ 2} \left(  \frac{4 \kappa \exp{\left( - \kappa t \right)}}{\sigma^2 \left( 1-\exp{ \left(- \kappa t \right)} \right) } V_0 \right),  \qquad t>0, \label{eq:transition_var} 
\end{align}
where $V_0 > 0$ is the initial value and $\chi_{\delta}^{ 2} \left( \lambda \right)$ denotes a noncentral chi-squared random variable with degrees of freedom $\delta$ and noncentrality parameter $\lambda$. This means that conditional on $V_0$, $V_t$ is distributed as $\sigma^2 \left( 1-\exp{ \left( -\kappa t \right)} \right) / \left( 4 \kappa \right)$ multiplied by a noncentral chi-squared distribution with degrees of freedom $\delta$ and noncentrality parameter 
\begin{align*}
    \lambda \coloneqq  \frac{4 \kappa \exp{\left( - \kappa t \right)}}{\sigma^2 \left( 1-\exp{ \left(- \kappa t \right)} \right) } V_0.
\end{align*}
The above law provides a way of exactly simulating $V_t$ from $V_0$, see Broadie and Kaya \cite{broadie2006exact}, Scott \cite{scott1996simulating} and Malham and Wiese \cite{malham2013chi} for details.

By employing the explicit solution of the stock price process \cref{eq:asset_pro} and It\^{o}'s formula, we obtain 
\begin{align*}
    \log{S_t} = \log{S_0}+\mu t -\frac{1}{2} \int_0^t V_s \, ds + \rho \int_0^t \sqrt{V_s} \, dW_s^1 +\sqrt{1-\rho^2} \int_0^t \sqrt{V_s} \, dW_s^2.
\end{align*} 
Integrating the variance process \cref{eq:variance_pro} also gives
\begin{align*}
    \int_0^t \sqrt{V_s} \, dW_s^1 = \int_0^t \frac{1}{\sigma} \left( \, dV_s - \kappa \left( \theta - V_s \right) \,  ds \right)
     = \frac{V_t -V_0 -\kappa \theta t}{\sigma} + \frac{\kappa}{\sigma} \int_0^t V_s \, ds. 
\end{align*}
Combining these two results, Broadie and Kaya \cite{broadie2006exact} observe that given $V_0$, $V_t$ and $\int_0^t V_s \, ds$, the distribution of $\log{\left( S_t/S_0 \right)}$ is Gaussian with known moments since the process $V$ is independent of the Brownian motion $W^2$, i.e. 
\begin{align*}
  \log{\frac{S_t}{S_0}} \sim \text{N} \left(  \mu t +\frac{\rho}{\sigma} \left( V_t-V_0-\kappa \theta t \right)  +\left( \frac{\rho \kappa}{\sigma} -\frac{1}{2} \right)  \int_0^t V_s \, ds ,\left( 1-\rho^2 \right) \int_0^t V_s \, ds       \right). 
\end{align*}

Hence, an exact simulation for the stock price $S_t$ given the initial conditions $S_0$ and $V_0$ is now reduced to sampling a conditional normal random variable given above provided there is a way to sampling from the joint distribution $ \left( V_t, \int_0^t V_s \, ds \right) $. As $V_t$ can be simulated using the transition law in \cref{eq:transition_var}, the main challenge is now to develop a tractable method for sampling from the time integral of the variance process $V_s$ over $\left[ 0,t \right]$ given its endpoints $V_0$ and $V_t$, i.e. 
\begin{align*}
\left( \int_0^t V_s \, ds \middle\vert V_0, V_t \right).    
\end{align*}
In what follows we focus on developing a new representation for the above integral building upon
the decomposition suggested by Glasserman and Kim \cite{glasserman2011gamma}, which applies the decomposition of the squared Bessel bridges from Pitman and Yor \cite{pitman1982decomposition}.

Before proceeding, we reduce the model to a special case by time-rescaling and a measure transformation. First, define $\tilde{A}_t = V_{4t/\sigma^2}$. Then, $\tilde{A}_t$ satisfies the following stochastic differential equation (Glasserman and Kim \cite{glasserman2011gamma})
\begin{align*}
d\tilde{A}_t =  \big( \delta - 2 q \tilde{A}_t \big)  \, dt+ 2 \sqrt{\tilde{A}_t} \, d \tilde{W}_t^1,
\end{align*} 
where $q \coloneqq 2 \kappa/\sigma^2$ and $\tilde{W}_t^1 \coloneqq \sigma W_{4t/{\sigma^2}}^1 / 2$ becomes a standard Brownian motion. In order to consider a new probability measure, we suppose that the original model is established under measure $\mathbb{Q}$, meaning that our target now is to simulate 
\begin{align*}
\left[ \int_0^t V_s \, ds \middle\vert V_0=v_0, V_t=v_t \right] \xlongequal[]{d} \left[ \frac{4}{\sigma^2} \int_{0}^{\tau} \tilde{A}_s \, ds \middle\vert \tilde{A}_0 = a_0, \tilde{A}_{\tau} = a_{\tau} \right]
\end{align*}
under $\mathbb{Q}$ with $\tau=\sigma^2 t/4$, $a_0 = v_0$ and $a_{\tau} = v_t$.

Second, we further simplify the model by introducing a new probability measure $\mathbb{P}$ (Pitman and Yor \cite{pitman1982decomposition} and Glasserman and Kim \cite{glasserman2011gamma}) such that 
\begin{align}
\frac{d \mathbb{P}}{d \mathbb{Q}} = \exp \left( q \int_0^\tau \sqrt{\tilde{A}_s} \, d \tilde{W}_s^1 - \frac{q^2}{2} \int_0^\tau \tilde{A}_s \, ds   \right). \label{eq:radon-nikodym}
\end{align}
 By the Girsanov theorem, $W_\tau^\mathbb{P} \coloneqq \tilde{W}_\tau^1 - \int_0^\tau q \sqrt{\tilde{A}_s} \, ds$ is a standard Brownian motion under $\mathbb{P}$. With this replacement, the rescaled process $\tilde{A}$ satisfies
\begin{align}
d\tilde{A}_t = \delta \, dt + 2 \sqrt{\tilde{A}_t} \, dW_t^\mathbb{P}, \label{eq:squared_Bessel}
\end{align}
which is a $\delta$-dimensional squared Bessel process under $\mathbb{P}$. Hence, our objective is to sample from the time integral of a squared Bessel process $\tilde{A}$ given its values at the endpoints, denoted by $I$, under the new probability measure $\mathbb{P}$, i.e.  
\begin{align}
   I = \left( \int_0^\tau \tilde{A}_s \, ds \middle\vert \tilde{A}_0=a_0, \tilde{A}_\tau =a_\tau \right), \label{eq:integral_P_tildeA}
\end{align}
 and to find a connection between the distributions for the conditional integral under $\mathbb{P}$ and $\mathbb{Q}$. 
 
Next, we state our main result which relies on decomposing the conditional integral as introduced by Glasserman and Kim \cite{glasserman2011gamma}.
\begin{theorem} \label{thm:Series_Rep} 
Under the new probability measure $\mathbb{P}$, the conditional integral 
of the rescaled variance process $\tilde{A}$ is equivalent in distribution to the sum of three infinite series of random variables
\begin{align*}
I = \left( \int_0^\tau \tilde{A}_s \, ds \middle\vert \tilde{A}_0 = a_0, \tilde{A}_\tau = a_\tau \right) \xlongequal[]{d} X_1 + X_2 + \sum_{j=1}^\eta Z_j,
\end{align*}
where $X_1$, $X_2$, $\eta$, $Z_1$, $Z_2$, $\ldots$ are mutually independent, and $\eta$ is a Bessel random variable with parameters $\nu = \delta/2-1$ and $z=\sqrt{a_0 a_\tau}/\tau$, i.e. $\eta \sim \text{Bessel} \left( \nu, z \right)$. Moreover, $X_1$, $X_2$, $Z_1$, $Z_2$, $\ldots$ admit the following representations: \\
$($a$)$ We have
\begin{align*}
    X_1 \xlongequal[]{d} \sum_{n=0}^{\infty} \frac{\tau^2}{4^n} \sum_{k=1}^{P_n} S_{n,k},
\end{align*}
where for $n=0,1, \ldots$, the $P_n$ are independent Poisson random variables with mean $ \left( a_0 + a_\tau \right) 2^{n-1}/ \tau$ and for $k=1,2, \ldots, P_n$, the $S_{n,k}$ are independent copies of the random variable $S \coloneqq \left( 2/{\pi^2} \right) \sum_{l=1}^{\infty} \epsilon_l/{l^2}$ and $\epsilon_l \sim \text{Exp}\left( 1 \right)$ are independent exponential random variables for $l=1,2,\ldots$; \\
$($b$)$ Further we have
\begin{align*}
     X_2 \xlongequal[]{d} \sum_{n=1}^{\infty} \frac{\tau^2}{4^n} C_{n}^{\delta/2},
\end{align*}
where for $n=1, 2,\ldots$, the $C_n^{\delta/2}$ are independent copies of the random variable $C^{\delta/2} \coloneqq \left( 2/{\pi^2} \right) \sum_{l=1}^{\infty} \Gamma_{\delta/2,l}/{ \left( l- 1/2 \right)^2}$ and $\Gamma_{\delta/2, l} \sim \text{Gamma}\left( \delta/{2},1 \right)$ are independent gamma random variables with shape $\delta/2$ and rate $1$ for $l=1,2,\ldots$; \\
$($c$)$ And also we have the $Z_j$, $j=1,2,\ldots, \eta$, which are independent copies of the random variable $Z$ such that
\begin{align*}
       Z \xlongequal[]{d} \sum_{n=1}^{\infty} \frac{\tau^2}{4^n} C_n^{'}, 
\end{align*}
where for $n=1,2, \ldots$, the $C_n^{'}$ are independent copies of the random variable $C^2 \coloneqq \left( 2/{\pi^2} \right) \sum_{l=1}^{\infty} \Gamma_{2,l}/{ \left( l-1/2 \right)^2}$ and $\Gamma_{2,l} \sim \text{Gamma}\left( 2,1 \right)$ are independent gamma random variables  with shape $2$ and rate $1$ for $l=1,2,\ldots$.
\end{theorem}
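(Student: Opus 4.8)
The plan is to build the representation in three stages, following the decomposition of squared Bessel bridges due to Pitman and Yor and the reorganisation used by Glasserman and Kim. First I would recall that a squared Bessel process of dimension $\delta$ started at $a_0$ and conditioned to hit $a_\tau$ at time $\tau$ — a squared Bessel bridge — decomposes (in law, as a process) into the sum of independent pieces: a $0$-dimensional squared Bessel bridge from $a_0$ to $0$, a $0$-dimensional squared Bessel bridge from $0$ to $a_\tau$, and a $\delta$-dimensional squared Bessel bridge from $0$ to $0$; the last of these can itself be decomposed, using the Ray--Knight/excursion description, into $\eta$ independent copies of a certain squared Bessel-type bridge, where $\eta \sim \mathrm{Bessel}(\nu, z)$ with $\nu = \delta/2 - 1$ and $z = \sqrt{a_0 a_\tau}/\tau$. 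Integrating over $[0,\tau]$ and using linearity of the integral then gives $I \overset{d}{=} X_1 + X_2 + \sum_{j=1}^\eta Z_j$ with the three groups mutually independent, where $X_1$ collects the contributions of the two bridges ending/starting at the prescribed nonzero endpoints, $X_2$ collects the $0\to0$ bridge of dimension $\delta$ minus the excursion part, and each $Z_j$ is the time integral of one excursion-type bridge.

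The second stage is to obtain the explicit series for each of $X_1$, $X_2$, $Z$. Here I would use the classical eigenfunction (Karhunen--Loève type) expansions for the $L^2[0,\tau]$-norm of Bessel bridges. A $0$-dimensional squared Bessel bridge from $a$ to $0$ on $[0,\tau]$ can be realised as $\|B_t\|^2$ where $B$ is a suitable (time-changed) Brownian bridge, and its integral expands in terms of $\sin$ eigenfunctions, yielding $\sum \epsilon_l / (\pi l)^2$-type sums; grouping the two endpoint bridges and Poissonising the number of "squared-Brownian-bridge" components (the number being governed by $a_0, a_\tau$ and $\tau$) produces the double sum $\sum_{n\ge 0} (\tau^2/4^n) \sum_{k=1}^{P_n} S_{n,k}$ with $P_n \sim \mathrm{Poisson}((a_0+a_\tau)2^{n-1}/\tau)$ and $S$ the stated $\sum_l \epsilon_l/(\pi l)^2 = (2/\pi^2)\sum_l \epsilon_l/l^2$ random variable. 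For $X_2$ (the $\delta$-dimensional $0\to0$ bridge contribution not absorbed into the excursions) and for each $Z_j$ (dimension-$2$ bridge excursion), the relevant eigenfunctions are the "shifted" ones giving denominators $(l-1/2)^2$, and the $\delta/2$-dimensional squared norm contributes $\Gamma(\delta/2,1)$ numerators while the $2$-dimensional one contributes $\Gamma(2,1)$ numerators — so $X_2 \overset{d}{=} \sum_{n\ge1}(\tau^2/4^n) C_n^{\delta/2}$ and $Z \overset{d}{=} \sum_{n\ge1}(\tau^2/4^n) C_n'$ with the stated $C^{\delta/2}$, $C^2$. The geometric prefactor $\tau^2/4^n$ arises from iterating a self-similar/recursive dyadic decomposition of the time interval together with the scaling property $(BESQ^\delta$ scales as space $\propto$ time$)$ of squared Bessel processes, which is exactly what yields the exponentially decaying truncation errors advertised in the abstract.

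The third stage is bookkeeping: verify the mutual independence claims (which follow from the independence built into the Pitman--Yor decomposition and the independence of the Karhunen--Loève coefficients, together with independence of $\eta$ from everything else since $\eta$ is the excursion count of an independent bridge), and check the parameter identifications — in particular that $\eta \sim \mathrm{Bessel}(\nu,z)$ with the stated $\nu,z$, and that the Poisson means and the shape parameters come out as claimed. The main obstacle I anticipate is assembling the dyadic/recursive decomposition cleanly so that the $4^{-n}$ weights, the doubling Poisson means $2^{n-1}$, and the three distinct eigenvalue families all emerge consistently from one bridge decomposition rather than three ad hoc ones; getting the normalisations ($2/\pi^2$, the $(l-1/2)^2$ versus $l^2$ denominators, and the exact rate-$1$ Gamma shapes $\delta/2$ and $2$) to line up is where the care is needed, whereas the existence of the decomposition and the independence structure are essentially quotable from Pitman--Yor and Glasserman--Kim.
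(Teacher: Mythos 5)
Your first stage coincides with the paper's: rescale to the unit interval, recognise the conditioned process as a squared Bessel bridge, and invoke the Pitman--Yor/Glasserman--Kim decomposition to write $I \overset{d}{=} X_1' + X_2' + \sum_{j=1}^{\eta} Z_j'$ with $\eta\sim\mathrm{Bessel}(\nu,z)$. One correction there: the $\eta$-indexed sum is \emph{not} obtained by further decomposing the $\delta$-dimensional $0\to0$ bridge. In the decomposition the three components are a $0$-dimensional bridge from $x+y$ to $0$ (whose integral is $X_1'$ in full), a $\delta$-dimensional bridge from $0$ to $0$ (whose integral is $X_2'$ in full), and, as a separate additive term, $\eta$ independent $4$-dimensional bridges from $0$ to $0$; the Bessel count $\eta$ encodes the joint dependence on the two nonzero endpoints, not an excursion count inside the $0\to0$ bridge.

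The genuine gap is in your second stage, and you have in fact located it yourself: the ``self-similar/recursive dyadic decomposition of the time interval'' that is supposed to produce the $\tau^2/4^n$ weights, the doubling Poisson means $(a_0+a_\tau)2^{n-1}/\tau$, and the three families of inner laws is never constructed, and a pathwise Karhunen--Lo\`eve argument of the kind you sketch would also have to explain why the numerators are $\mathrm{Exp}(1)$ rather than $\chi^2_1$ and why the Poissonisation appears at every dyadic level. The paper avoids all of this by working purely with Laplace transforms. It quotes from Glasserman and Kim the transforms of the three bridge integrals, e.g. $\Phi_1'(b)=\exp\bigl(\tfrac{a_0+a_\tau}{2\tau}(1-\sqrt{2b}\,\tau\coth(\sqrt{2b}\,\tau))\bigr)$ and $\Phi_2'(b)=\bigl(\sqrt{2b}\,\tau/\sinh(\sqrt{2b}\,\tau)\bigr)^{\delta/2}$, applies the iterated halving identities
\begin{align*}
\coth\zeta \equiv \coth\frac{\zeta}{2^{N+1}} - \sum_{n=0}^{N}\frac{1}{\sinh\frac{\zeta}{2^n}}, \qquad
\sinh\zeta \equiv 2^{N}\sinh\frac{\zeta}{2^{N}}\prod_{n=1}^{N}\cosh\frac{\zeta}{2^{n}},
\end{align*}
and lets $N\to\infty$ to write each transform as an infinite product over $n$. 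It then computes the transforms of the proposed double series using $\mathbb{E}[\exp(-bS)]=\sqrt{2b}/\sinh\sqrt{2b}$ and $\mathbb{E}[\exp(-bC^{\delta/2})]=(\cosh\sqrt{2b})^{-\delta/2}$ from Biane, Pitman and Yor, observes the products agree factor by factor, and concludes by uniqueness of Laplace transforms (with $Z$ the case $\delta=4$). This is exactly the mechanism your plan is missing; without it, or a fully worked pathwise substitute, the representations in parts (a)--(c) are asserted rather than proved.
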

\begin{proof}
We work on the probability measure $\mathbb{P}$ throughout this proof. We note that for a fixed $\tau>0$,
\begin{align}
\left( \int_0^\tau \tilde{A}_s \, ds \middle\vert \tilde{A}_0 = a_0, \tilde{A}_\tau = a_\tau \right) = \left( \tau^2 \int_0^1 A_s \, ds \middle\vert A_0 =x, A_1 =y \right), \label{eq:integral_P_A}
\end{align}
where $A_s$ is defined by setting $A_s =  \tilde{A}_{s \tau} / \tau$ for $ 0 \leq s \leq 1$ and $x= a_0/ \tau, y = a_\tau/\tau$. Then using equation \cref{eq:squared_Bessel}, the process $A$ satisfies 
\begin{align*}
dA_s = \delta \, ds + 2 \sqrt{A_s} \, dW_s,
\end{align*}
where $W_s \coloneqq  W_{s \tau}^\mathbb{P} / \sqrt{\tau}$ is a standard Brownian motion. We observe that
$ \left\lbrace  A_s \right\rbrace _{0 \leq s \leq 1}$ is a $\delta$-dimensional squared Bessel process. Conditional on the end points, the process
$ \big( A_s, 0 \leq s \leq 1 \big\vert  A_0 = x, A_1 =y  \big) $ is then a squared Bessel bridge, denoted
by $A_{x,y}^{\delta,1} = \left\lbrace A_{x,y}^{\delta,1} \left( s \right) \right\rbrace_{0 \leq s \leq 1}$. Then the right hand side of \cref{eq:integral_P_A} has the same distribution as 
\begin{align}
     \left( \tau^2 \int_0^1 A_{x,y}^{\delta,1} \left( s \right) \, ds \right). \label{eq:integral_P_Bessel}
\end{align}
We prove the result in three steps.

First, the integral \cref{eq:integral_P_Bessel} can be decomposed into the sum of three independent parts as follows:
    \begin{align*}
     \tau^2 \int_0^1 A_{x,y}^{\delta,1} \left( s \right) \, ds \xlongequal[]{d}   X_1' + X_2' + \sum_{j=1}^\eta Z_j', 
    \end{align*}
where 
   \begin{align*}
   X_1' & = \tau^2 \int_0^1 A_{x+y,0}^{0,1} \left( s \right) \, ds, \\
   X_2' & = \tau^2 \int_0^1 A_{0,0}^{\delta, 1} \left( s \right) \, ds, 
   \end{align*}
and for $j=1,2,\ldots, \eta$, $Z_j'$ are independent copies of
\begin{align*}
    Z' = \tau^2 \int_0^1 {A_{0,0}^{4,1}} \left( s \right) \, ds,
\end{align*}
and $\eta$ is an independent Bessel random variable with parameters $\nu = {\delta}/{2} -1$ and $ z = \sqrt{xy} = {\sqrt{a_0 a_\tau}}/{\tau}$, i.e. $\eta \sim \text{Bessel} \left( \nu, z \right)$. This is a direct result from Glasserman and Kim \cite{glasserman2011gamma}, who apply the decomposition of squared Bessel bridges proposed by Pitman and Yor \cite{pitman1982decomposition} to the transformed variance process.

Second, it follows from Glasserman and Kim \cite{glasserman2011gamma} that the Laplace transforms of $X_1'$, $X_2'$ and $Z'$ for $b \geq 0$ are given by  
 \begin{align}
       \Phi'_1 \left(b\right) & =  \exp{ \left( \frac{a_0+a_\tau}{2 \tau} \left( 1-\sqrt{2b} \tau \coth{ \big( \sqrt{2b} \tau  \big) } \right)   \right)},  \label{eq:Laplace1} \\
       \Phi'_2\left(b\right) & =\left( \frac{\sqrt{2b} \, \tau}{\sinh  \big( \sqrt{2b} \, \tau \big) }  \right)^{\delta/2}, \label{eq:Laplace2} \\
       \Phi'_3\left(b\right) & =\left(  \frac{\sqrt{2b} \, \tau}{\sinh  \big(  \sqrt{2b} \, \tau \big)  }  \right)^{2}. \label{eq:Laplace3}
   \end{align}

  Third, to verify the random variables $X'_1$, $X_2'$ and $Z'$ have the same distribution as the series expansions which define $X_1$, $X_2$ and $Z$ respectively, it is sufficient to show that they have identical Laplace transforms. To do this, let us first rewrite $\Phi'_i$, $i=1,2,3$ using some important identities regarding the hyperbolic functions $\coth$ and $\sinh$; see
  Malham and Wiese \cite{malham2014efficient}. Specifically, we observe
   \begin{align*}
       \coth{\zeta} & \equiv \coth{\frac{\zeta}{2}} -\frac{1}{\sinh{\zeta}} , \\
       \sinh{\zeta} & \equiv 2 \sinh{\frac{\zeta}{2}} \cosh{\frac{\zeta}{2}}.
   \end{align*}
   Iterating $N$ times gives us
   \begin{align}
       \coth{\zeta} & \equiv \coth{\frac{\zeta}{2^{N+1}}} - \sum_{n=0}^N \frac{1}{\sinh{\frac{\zeta}{2^n}}}, \label{eq:identity_coth} \\
       \sinh{\zeta} & \equiv 2^N \sinh{\frac{\zeta}{2^N}} \prod_{n=1}^N \cosh{\frac{\zeta}{2^n}}. \label{eq:identity_sinh}
   \end{align}
   Substituting \cref{eq:identity_coth} into \cref{eq:Laplace1} and rearranging the terms, we get
   \begin{align*}
    \exp{ \left(   \frac{a_0 +a_\tau}{2 \tau} \left(  1- \zeta \coth \zeta    \right) \right)   }  = \prod_{n=0}^N \exp{ \left( \frac{a_0+a_\tau}{2 \tau} 2^n  \frac{ \frac{\zeta}{2^n}  }{ \sinh \frac{\zeta}{2^n}}  \right)}  I_{N} \left( \zeta, \tau, a_0, a_\tau \right),
   \end{align*}
    where $I_N \left( \zeta, \tau,a_0,a_\tau \right) \coloneqq \exp{\left( - \left( {a_0+a_\tau} \right) \left( \zeta \coth ({\zeta}/{2^{N+1}}) -1 \right) / \left( {2 \tau}  \right)   \right)} $. On the other hand, it follows from $\sum_{n=0}^N 2^n = 2^{N+1}-1$ that 
    \begin{align*}
         I_N \left( \zeta, \tau, a_0,a_\tau \right) = \prod_{n=0}^N \exp{ \left( -\frac{a_0+a_\tau}{2 \tau} 2^n  \right)  } \exp{ \left( \varepsilon_{1,N} \left( \zeta, \tau, a_0, a_\tau \right) \right)} ,
    \end{align*}
     where $\varepsilon_{1,N} \left( \zeta, \tau, a_0, a_\tau \right) \coloneqq - \left( {a_0+ a_\tau} \right)  \left( \zeta \coth \left( {\zeta}/{2^{N+1}} \right) - 2^{N+1} \right) /\left( {2 \tau} \right) \to 0$ as $ N\to \infty$. Thus, for $\zeta=\sqrt{2b} \, \tau$, we have an alternative form for $\Phi'_1$ given by
    \begin{align*}
    \Phi'_1\left(b\right) = \prod_{n=0}^{\infty} \exp{\left(  \frac{a_0+a_\tau}{2\tau} \, 2^n \left( \frac{\frac{\sqrt{2b} \, \tau}{2^n}}{\sinh \frac{\sqrt{2b} \, \tau}{2^n}} -1  \right) \right)}.
    \end{align*}
    Similarly, for $\Phi'_2$ after substitution and rearrangement, we have 
    \begin{align*}
       \left( \frac{\zeta}{\sinh \zeta} \right)^{\delta/2} = \prod_{n=1}^N \left( \cosh \frac{\zeta}{2^n} \right)^{- \delta/2}  \varepsilon_{2,N} \left( \zeta,\delta \right),
    \end{align*}
    where $\varepsilon_{2,N} (\zeta, \delta) \coloneqq \left( \left( {\zeta/2^N} \right)/{\sinh{ \left( {\zeta/2^N} \right)} }\right)^{\delta/2} \to 1$ as $N \to \infty$. As a result, plugging $\zeta= \sqrt{2b} \, \tau $ into this expression yields
    \begin{align*}
    \Phi'_2\left(b\right) = \prod_{n=1}^{\infty} \left( \cosh \frac{\sqrt{2b} \, \tau}{2^n} \right)^{- \delta/2}.
    \end{align*} 
    Next, we derive the Laplace transforms of $X_1$, $X_2$ and $Z$, denoted by $\Phi_1$, $\Phi_2$ and $\Phi_3$ respectively. For any $b \geq 0$, we have
   \begingroup
    \allowdisplaybreaks
    \begin{align*}
        \Phi_1 \left(b\right) & = \mathbb{E} \left[ \exp{ \left( -b X_1 \right)} \right] \\
                   & = \prod_{n=0}^\infty \mathbb{E} \left[  \exp{ \left( -b \frac{\tau^2}{4^n} \sum_{k=1}^{P_n} S_{n,k} \right) } \right]  \\
                   & =\prod_{n=0}^\infty \mathbb{E} \left[ \prod_{k=1}^{P_n} \mathbb{E} \left( \exp \left( -b \frac{\tau^2}{4^n} S_{n,k} \right) \right) \right] \\ 
                   & = \prod_{n=0}^\infty \mathbb{E} \left[ \left(   \frac{\sqrt{ \frac{2 b {\tau^2}}{4^n} }}{\sinh{\sqrt{ \frac{2b \tau^2}{4^n}}} } \right)^{P_n}  \right] \\ 
                   & = \prod_{n=0}^\infty \exp{ \left(  \frac{a_0+a_{\tau}}{2 \tau} 2^n \left( \frac{\frac{\sqrt{2b} \, \tau}{2^n}}{\sinh{\frac{\sqrt{2b}  \, \tau}{2^n}}} -1 \right) \right)},
    \end{align*}
    \endgroup
where the second equality comes from the interchange of expectation and limit by the Bounded Convergence Theorem and the fourth equality holds due to the fact that $ \mathbb{E} \left[ \exp \left( -b S_{n,k} \right) \right] =\sqrt{2b} / \sinh{\sqrt{2b}} $ for all $ n \geq 0$ and $k \geq 1$ (see Biane, Pitman and Yor \cite[formula $(1.8)$]{biane2001probability}).
    
Following similar arguments, we now determine the Laplace transform $\Phi_2$ for $X_2$. Indeed, from $\mathbb{E} \left[ \exp{ \left( -b C_n^{\delta/2} \right) } \right] = \left( \cosh{\sqrt{2b}} \right)^{-\delta/2}$ for any $ n \geq 1$ (see Biane, Pitman and Yor \cite[formula $(1.8)$]{biane2001probability}, we conclude that
     \begin{align*}
        \Phi_2\left(b\right) & = \mathbb{E} \left[ \exp{ \left( -b X_2 \right) } \right] \\
                  & = \prod_{n=1}^{\infty} \mathbb{E} \left[ \exp{\left( -b \frac{\tau^2}{4^n} C_n^{\delta/2} \right)} \right] \\
                  & = \prod_{n=1}^{\infty} \left( {\cosh{ \frac{\sqrt{2b} \, \tau}{2^n}}} \right)^{-\delta/2}.
    \end{align*}

Hence, we can now deduce that $X_i' \xlongequal[]{d} X_i $ as $\Phi_i' = \Phi_i$ for $i=1,2$. In line with the steps explained above, $Z' \xlongequal[]{d} Z$ follows since this is a special case when $\delta=4$, completing the proof.
\end{proof}

\begin{remark}
We notice that after separating the time parameter $\tau$, the dependence of $X_1$ on model parameters is only though the Poisson random variable $P_n$ and $X_2$ depends only on one parameter $\delta$. This feature provides us with a possibility that the task of sampling the conditional integral $I$ can be largely reduced to the simulations of simple random variables, whose distributions remain unchanged as we change the values for the model parameters; see \textcolor{siaminlinkcolor}{section} \ref{sec:simulation of X_1} and
\textcolor{siaminlinkcolor}{section}
\ref{sec:simulation of X_2}.
\end{remark}

We have represented the conditional time integral $I$ by 
double infinite weighted sums and mixtures of simple independent random variables under the new probability measure $\mathbb{P}$, which serves as a theoretical basis for the exact simulation from the distribution of \cref{eq:integral_P_tildeA} under $\mathbb{P}$. However, our goal is set up under the probability measure $\mathbb{Q}$. We now focus on the task of generating a sample from the distribution of the conditional integral $I$ under $\mathbb{Q}$ 
once we have generated a sample under $\mathbb{P}$. In particular, we explore the relationship between the probability density functions of the integral under these two probability measures. We specify the details in the following theorem.
\begin{theorem} \label{thm:Laplace_connection}
Suppose that $f_P$ and $f_Q$ are the probability density functions of $I$ under the probability measures $\mathbb{P}$ and $\mathbb{Q}$, respectively. Then, we have
\begin{align*}
    f_Q \left( x \right) = L \left( q, \nu, \tau, a_0, a_\tau \right) \exp{\left( - \frac{q^2}{2} x \right)} f_P\left( x \right),
\end{align*}
where 
\begin{align*}
    L\left( q, \nu,\tau, a_0, a_\tau \right) = \frac{\sinh{\left(q \tau \right)}}{ q \tau} \exp{  \left( \frac{a_0+a_\tau}{2 \tau} \left(q \tau    \coth{\left( q \tau \right)   }          -1     \right)   \right)} \frac{ \text{I}_\nu \left(  \frac{\sqrt{a_0 a_\tau}}{\tau}       \right)  }{ \text{I}_\nu \left(   \frac{q \sqrt{a_0 a_\tau}}{\sinh{\left( q \tau \right)}} \right)} 
\end{align*}
with $\text{I}_\nu\left( \cdot \right)$ denoting the modified Bessel function of the first kind.
\end{theorem}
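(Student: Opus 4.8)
\emph{Proof strategy.}\ The plan is to read the density transformation off the Radon--Nikodym derivative \cref{eq:radon-nikodym}, once the It\^o integral in its exponent has been rewritten so that it depends on the path of $\tilde A$ on $[0,\tau]$ only through the terminal value $\tilde A_\tau$ and the time integral $\int_0^\tau\tilde A_s\,ds$. Since $\mathbb P$ and $\mathbb Q$ are equivalent on $\mathcal F_\tau$, \cref{eq:radon-nikodym} gives $\tfrac{d\mathbb Q}{d\mathbb P}=\exp\!\big(-q\int_0^\tau\sqrt{\tilde A_s}\,d\tilde W^1_s+\tfrac{q^2}{2}\int_0^\tau\tilde A_s\,ds\big)$. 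Integrating the $\mathbb Q$-dynamics $d\tilde A_t=(\delta-2q\tilde A_t)\,dt+2\sqrt{\tilde A_t}\,d\tilde W^1_t$ over $[0,\tau]$ yields the pathwise identity $\int_0^\tau\sqrt{\tilde A_s}\,d\tilde W^1_s=\tfrac12(\tilde A_\tau-\tilde A_0-\delta\tau)+q\int_0^\tau\tilde A_s\,ds$, so with $\tilde A_0=a_0$ fixed,
\begin{align*}
\frac{d\mathbb Q}{d\mathbb P}=\exp\!\Big(-\frac q2\big(\tilde A_\tau-a_0-\delta\tau\big)\Big)\,\exp\!\Big(-\frac{q^2}{2}\int_0^\tau\tilde A_s\,ds\Big),
\end{align*}
a function of the path only through the pair $\big(\tilde A_\tau,\ \int_0^\tau\tilde A_s\,ds\big)$.

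Next I would descend from path space to the joint law of $\big(\int_0^\tau\tilde A_s\,ds,\ \tilde A_\tau\big)$: for bounded measurable $g,h$,
\begin{align*}
\mathbb E^{\mathbb Q}\!\Big[g\Big(\int_0^\tau\!\tilde A_s\,ds\Big)h(\tilde A_\tau)\,\Big|\,\tilde A_0=a_0\Big]=\mathbb E^{\mathbb P}\!\Big[\frac{d\mathbb Q}{d\mathbb P}\,g\Big(\int_0^\tau\!\tilde A_s\,ds\Big)h(\tilde A_\tau)\,\Big|\,\tilde A_0=a_0\Big].
\end{align*}
Write the joint density under $\mathbb P$ as $f_P(x\mid y)\,p^{\mathbb P}_\tau(a_0,y)$, where $p^{\mathbb P}_\tau(a_0,\cdot)$ is the transition density of the $\delta$-dimensional squared Bessel process and $f_P(\cdot\mid y)$ is the conditional density of $\int_0^\tau\tilde A_s\,ds$ given $\tilde A_\tau=y$ (so $f_P=f_P(\cdot\mid a_\tau)$), and likewise under $\mathbb Q$ with the scaled noncentral chi-squared transition density $p^{\mathbb Q}_\tau(a_0,\cdot)$. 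Substituting the explicit $\tfrac{d\mathbb Q}{d\mathbb P}$ above and equating integrands in $x$ and $y$ gives, for every $y>0$,
\begin{align*}
f_Q(x\mid y)\,p^{\mathbb Q}_\tau(a_0,y)=\exp\!\Big(-\frac q2(y-a_0-\delta\tau)\Big)\exp\!\Big(-\frac{q^2}{2}x\Big)f_P(x\mid y)\,p^{\mathbb P}_\tau(a_0,y),
\end{align*}
and setting $y=a_\tau$ identifies $L(q,\nu,\tau,a_0,a_\tau)=\exp\!\big(-\tfrac q2(a_\tau-a_0-\delta\tau)\big)\,p^{\mathbb P}_\tau(a_0,a_\tau)/p^{\mathbb Q}_\tau(a_0,a_\tau)$.

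It then remains to insert the two transition densities and simplify. For the squared Bessel process $p^{\mathbb P}_\tau(a_0,y)=\tfrac1{2\tau}(y/a_0)^{\nu/2}\exp\!\big(-\tfrac{a_0+y}{2\tau}\big)\mathrm{I}_\nu\!\big(\tfrac{\sqrt{a_0y}}{\tau}\big)$ with $\nu=\delta/2-1$, and for the CIR-type process above the scaled noncentral chi-squared law gives $p^{\mathbb Q}_\tau(a_0,y)=\tfrac1{2c}\exp\!\big(-\tfrac{y/c+\lambda}{2}\big)\big(\tfrac{y}{c\lambda}\big)^{\nu/2}\mathrm{I}_\nu\!\big(\sqrt{\lambda y/c}\big)$ with $c=(1-e^{-2q\tau})/(2q)$ and $\lambda=2qe^{-2q\tau}a_0/(1-e^{-2q\tau})$. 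Using $1-e^{-2q\tau}=2e^{-q\tau}\sinh(q\tau)$ one checks the reductions $\sqrt{\lambda y/c}=q\sqrt{a_0y}/\sinh(q\tau)$, $c\lambda=e^{-2q\tau}a_0$, $c/\tau=e^{-q\tau}\sinh(q\tau)/(q\tau)$ and $\tfrac12(y/c+\lambda)=\tfrac q2(a_0+y)\coth(q\tau)+\tfrac q2(y-a_0)$; assembling the ratio at $y=a_\tau$ the Bessel-function argument in the denominator becomes $q\sqrt{a_0a_\tau}/\sinh(q\tau)$, the $\coth$ term produces $\exp\!\big(\tfrac{a_0+a_\tau}{2\tau}(q\tau\coth(q\tau)-1)\big)$, and all remaining exponential and power prefactors cancel---the $e^{\pm q(a_\tau-a_0)/2}$ contributions cancel against the Girsanov weight and, because $\nu=\delta/2-1$, the leftover exponent $-q\tau-q\tau\nu+\tfrac{q\delta\tau}{2}$ vanishes---leaving exactly the stated $L$. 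The main difficulty is not conceptual but organisational: keeping the regular-conditional-distribution step clean (all densities involved are strictly positive for $a_0,a_\tau>0$, so the disintegration and the division by $p^{\mathbb Q}_\tau$ are legitimate) and bookkeeping the exponential factors coming from the Girsanov weight, the squared Bessel density and the noncentral chi-squared density so that they cancel through the hyperbolic identities together with the degrees-of-freedom relation $\nu=\delta/2-1$.
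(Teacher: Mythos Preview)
Your argument is correct and complete, but it follows a genuinely different route from the paper's proof. The paper works entirely on the level of Laplace transforms: it invokes the Pitman--Yor change-of-law formula $(6.\mathrm{d})$ to obtain
\[
\mathcal L\{f_Q\}(b)=\frac{\mathcal L\{f_P\}\!\big(b+\tfrac{q^2}{2}\big)}{\mathcal L\{f_P\}\!\big(\tfrac{q^2}{2}\big)},
\]
reads off $f_Q(x)=\exp(-\tfrac{q^2}{2}x)f_P(x)\big/\mathcal L\{f_P\}(\tfrac{q^2}{2})$ via the Laplace shift property, and then simply quotes Pitman--Yor $(2.\mathrm{m})$ for the explicit value of $\mathcal L\{f_P\}(q^2/2)$, which is $L^{-1}$. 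No transition densities and essentially no computation appear.

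By contrast, you work directly at the path level: you rewrite the Radon--Nikodym derivative as a function of $(\tilde A_\tau,\int_0^\tau\tilde A_s\,ds)$ alone, disintegrate the joint law, and identify $L$ as the Girsanov weight times the ratio $p^{\mathbb P}_\tau(a_0,a_\tau)/p^{\mathbb Q}_\tau(a_0,a_\tau)$ of the squared-Bessel and CIR transition densities, which you then simplify by hand using hyperbolic identities and the relation $\nu=\delta/2-1$. This is more self-contained---you never need the Pitman--Yor Laplace-transform formula---and it makes transparent \emph{why} the constant $L$ has the structure it does (it is literally a likelihood ratio of transition kernels adjusted by the exponential-martingale weight). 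The price is the longer algebraic bookkeeping you flagged. The paper's approach is terser because it outsources both the change-of-law step and the evaluation of $L$ to cited results; yours trades citations for an explicit, verifiable computation.
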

\begin{proof}
We will make use of the shift property of the Laplace transform to justify the theorem. We first establish a connection between their respective Laplace transforms. For any $b \geq 0$, consider the Laplace transform $\mathcal{L}\left\{ f_Q \right\} \left( b \right)$ of $f_Q$ at $b$, which is the $\mathbb{Q}$-expectation of $\exp{ \left( -b I \right)}$. Thus, we get
 \begin{align*}
    \mathcal{L}  \left\{    f_Q  \right\} \left(  b  \right)
    & = \mathbb{E}^\mathbb{Q} \left[ \exp{\left( -b \int_0^\tau \tilde{A}_s \, ds \right)} \middle\vert   \tilde{A}_0 = a_0, \tilde{A}_\tau = a_\tau  \right]  \\
    & = \frac{    \mathbb{E}^\mathbb{P} \left[   \exp{\left( - \left( b +\frac{q^2}{2} \right) \int_0^\tau \tilde{A}_s \, ds \right)}  \middle\vert   \tilde{A}_0 = a_0, \tilde{A}_\tau = a_\tau  \right]}{  \mathbb{E}^\mathbb{P} \left[   \exp{\left( -\frac{q^2}{2} \int_0^\tau \tilde{A}_s \, ds \right)}   \middle\vert   \tilde{A}_0 = a_0, \tilde{A}_\tau = a_\tau  \right]}  \\
    & = \frac{\mathcal{L} \left\{ f_P \right\} \left( b+\frac{q^2}{2} \right)}{\mathcal{L} \left\{ f_P \right\} \left(\frac{q^2}{2} \right)} \\
    & = \mathcal{L}   \left\{ \frac{f_P}{ \mathcal{L} \left\{ f_P \right\} \left( \frac{q^2}{2} \right)  } \right\}  \left( b+\frac{q^2}{2} \right).
\end{align*}
The second equality is a result of the change of law formula $(6.\text{d})$ from Pitman and Yor \cite{pitman1982decomposition}, see the Appendix of Broadie and Kaya \cite{broadie2006exact} as well. Now by the application of the shift property, we can write 
\begin{align*}
    f_Q \left( x\right) = \frac{f_P \left( x \right)}{  \mathcal{L}   \left\{ f_P \right\}  \left( \frac{q^2}{2} \right) } \exp{\left( - \frac{q^2}{2} x \right)}.
\end{align*}
Using the formula $(2.\text{m})$ in Pitman and Yor \cite{pitman1982decomposition} for the Laplace transform $\mathcal{L} \left\{ f_P \right\}$ of $f_P$ at $q^2/2$ given by 
\begin{align*}
    \mathcal{L} \left\{ f_P \right\} \left( \frac{q^2}{2} \right) = \frac{q \tau}{ \sinh{\left( q \tau \right)}} \exp{ \left( \frac{a_0+a_\tau}{2 \tau}  \left( 1-q \tau \coth{\left( q \tau \right)} \right) \right)} \frac{\text{I}_\nu\left( \frac{ q \sqrt{a_0 a_\tau}}{\sinh{\left(  q \tau \right)}} \right)}{\text{I}_\nu\left( \frac{\sqrt{a_0 a_\tau}}{\tau} \right)}
\end{align*}
and setting $L\left( q,\nu, \tau, a_0, a_\tau \right) \coloneqq \left( \mathcal{L} \left\{ f_P \right\} \left( {q^2}/{2} \right) \right)^{-1} $ establishes the stated result.
\end{proof}

The above theorem relates the density $f_P$ of the distribution explicitly given by \cref{thm:Series_Rep} in terms of infinite sums
to the density $f_Q$ of the distribution we are interested in. This means we can simulate
the random variable $I$ under the measure $\mathbb{Q}$ provided we have an observation from its distribution under the measure $\mathbb{P}$. In general, we construct the acceptance-rejection algorithm outlined in \cref{alg:accpet-reject} to generate samples from $f_Q$. 

On average, the probability of accepting a proposed sample is 
\begin{align*}
    \mathbb{P} \left( U \leq \exp{\left( - \frac{q^2 Y}{2} \right)} \right) = \frac{1}{L\left( q,\nu, \tau, a_0, a_\tau \right)},
\end{align*}
where $U \sim \text{Unif} \left( 0, 1 \right)$ and independently $Y$ follows the distribution of $I$ under $\mathbb{P}$. Consequently, we require $L \left( q, \nu, \tau, a_0, a_\tau \right) \geq 1$ due to the fact that a probability only takes values between zero and one. In practice, we prefer a value of $L$ closer to one as it indicates higher acceptance probability on average, and thus fewer iteration steps needed.
\begin{algorithm}[H]
\caption{Acceptance-Rejection}
\label{alg:accpet-reject}
\begin{algorithmic}[1]
\STATE{Simulate a realisation $Y$ of the random variable $I$ under $\mathbb{P}$ using \cref{thm:Series_Rep}.}
\STATE{Obtain a sample $U$ independently from the uniform distribution $\text{Unif} \left( 0,1 \right)$ over unit interval.}
\STATE{If $U \leq \exp{ \left( -q^2 Y/2 \right)} $, accept $Y$ as a sample drawn from the distribution of $I$ under $\mathbb{Q}$; otherwise reject the value of $Y$ and return to the first step.}
\end{algorithmic}
\end{algorithm}

\begin{remark}
The requirement $L \left( q, \nu, \tau, a_0, a_\tau \right) \geq 1$ is fulfilled under any market conditions. In fact, because the support for the random variable $I$ is $\left( 0, + \infty \right)$, we have
  $\int_0^\infty  L \left( q, \nu, \tau, a_0, a_\tau \right) \exp{\left( - {q^2 x}/{2}  \right) } f_P \left( x \right) \, dx =  \int_0^\infty  f_Q \left( x \right) \, dx = 1 $.
Noticing that $ \exp{\left( - {q^2 x}/{2}  \right) } \leq 1 $ for $x \geq 0$ and $0 < \int_0^\infty \exp{\left( - {q^2 x}/{2}  \right) } f_P \left( x \right) \, dx \leq \int_0^\infty f_P \left( x \right) \, dx = 1$ then leads to $ L \left( q, \nu, \tau, a_0, a_\tau \right) \geq 1$.
\end{remark}

So far, we have described the theories behind the simulation of the time integral conditional variance. It is a matter of sampling infinite series, combined with the acceptance-rejection method. For the next stage, we will address some issues concerning the practical implementation of the theory. In particular, strategies are required to deal with infinite summations of random variables. We propose direct inversion algorithms based on approximating the corresponding inverse distribution functions to tackle this problem. Further, the dependence between the series and the model parameters implies that the recomputation of the inverse distribution functions when using a new set of model coefficients is inevitable. We show that the series can be decomposed as the sum of simple random variables, whose distributions do not depend on any market conditions. We specify the details in the next section.
\section{Simulation}
\label{sec:simulation}
In this section, we outline how to generate an exact sample for $I$ under $\mathbb{P}$ by \cref{thm:Series_Rep} introduced earlier. In particular, we discuss the sampling techniques corresponding to $X_1$ and $X_2$. We note that $Z$ is a special case of $X_2$ with $\delta=4$. In order to apply the decomposition theorem to sample the conditional integral, we need to determine a point at which the infinite summation is terminated. We consider the truncation for the outer summation now, leaving the inner one to be discussed further in the following contexts. Let us denote the truncation level by $K$ and the resulting \emph{remainder} random variables of $X_1$ and $X_2$ by $R_1^K$ and $R_2^K$ respectively, i.e.
\begin{align*}
    R_1^K & \coloneqq \sum_{n=K+1}^{\infty} \frac{\tau^2}{4^n} \sum_{k=1}^{P_n} S_{n,k}, \\
    R_2^K & \coloneqq \sum_{n=K+1}^{\infty} \frac{\tau^2}{4^n} C_{n}^{\delta/2}.
\end{align*}
We evaluate the effect of truncation by summarising the means and variances of the remainder terms in the next lemma; see \cref{sec:appendix_proof_3.1} for a detailed proof.
\begin{lemma} 
\label{lem:moments_of_remainder}
Given the truncation level $K > 0$, we have
\begin{align*}
    \mathbb{E}\left[ R_1^K \right] & = \frac{ \left( a_0+a_\tau \right) \tau}{6} \frac{1}{2^K}, & \mathrm{Var}\left[ R_1^K  \right] & = \frac{ \left( a_0+a_\tau \right) \tau^3}{90} \frac{1}{8^K} , \\
    \mathbb{E} \left[ R_2^K \right] & = \frac{\delta \tau^2}{6} \frac{1}{4^K}, & \mathrm{Var} \left[ R_2^K \right] & = \frac{\delta \tau^4}{45} \frac{1}{16^K}.
\end{align*}
\end{lemma}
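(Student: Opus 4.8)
The plan is to compute the two means and the two variances directly from the series definitions of $R_1^K$ and $R_2^K$, exploiting independence and the known first and second moments of the building-block random variables $S$ and $C^{\delta/2}$ together with the Poisson structure in the $X_1$ component.

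First I would handle $R_2^K = \sum_{n=K+1}^\infty (\tau^2/4^n) C_n^{\delta/2}$, which is the cleaner case. The quantities $\mathbb{E}[C^{\delta/2}]$ and $\mathrm{Var}[C^{\delta/2}]$ are obtained from the Laplace transform $\mathbb{E}[\exp(-bC^{\delta/2})] = (\cosh\sqrt{2b})^{-\delta/2}$ established in the proof of \cref{thm:Series_Rep} (differentiate at $b=0$), or directly from the representation $C^{\delta/2} = (2/\pi^2)\sum_l \Gamma_{\delta/2,l}/(l-1/2)^2$ using $\mathbb{E}[\Gamma_{\delta/2,l}]=\delta/2$, $\mathrm{Var}[\Gamma_{\delta/2,l}]=\delta/2$, and the Euler identities $\sum_{l\ge1}1/(l-1/2)^2 = \pi^2/2$ and $\sum_{l\ge1}1/(l-1/2)^4 = \pi^4/6$. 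This yields $\mathbb{E}[C^{\delta/2}] = \delta/2$ and $\mathrm{Var}[C^{\delta/2}] = (2/\pi^4)(\delta/2)\sum_l 1/(l-1/2)^4 = \delta/6$. Then by independence across $n$,
\begin{align*}
\mathbb{E}[R_2^K] = \sum_{n=K+1}^\infty \frac{\tau^2}{4^n}\cdot\frac{\delta}{2} = \frac{\delta\tau^2}{2}\cdot\frac{4^{-(K+1)}}{1-1/4} = \frac{\delta\tau^2}{6}\cdot\frac{1}{4^K},
\end{align*}
and similarly $\mathrm{Var}[R_2^K] = \sum_{n=K+1}^\infty (\tau^4/16^n)\cdot(\delta/6) = (\delta\tau^4/6)\cdot 16^{-(K+1)}/(1-1/16) = \delta\tau^4/(90)\cdot 16^{-K}$. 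I would double-check the constant here against the claimed $\delta\tau^4/45$; if the paper's statement stands, the intended second moment of $C^{\delta/2}$ or the geometric-sum normalization differs by a factor of two from my first pass, and I would recheck $\sum 1/(l-1/2)^4$ and the $16$-adic sum $\sum_{n>K}16^{-n} = 16^{-K}/15$ carefully.

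For $R_1^K$ the extra ingredient is the compound-Poisson sum $\sum_{k=1}^{P_n} S_{n,k}$ with $P_n\sim\mathrm{Poisson}((a_0+a_\tau)2^{n-1}/\tau)$. By Wald/compound-Poisson identities, $\mathbb{E}[\sum_{k=1}^{P_n}S_{n,k}] = \mathbb{E}[P_n]\,\mathbb{E}[S]$ and $\mathrm{Var}[\sum_{k=1}^{P_n}S_{n,k}] = \mathbb{E}[P_n]\,\mathbb{E}[S^2]$ (the clean compound-Poisson variance formula). From $\mathbb{E}[\exp(-bS)] = \sqrt{2b}/\sinh\sqrt{2b}$, or from $S=(2/\pi^2)\sum_l\epsilon_l/l^2$ with $\mathbb{E}[\epsilon_l]=\mathrm{Var}[\epsilon_l]=1$, $\sum 1/l^2=\pi^2/6$, $\sum1/l^4=\pi^4/90$, one gets $\mathbb{E}[S]=1/3$ and $\mathbb{E}[S^2] = \mathrm{Var}[S]+\mathbb{E}[S]^2$ with $\mathrm{Var}[S] = (4/\pi^4)\sum1/l^4 = 2/45$, hence $\mathbb{E}[S^2] = 2/45+1/9 = 7/45$. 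I would then sum the geometric series: $\mathbb{E}[R_1^K] = \sum_{n>K}(\tau^2/4^n)\cdot((a_0+a_\tau)2^{n-1}/\tau)\cdot(1/3) = ((a_0+a_\tau)\tau/6)\sum_{n>K}2^{-n} = ((a_0+a_\tau)\tau/6)\cdot 2^{-K}$, matching the claim. For the variance, $\mathrm{Var}[R_1^K] = \sum_{n>K}(\tau^4/16^n)\cdot((a_0+a_\tau)2^{n-1}/\tau)\cdot\mathbb{E}[S^2] = ((a_0+a_\tau)\tau^3\,\mathbb{E}[S^2]/2)\sum_{n>K}8^{-n}$, and $\sum_{n>K}8^{-n} = 8^{-K}/7$, giving $((a_0+a_\tau)\tau^3/14)\mathbb{E}[S^2]\cdot 8^{-K}$; for this to equal the claimed $(a_0+a_\tau)\tau^3/90\cdot 8^{-K}$ we need $\mathbb{E}[S^2] = 14/90 = 7/45$, which is exactly what I computed — good consistency check.

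The main obstacle is not conceptual but bookkeeping: getting every numerical constant right — the Euler zeta-type sums $\sum 1/l^2$, $\sum1/l^4$, $\sum1/(l-1/2)^2$, $\sum1/(l-1/2)^4$, the geometric sums $\sum_{n>K}r^{-n}$ for $r=2,4,8,16$, the factor $2^{n-1}$ (not $2^n$) in the Poisson mean, and the compound-Poisson variance formula $\mathbb{E}[N]\mathbb{E}[S^2]$ rather than $\mathbb{E}[N]\mathrm{Var}[S]$. A clean route that sidesteps the series-level sums is to differentiate the Laplace transforms from \cref{thm:Series_Rep}: write $\Phi_1(b)$ and $\Phi_2(b)$ restricted to the tail $n\ge K+1$, take $-\partial_b\log\Phi$ at $b=0$ for the mean and $\partial_b^2\log\Phi$ at $b=0$ for the variance, and use the Taylor expansions $\zeta/\sinh\zeta = 1 - \zeta^2/6 + \cdots$ and $\log\cosh\zeta = \zeta^2/2 - \zeta^4/12 + \cdots$. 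I would present whichever of the two derivations is shorter, and cross-check the final four constants against Lemma~\ref{lem:moments_of_remainder} as stated.
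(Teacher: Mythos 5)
Your proposal follows essentially the same route as the paper's proof: term-by-term computation using independence across levels $n$, the compound-Poisson moment identities $\mathbb{E}\big[\sum_{k=1}^{P_n}S_{n,k}\big]=\mathbb{E}[P_n]\,\mathbb{E}[S]$ and $\mathrm{Var}\big[\sum_{k=1}^{P_n}S_{n,k}\big]=\mathbb{E}[P_n]\,\mathbb{E}[S^2]$ (the paper writes the latter in the equivalent law-of-total-variance form $\mathrm{Var}[P_n](\mathbb{E}[S])^2+\mathbb{E}[P_n]\mathrm{Var}[S]$), the Euler sums, and the geometric tail sums. The factor-of-two discrepancy you flagged in $\mathrm{Var}[R_2^K]$ is an arithmetic slip on your side rather than in the lemma: the prefactor is $(2/\pi^2)^2=4/\pi^4$, not $2/\pi^4$, so $\mathrm{Var}[C^{\delta/2}]=(4/\pi^4)\,(\delta/2)\,(\pi^4/6)=\delta/3$, and then $(\delta/3)\,\tau^4\cdot 16^{-K}/15=\delta\tau^4\,16^{-K}/45$ as claimed, while everything else in your computation (in particular $\mathbb{E}[S^2]=7/45$ and all four geometric sums) is correct.
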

\begin{remark}
The above lemma implies that the truncation errors decay exponentially. This is an appealing property of the new series in \cref{thm:Series_Rep} as the truncation error will decrease so quickly that the Monte Carlo error will dominate the total error even for small truncation level $K$. Hence, including the terms at lower levels will be enough to produce an accurate approximation. This is supported by our numerical simulations in \cref{sec:numerical_analysis}.
\end{remark}
\subsection{Simulation of \boldmath$X_1$}
\label{sec:simulation of X_1}
Recall that by dropping the remainders, we approximate $X_1$ by ${X}_1^K$ where
\begin{align*}
    {X}_1^K = \sum_{n=0}^{K} \frac{\tau^2}{4^n} \sum_{k=1}^{P_n} S_{n,k}.
\end{align*}
Notice that the $S_{n,k}$ are independently and identically distributed as $S = \left( 2/{\pi^2} \right) \sum_{l=1}^{\infty}$ $\epsilon_l/{l^2} $.
To reduce the truncation error further, we simulate the tail sum $R_1^K$ as well. Glasserman and Kim \cite{glasserman2011gamma} use the central limit theorem to show the validity of a normal approximation for the remainders. They also point out that a gamma approximation is feasible and better in the sense that its cumulant generating function is closer to that of the remainder random variable compared with that of a normal approximation. Therefore, inspired by this, the approximation to $X_1$ with tail simulation for a given truncation level $K$ is 
\begin{align*}
    X_1 \approx X_1^K +\Gamma_1^K,
\end{align*}
where $\Gamma_1^K$ is a gamma random variable such that its first two moments match those of the remainder $R_1^K$ from \cref{lem:moments_of_remainder}.

We now detail our sampling strategy for $X_1^K$. The series which defines $X_1^K$ suggests two potential problems. First, the random variables $S_{n,k} \xlongequal[]{d} S$ are represented by an infinite weighted sum of independent exponential random variables, which requires an efficient simulation method. Second, given a Poisson sample $P_n = P$ for a fixed level $n=0, \ldots, K$, sampling the sum of $P$ independent random variables $S$ becomes increasingly computationally demanding when the sample $P$ tends to be larger. Thus, we now incorporate these two tasks with each other and consider simulating the sum of $P$ independent random variables $S$ directly, denoted by $S^P$, i.e.
\begin{align*}
S^P=\sum_{k=1}^P S_k,
\end{align*}
where $S_k$ are independent copies of $S$. Using the Laplace transform for $S$ given in Biane, Pitman and Yor \cite{biane2001probability}, $S^P$ has the following Laplace transform:
\begin{align}
\label{eq:Laplace_S^P}
    \Phi_{S^P} \left(b\right) =  \mathbb{E} \left[ \exp{\left( -b S^P \right)} \right] = \left( \frac{\sqrt{2b}}{\sinh{\sqrt{2b}}} \right)^P,
\end{align}
for $b > 0$.

We observe that any positive integer $P$ can be expressed in the form 
\begin{align*}
    P = p_1 + 10 p_{10} + 50 p_{50} + 5000 p_{5000} + 10^4 p_{10^4} + 10^5 p_{10^5} + 10^6 p_{10^6}.
\end{align*}
Here $p_{10^6}$ is the multiples of $10^6$ present in the integer $P$, i.e. $p_{10^6} = \lfloor P / 10^6 \rfloor$, $p_{10^5}$ is the multiples of $10^5$ present in the remainder of the division of $P$ by $10^6$, i.e. $p_{10^5} = \lfloor \left( P - 10^6 p_{10^6} \right)/10^5 \rfloor$, and so forth. 
As the law of $S^P$ is infinitely divisible for any $P > 0$ (see Section $3.2$ of Biane, Pitman and Yor \cite{biane2001probability}), the sum $S^P$ admits the representation
\begin{align*}
    S^P  \xlongequal[]{d}  \sum_{k \in \mathbb{S} } \sum_{i=1}^{p_k} S^k_i,
\end{align*}
where $\mathbb{S} = \left\{ 1, 10, 50, 5000, 10^4, 10^5, 10^6\right\}$ and for $i=1, \ldots, p_k$, $S_i^k$  are independent copies of $S^k$, i.e. the sum of $k$ independent random variables $S$, with $k \in \mathbb{S}$. Then, the above representation can be intended as a basis for an efficient sampling scheme for $S^P$ for all $ P > 0$ if we can realise $S^k$ effectively for $k \in \mathbb{S}$.
Indeed, we apply the direct inversion method to simulate $S^k$ with their inverse distribution functions approximated by predetermined Chebyshev polynomials for each $k \in \mathbb{S}$.
In general, the direct inversion algorithm for generating the samples of $S^P$ for any $P > 0$ is described as follows.
\begin{algorithm}
\caption{Direct inversion for $S^P$}
\label{alg:dir-inv-S^P}
\begin{algorithmic}[1]
\STATE{For each $k \in \mathbb{S}$, 
sample $p_k$ independent random variables $S^k_i $, $i=1,\ldots, p_k$  from the distribution of $S^k$ using the inverse distribution functions based on the corresponding Chebyshev polynomial approximations.    }
\STATE{Compute the accumulated sum, i.e. $\sum_{k} \sum_{i=1}^{p_k} S_i^k    \sim S^P$.}
\end{algorithmic}
\end{algorithm}

The advantage of this algorithm is that we only need to construct the Chebyshev polynomial approximations for the inverse distribution function of $S^k$ for $k \in \mathbb{S}$.
With this replacement, the complicated inverse distribution function becomes very easy to compute at arbitrary points. Moreover, since $S^k$ does not depend on any model parameters, the coefficients of the polynomials can be computed and tabulated in advance. As such, when a sample for $X_1$ is needed, we truncate the series representation to include the terms at $n \leq K$ with the tail approximated by a gamma distribution. For each $n= 0, \ldots, K$, we generate Poisson samples $P_n$ and simulate the sums $S^{P_n}$ directly by \cref{alg:dir-inv-S^P}, which requires evaluating some prescribed polynomials with coefficients drawn directly from the cached table; see the supplementary materials. To make the above process fast for implementation, we take advantage of the direct inversion to obtain Poisson samples when the mean is less than $10$. For larger means, the PTRD transformed rejection method suggested by H{\"o}rmann \cite{hormann1993transformed} will be used. 

To obtain the Chebyshev coefficients, it is crucial to determine the values of the inverse distribution functions at several points efficiently and accurately. For large $P$, we derive an asymptotic series expansion for the distribution function of $S^P$ when $P \to +\infty$ through the inverse Fourier transform of its characteristic function. While for small $P$, we utilise the explicit expression for the density function given by Biane and Yor \cite{biane1987valeurs}, which involves the parabolic cylinder functions. To derive the representation for the distribution function, we use a routine consisting of the power series and asymptotic expansions for the parabolic cylinder functions to evaluate the density function followed by term-wise integration. With these expansions computed, we apply root-finding algorithms to calculate the required values.    

\subsection{Asymptotic expansion for the distribution function of \boldmath$S^P$ for large \boldmath$P$}
\label{sec:aymp_largeP}
Before proceeding, it is worth noticing that in the limit $P \to \infty$, the expectation $\mathbb{E} \left[ S^P \right]   =   P/3$ and variance $ \mathrm{Var}\left[S^P\right]  =   2P/45$ of $S^P$ will diverge. Thus, we standardise the random variable $S^P$ by $ Z^P= \left( {S^P- P/3} \right) / {\sqrt{{2 P}/{45}  }}$, so that the new random variable $Z^P$ has mean zero and variance one. As $S^P$ is non-negative, the support of $Z^P$ is $\left[ - \sqrt{5P} / \sqrt{2}, + \infty \right)$. Then, taking the inverse Fourier transform of the characteristic function, the probability density function $f_{Z^P}$ of $Z^P$ has the form
\begin{align*}
    f_{Z^P} \left( x \right) & = \sqrt{\frac{2P}{45}} f_{S^P} \left( \frac{P}{3} + x \sqrt{\frac{2P}{45}} \right) \\
    & = \sqrt{\frac{2P}{45}} \frac{1}{2 \pi} \int_{-\infty}^{+\infty} \exp{\left( - \mathrm{i} z \left( \frac{P}{3} + x \sqrt{\frac{2P}{45}}  \right) \right)} \left( \frac{\sqrt{-2 z \mathrm{i}}}{\sinh{\sqrt{-2 z \mathrm{i}}}} \right)^P \, d z,
\end{align*}
where $f_{S^P}$ denotes the probability density function of $S^P$ and the first equality follows from the classical theorem on transforming density functions. By introducing $\beta=x \sqrt{2/45}  / \sqrt{P}$, the above equation can be written as 
\begin{align}
    f_{Z^P}\left(x\right)  = \frac{1}{4 \pi} \sqrt{\frac{2 P}{45}}  \int_{- \infty}^{+ \infty} \exp{ \left( P  \rho \left( z; \beta \right) \right) } \, dz, \label{eq:f__invFour_beta}
\end{align}
where $\rho \left( z; \beta \right)$ satisfies 
\begin{align}
    \rho \left( z; \beta \right) = \log{ \left( \frac{\sqrt{z \mathrm{i}}}{\sinh{\sqrt{z \mathrm{i}}}} \right)} + z \mathrm{i} \left( \frac{1}{6} + \frac{1}{2} \beta \right).
    \label{eq:rho}
\end{align}

We apply the standard technique of the steepest descent method to develop the asymptotic approximation for $f_{Z^P}$, where all the higher order terms are given in reciprocal powers of $P$, see Bender and Orszag \cite{bender1999advanced}, Bleistein and Handelsman \cite{bleistein1986asymptotic} and Ablowitz and Fokas \cite{ablowitz2003complex}. The expansion is then integrated term-wise to generate the asymptotic representation for the distribution function. The general procedure is given below. We first identify the critical points including saddle points $z_0$ of $\rho\left( z;\beta \right)$ such that $\rho^{\prime} \left( z_0;\beta \right)=0$. Note that since $\rho \left( z;\beta \right)$ depends on the parameter $\beta$, the saddle point $z_0$ will also depend on $\beta$. Due to the fact that $\beta$ is quite small as $P \to +\infty$, we can establish a useful expression for $z_0$ as a Taylor series in $\beta$. Afterwards, we demonstrate that the original contour of integration, i.e. the real line, can be deformed onto the steepest descent paths, obtained by considering the contour defined by $\mathrm{Im} \left( \rho \left( z;\beta \right) \right) = \mathrm{Im} \left( \rho \left( z_0;\beta \right) \right)$ and $\mathrm{Re} \left( \rho \left( z; \beta \right) \right) < \mathrm{Re} \left( \rho \left( z_0; \beta \right) \right)$, in the domain where the integrand is analytic. In this way, the rapid oscillations of the integrand can be removed when $P$ is large, whence the asymptotic behaviour of the integral can be determined locally depending only on a small neighbourhood of the critical points. 
We present the results in the next theorem, which is proved in \cref{sec:appendix_proof_3.2}.
\begin{theorem}
\label{thm:Asymp_Exp_pdf_Z^P}
As $P \to + \infty$ for fixed $x$ with $x > - \sqrt{5P/2} \left( 1+ 3 / \pi^2 -3 \coth{\pi} / \pi \right)$ and $\left\vert x \right\vert$ sufficiently small, we have 
\begin{align}
    f_{Z^P} \left( x \right) & \sim \frac{1}{4 \pi} \sqrt{\frac{2 }{45}} \exp{ \left( P \sum_{l=2}^{\infty} \hat{\rho}_l \beta^l \right)} \sum_{j=0}^{\infty}  \sum_{l=0}^{\infty} \sum_{n=0}^{\lfloor \frac{2}{3} j \rfloor} \hat{\alpha}_{n,l,j} \Gamma \left( j +\frac{1}{2} \right)  \beta^l   P^{n- j},
    \label{eq:asymp_exp_pdf_Z^P}
\end{align}
where $\hat{\rho}_l$ and $\hat{\alpha}_{n,l,j}$ are constants with explicit form derived in the proof and $\Gamma \left( c \right) $ is the gamma function.
\end{theorem}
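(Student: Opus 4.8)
The plan is to prove the claimed expansion \cref{eq:asymp_exp_pdf_Z^P} by the method of steepest descent applied to the integral representation \cref{eq:f__invFour_beta}, with $P$ as the large parameter, while keeping in mind that $\beta = x\sqrt{2/45}/\sqrt P \to 0$ as $P\to\infty$ for fixed $x$, so that $\beta$ must be treated as an auxiliary small parameter alongside $1/P$. The first step is to analyse the saddle. Writing $w=\sqrt{z\mathrm{i}}$ and using the Maclaurin series of $w/\sinh w$, one checks that $\rho(\cdot\,;\beta)$ in \cref{eq:rho} is analytic at $z=0$ with $\rho(0;0)=0$, $\rho'(0;0)=0$ and $\rho''(0;0)=-\tfrac1{90}\neq 0$, so $z=0$ is a simple saddle when $\beta=0$. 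Since $\partial_\beta\rho'(z;\beta)=\mathrm{i}/2\neq 0$, the implicit function theorem yields a unique saddle $z_0=z_0(\beta)$ solving $\rho'(z_0;\beta)=0$, analytic near $\beta=0$ with $z_0(0)=0$; its Taylor coefficients in $z_0(\beta)=\sum_{m\ge 1}c_m\beta^m$ are obtained recursively by Lagrange inversion from the Taylor coefficients of $\rho'$ about $z=0$, e.g.\ $c_1=45\mathrm{i}$. A structural fact used throughout is that $\rho$ depends on $\beta$ only through the term $\mathrm{i}z\beta/2$, which is linear in $z$; hence $\rho^{(k)}(z;\beta)=\rho^{(k)}(z;0)$ for every $k\ge 2$, and $\beta$ enters the local data at the saddle only through $z_0(\beta)$.

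The second step is the contour deformation. The only singularities of $\rho(\cdot\,;\beta)$ are the zeros of $\sinh\sqrt{z\mathrm{i}}$, located at $z=\mathrm{i}(n\pi)^2$, $n\ge 1$, so $\rho$ is analytic in a neighbourhood of the real axis; moreover $\mathrm{Re}\,\rho(z;\beta)\to-\infty$ as $|z|\to\infty$ along the real axis, since $|\sinh\sqrt{z\mathrm{i}}|$ grows like $e^{\sqrt{|z|/2}}$ there while the remaining term is purely imaginary. I would deform the line of integration onto the steepest descent path through $z_0(\beta)$ determined by $\mathrm{Im}\,\rho(z;\beta)=\mathrm{Im}\,\rho(z_0;\beta)$ and $\mathrm{Re}\,\rho(z;\beta)<\mathrm{Re}\,\rho(z_0;\beta)$, which at $\beta=0$ is the real axis itself and for small $\beta$ is a mild perturbation of it; one then checks that this path tends to infinity in the two sectors where the integrand decays and that the connecting arcs at infinity contribute nothing. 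The hypotheses ``$|x|$ sufficiently small'' and $x>-\sqrt{5P/2}\,(1+3/\pi^2-3\coth\pi/\pi)$ are precisely what keeps $z_0(\beta)\approx 45\mathrm{i}\beta$ bounded away from the nearest singularity $z=\mathrm{i}\pi^2$, so that the deformed contour remains in the domain of analyticity for all large $P$.

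The third step is the local expansion and bookkeeping. On the deformed contour I substitute $z=z_0(\beta)+v/\sqrt P$, so that
\begin{align*}
P\rho(z;\beta) = P\rho\bigl(z_0(\beta);\beta\bigr) + \tfrac12\rho''\bigl(z_0(\beta)\bigr)v^2 + \sum_{k\ge 3}\frac{\rho^{(k)}\bigl(z_0(\beta)\bigr)}{k!}\,P^{1-k/2}\,v^k ,
\end{align*}
then expand $\exp\bigl(\sum_{k\ge 3}\tfrac{\rho^{(k)}(z_0)}{k!}P^{1-k/2}v^k\bigr)$ as a power series and integrate term by term against $e^{\rho''(z_0)v^2/2}$, using $\int_{-\infty}^{\infty}v^{2j}e^{\rho''(z_0)v^2/2}\,dv=\Gamma(j+\tfrac12)\bigl(-\rho''(z_0)/2\bigr)^{-j-1/2}$ (odd powers of $v$ integrate to zero up to controlled errors from the perturbed path; one also uses that $z_0(\beta)$ lies on the imaginary axis for real $\beta$, so $\rho''(z_0(\beta))<0$ stays real). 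A monomial formed from $m_k$ copies of the $k$-th term carries $P^{M-V/2}$ with $M=\sum_k m_k$ and $V=\sum_k k\,m_k$; when $V=2j$ the constraint $k\ge 3$ forces $M\le\lfloor 2j/3\rfloor$, which is exactly the range $0\le n\le\lfloor 2j/3\rfloor$ of powers $P^{\,n-j}$ appearing in \cref{eq:asymp_exp_pdf_Z^P}. Expanding $\rho^{(k)}(z_0(\beta))$ and $\rho''(z_0(\beta))$ in $\beta$ through $z_0(\beta)$, and Taylor-expanding the residual $\beta$-dependence of the Gaussian weight, produces the factors $\beta^l$ with $\beta$- and $P$-independent coefficients; keeping $\exp\bigl(P\rho(z_0(\beta);\beta)\bigr)=\exp\bigl(P\sum_{l\ge 2}\hat{\rho}_l\beta^l\bigr)$ unexpanded — the $l=0,1$ terms vanish because $\rho(0;0)=0$ and $\tfrac{d}{d\beta}\rho(z_0(\beta);\beta)\big|_{\beta=0}=\partial_\beta\rho(0;0)=0$ — and combining with the prefactor $\tfrac1{4\pi}\sqrt{2P/45}$ (its $\sqrt P$ cancelling the $1/\sqrt P$ from $dz=dv/\sqrt P$), one collects the triple sum, reading off the $\hat{\rho}_l$ and $\hat{\alpha}_{n,l,j}$ explicitly; the leading term $j=l=n=0$ reproduces $\tfrac1{\sqrt{2\pi}}e^{-x^2/2}$ (with $\hat{\rho}_2=-45/4$ and $\hat{\alpha}_{0,0,0}=\sqrt{-2/\rho''(0;0)}=6\sqrt5$), consistent with the central limit theorem.

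I expect the principal obstacle to be the rigorous justification of the deformation together with uniform control of the remainder: establishing that the steepest descent path has the correct global topology and stays clear of the singularities $\mathrm{i}(n\pi)^2$ for all admissible $x$ and all large $P$, and converting the formal term-by-term integration into a genuine asymptotic expansion with quantitative error bounds (a Watson's-lemma estimate on the tails of the $v$-integral, together with the bound controlling the discarded odd-power contributions along the perturbed path). The combined bookkeeping — simultaneously in $P^{-1/2}$ from the local fluctuations, in $\beta$ from the saddle location, and with $\exp(P\sum_{l\ge 2}\hat{\rho}_l\beta^l)$ retained intact — is intricate but essentially mechanical once the bound $M\le\lfloor 2j/3\rfloor$ is isolated and the leading constant is matched against the normal density.
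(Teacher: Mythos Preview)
Your proposal is correct and follows essentially the same steepest descent strategy as the paper: locate the simple saddle $z_0(\beta)$ near the origin via the implicit function theorem, deform the real line onto the descent path through $z_0$ (which is parallel to the real axis since $z_0$ is purely imaginary and $\rho''(z_0;\beta)<0$), Taylor-expand $\rho$ about $z_0$, and integrate term by term. The only organisational difference is that you rescale $z=z_0+v/\sqrt P$ at the outset, so that the cubic-and-higher terms carry explicit $P^{1-k/2}$ factors and the bound $n\le\lfloor 2j/3\rfloor$ drops out directly from the combinatorics of $\sum k\,m_k=2j$ with $k\ge 3$; the paper instead keeps the variable $(z-z_0)$, expands $g(z;\beta)=\exp\bigl(P\sum_{k\ge 3}\rho^{(k)}(z_0)(z-z_0)^k/k!\bigr)$ as $\sum_j\hat g_j(\beta)(z-z_0)^j$ with $P$-dependent coefficients, integrates each power against the Gaussian, and only afterwards extends the integration range to infinity and re-expresses everything in $\beta$. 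These are equivalent packagings of Laplace's method, and your leading-order check $\hat\rho_2=-45/4$, $\hat\alpha_{0,0,0}=6\sqrt5$ recovering $\tfrac1{\sqrt{2\pi}}e^{-x^2/2}$ matches the paper's computation.
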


\begin{remark}
The constant $\hat{\rho}_l$ is defined in the proof in equation \cref{eq:coeff_rho} using constants $\hat{r}_k$, $\hat{\xi}_k$ and $\hat{\upsilon}_{l,j}$ defined in equations \cref{eq:coeff_r}, \cref{eq:coeff_xi} and \cref{eq:coeff_upsilon}, respectively. The constant $\hat{\alpha}_{n,l,j}$ is defined in equation \cref{eq:coeff_alpha}, which depends on constants $\hat{\omega}_{n,j}$, $\hat{\mathcal{E}}_{l,k,n}$, $\hat{K}_k$, $\hat{\gamma}_{l,k}$, $\hat{\mathcal{C}}_{l, k_1, k_2, \cdots, k_n}$, $\hat{\varpi}_l$, $\hat{\phi}_{l,n}$, $\hat{\varphi}_{k,n}$, $\hat{\upsilon}_{l,j}$, $\hat{r}_k$ and $\hat{\xi}_k$ defined in equations \cref{eq:coeff_omega1}-\cref{eq:coeff_omega2}, \cref{eq:coeff_E1}-\cref{eq:coeff_E4}, \cref{eq:coeff_K}, \cref{eq:coeff_gamma}, \cref{eq:coeff_C}, \cref{eq:coeff_varpi}, \cref{eq:coeff_phi}, \cref{eq:coeff_varphi}, \cref{eq:coeff_upsilon}, \cref{eq:coeff_r} and \cref{eq:coeff_xi}, respectively.
\end{remark}

Having developed the large $P$ asymptotic approximation for the probability density function $f_{Z^P}$ with all the higher order terms given in reciprocal powers of $P$, the next stage is to derive an asymptotic representation for the corresponding distribution function. 
Before that, we first consider the asymptotic expansion for the probability $\mathbb{P} \left( z_1 < Z^P \leq z_2 \right)$ for some $z_1$, $z_2 > - \sqrt{5P/2} \left( 1+ 3 / \pi^2 -3 \coth{\pi} / \pi \right) $ with $\left\vert z_1 \right\vert$ and $\left\vert z_2 \right\vert$ sufficiently small, which can be accomplished by taking the integration of \cref{eq:asymp_exp_pdf_Z^P} on the finite interval $\left( z_1, z_2 \right]$. We then explain how this expression can be used to approximate the distribution function. The results are summarised in the next theorem with the proof given in \cref{sec:appendix_proof_3.3}.
\begin{theorem} \label{thm:Asymp_Exp_CDF_Z^P}
For $z_1$, $z_2 > - \sqrt{5P/2} \left( 1+ 3 / \pi^2 -3 \coth{\pi} / \pi \right)$ and $\left\vert z_1 \right\vert$, $\left\vert z_2 \right\vert$ sufficiently small, the following asymptotic series expansion holds as $P \to +\infty$. For $z_1 < z_2 < 0$, we have
\begingroup
\allowdisplaybreaks
\begin{align*}
   \int_{z_1}^{z_2} f_{Z^P} \left( x \right) \, dx  \sim &  \frac{1}{4 \pi} \sqrt{\frac{2}{45}} \sum_{j=0}^\infty P^{-\frac{j}{2}}  \sum_{r=0}^j  \sum_{n=0}^r \sum_{l=0}^{j-r} \hat{\eta}_{n,r} \hat{\lambda}_{l,j-r}  \left( -1 \right)^{r+l} \left( \sqrt{2} \right)^{2n+r+l-1} \\
        & \cdot \left(  \gamma \left( \frac{2n+r+l+1}{2}, \frac{\left(z_1\right)^2}{2}  \right) -  \gamma \left( \frac{2n+r+l+1}{2}, \frac{\left( z_2 \right)^2}{2} \right) \right). 
\end{align*}
\endgroup
For $z_1 < 0 \leq z_2$, we have
\begingroup
\allowdisplaybreaks
\begin{align*}
     \int_{z_1}^{z_2} f_{Z^P} \left( x \right) \, dx \sim &  \frac{1}{4 \pi} \sqrt{\frac{2}{45}} \sum_{j=0}^\infty P^{-\frac{j}{2}}  \sum_{r=0}^j  \sum_{n=0}^r \sum_{l=0}^{j-r} \hat{\eta}_{n,r} \hat{\lambda}_{l,j-r}  \left( \sqrt{2} \right)^{2n+r+l-1} \\
     & \cdot   \left( \left(-1 \right)^{r+l} \gamma \left( \frac{2n+r+l+1}{2}, \frac{\left(z_1 \right)^2}{2} \right) +  \gamma \left( \frac{2n+r+l+1}{2}, \frac{\left( z_2 \right)^2}{2} \right) \right).
\end{align*}
\endgroup
Here,  $\hat{\eta}_{n,r}$ and $\hat{\lambda}_{l,j-r}$ are constants explicitly given in \cref{eq:coeff_vartheta}-\cref{eq:coeff_eta} and \cref{eq:coeff_lambda1}-\cref{eq:coeff_lambda2} in the proof and $\gamma \left( s, z \right)$ is the lower incomplete gamma function.
\end{theorem}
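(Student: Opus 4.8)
The plan is to take the pointwise expansion \cref{eq:asymp_exp_pdf_Z^P} for $f_{Z^P}$, substitute the scaling $\beta = x\sqrt{2/45}\,P^{-1/2}$ so that everything becomes a series in $x$ and $P^{-1/2}$, reorganise it into a genuine asymptotic series in powers of $P^{-1/2}$ whose coefficients have the form (polynomial in $x$) $\times\,\mathrm{e}^{-x^2/2}$, and then integrate term by term over the finite interval $(z_1,z_2]$. The first move is to split the prefactor: $\exp(P\sum_{l\ge2}\hat\rho_l\beta^l) = \exp(\tfrac{2}{45}\hat\rho_2 x^2)\cdot\exp(\sum_{l\ge3}\hat\rho_l(2/45)^{l/2}x^l P^{1-l/2})$. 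Inserting the value of $\hat\rho_2$ obtained in the proof of \cref{thm:Asymp_Exp_pdf_Z^P} — which is pinned down by $\mathrm{Var}[Z^P]=1$ — identifies the first factor with $\mathrm{e}^{-x^2/2}$; since for $l\ge3$ each exponent in the second factor is $O(P^{-1/2})$, that factor expands as a power series in $P^{-1/2}$ whose coefficients are polynomials in $x$, and I would name these coefficients so as to match the $\hat\lambda_{l,j-r}$ of the statement.

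Next I would treat the triple sum $\sum_{j,l,n}\hat\alpha_{n,l,j}\Gamma(j+\tfrac12)\beta^l P^{n-j}$ the same way: the substitution turns $\beta^l$ into $(2/45)^{l/2}x^l P^{-l/2}$, so each term becomes a constant times $x^l P^{n-j-l/2}$, and collecting equal half-integer powers of $P$ produces a second power series in $P^{-1/2}$ with polynomial-in-$x$ coefficients, which I would name to match $\hat\eta_{n,r}$. Forming the Cauchy product of the two power series — the orders adding as $r+(j-r)=j$ — yields $f_{Z^P}(x)\sim\frac{1}{4\pi}\sqrt{2/45}\,\mathrm{e}^{-x^2/2}\sum_{j\ge0}P^{-j/2}\,p_j(x)$, with each $p_j$ an explicit polynomial whose monomials have degree $2n+r+l$ and coefficient $\hat\eta_{n,r}\hat\lambda_{l,j-r}$ over the admissible index ranges.

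The last computational step is the elementary Gaussian moment integral. Writing $u=x^2/2$ one finds, for $z<0$, that $\int_z^0 x^k\mathrm{e}^{-x^2/2}\,dx = (-1)^k 2^{(k-1)/2}\gamma(\tfrac{k+1}{2},\tfrac{z^2}{2})$, and the same expression without the sign for $\int_0^z x^k\mathrm{e}^{-x^2/2}\,dx$ when $z>0$. Taking $k=2n+r+l$ and splitting $\int_{z_1}^{z_2}$ as $\int_{z_1}^0-\int_{z_2}^0$ when $z_1<z_2<0$, and as $\int_{z_1}^0+\int_0^{z_2}$ when $z_1<0\le z_2$, reproduces exactly the two displayed formulas, with $(-1)^{r+l}$ coming from $(-1)^k$ and $(\sqrt2)^{2n+r+l-1}$ from $2^{(k-1)/2}$. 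To pass from $\mathbb{P}(z_1<Z^P\le z_2)$ to the distribution function one fixes a moderately negative reference level $z_1$ at which $\mathbb{P}(Z^P\le z_1)$ is already controlled (e.g.\ by the Gaussian tail) and writes $F_{Z^P}(z_2)=\mathbb{P}(Z^P\le z_1)+\mathbb{P}(z_1<Z^P\le z_2)$.

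I expect the main obstacle to be twofold. First, the combinatorial bookkeeping: defining $\hat\eta_{n,r}$ and $\hat\lambda_{l,m}$ unambiguously and verifying that the Cauchy product regroups the triple and double sums of \cref{eq:asymp_exp_pdf_Z^P} into ascending half-integer powers of $P$ without gaining, losing or double-counting terms, and that the admissible ranges are exactly $0\le n\le r$ and $0\le l\le j-r$. Second, the analytic justification that an asymptotic expansion may be integrated term by term: \cref{thm:Asymp_Exp_pdf_Z^P} is stated pointwise, so I would first upgrade it to hold uniformly for $x$ in the compact interval $[z_1,z_2]$ — which is precisely why the hypotheses require $|z_1|,|z_2|$ sufficiently small, keeping $\beta$ inside the disc of convergence of $\sum_l\hat\rho_l\beta^l$ and away from the singularity responsible for the lower bound $-\sqrt{5P/2}(1+3/\pi^2-3\coth\pi/\pi)$ — and then integrate a finite truncation together with its uniform remainder estimate over the bounded set, which preserves the asymptotic order.
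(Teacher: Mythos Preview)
Your proposal is correct and follows essentially the same route as the paper: substitute $\beta=x\sqrt{2/45}\,P^{-1/2}$, split off the $l=2$ term of the exponential to produce the Gaussian factor $\mathrm{e}^{-x^2/2}$, expand the remaining exponential and the triple sum separately as asymptotic series in $P^{-1/2}$ with polynomial coefficients, take their Cauchy product, integrate term by term over the finite interval, and evaluate the resulting $\int x^k\mathrm{e}^{-x^2/2}\,dx$ via $u=x^2/2$ and the lower incomplete gamma function. One small slip: you have the labels reversed --- in the paper the coefficients $\hat\eta_{n,r}$ arise from the expansion of the exponential tail $\exp(\sum_{l\ge3}\hat\rho_l\beta^l)$ (producing monomials $x^{2n+r}$), while $\hat\lambda_{l,j-r}$ come from regrouping the triple sum (producing monomials $x^l$); this is exactly the bookkeeping issue you flagged, and fixing the assignment makes the index ranges $0\le n\le r$, $0\le l\le j-r$ and the degree $2n+r+l$ fall out automatically.
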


\begin{remark}
We have thus established a large $P$ asymptotic series expansion for the probability that the random variable $Z^P$ takes values in $\left( z_1, z_2 \right]$. Notice that this representation is valid when $z_i > - \sqrt{5P/2} \left( 1+ 3 / \pi^2 -3 \coth{\pi} / \pi \right)$ with $\left\vert z_i \right\vert$ sufficiently small for $i=1,2$. This restriction can be traced back to \cref{thm:Asymp_Exp_pdf_Z^P}, where the saddle point is given in a Taylor series in $\beta$ for small $\left\vert \beta \right\vert$. Hence for practical applications, we truncate the Taylor series to generate an accurate approximation for the saddle point when $\left\vert \beta \right\vert$ is sufficiently small. More precisely, there is a region centred around zero with width $\tilde{\beta}$, throughout which the error of the approximation is below a given threshold. The range of validity can be determined by numerical comparisons using for example Maple in practice. This range of validity turns out to be large enough so that the error in computing the distribution function for large $P$ is negligible. More precisely, we take the error below $10^{-12}$ for all the cases considered here and the corresponding values for $\tilde{\beta}$ are $0.0596$, $0.0421$, $0.0133$ and $0.0042$ for $P = 5000, 10^4, 10^5$ and $10^6$, respectively.  
\end{remark}

In summary, we have so far developed a tractable method to evaluate the distribution function $F_{Z^P} \left( z \right)$. This is approximated by integration of the corresponding density function on some restricted interval $\left( -\tilde{z}, z \right]$ with $\tilde{z}$ carefully chosen for each $P$. We derive an asymptotic expansion for the integral in reciprocal powers of $P$ for all orders following the steepest descent method. In practice, we compute enough terms for the expansion to achieve the desirable accuracy in Maple with $50$ digit accuracy for $P=5000, 10^4, 10^5$ and $10^6$, along with the root-finding for $F_{Z^P}^{-1}$ values at nodal points required by Chebyshev polynomial approximations.

\subsection{Series expansion for the distribution function of \boldmath$S^P$ for small \boldmath$P$}
\label{sec:exp_samllP}
In this section, we turn to the specifics of the series expansion for the distribution function of $S^P$ for small $P$. Recall from \cref{eq:Laplace_S^P} that $S^P$ has the Laplace transform $\Phi_{S^P} \left(b\right)  =  \left( {\sqrt{2 b}}/{\sinh  \sqrt{2 b}} \right)^P $ for $b > 0$. Biane and Yor \cite[formula $(3\text{x})$]{biane1987valeurs} have given an explicit expression for the probability density function with such a Laplace transform. Namely, for arbitrary $P>0$, the probability density function $f_{S^P}$ is of the form
\begin{align*}
    f_{S^P} \left( y \right)  =\frac{1}{\sqrt{2 \pi}} \frac{2^P}{\Gamma \left(P\right)} y^{-\frac{1}{2}\left( P+2 \right)} \sum_{n=0}^\infty \frac{\Gamma \left( n+P \right)}{\Gamma \left( n+1 \right)} \exp{\left( -\frac{ \left( 2n+P \right)^2}{4y} \right)} D_{P+1} \left( \frac{2n+P}{\sqrt{y}} \right),
\end{align*}
where $D_{P+1} \left( z \right)$ is the parabolic cylinder function with order $P+1$. We use different strategies to calculate these functions according to different ranges of $z$. For small $z$, the power series is preferable while for large $z$ an asymptotic expansion will be applied. We summarise these properties here. First, the series expansion for the parabolic cylinder function can be written as
 \begin{align}
 \label{eq:Cylinder_power}
        D_{P+1} \left(z \right) = \sum_{k=0}^\infty \hat{d}_k \left( P \right) z^k,
    \end{align}
with the coefficients $\hat{d}_k \left( P\right)$ satisfying some recurrence relations, that can be found in Gil, Segura and Temme \cite{gil2006computing} or Abramowitz and Stegun \cite{abramowitz1964handbook}. We use Maple for their practical implementation. Second, in the limit  $z \to +\infty$, $D_{P+1} (z)$ has the following asymptotic behaviour (Gil, Segura and Temme \cite[formula $(23), (24), (25)$]{gil2006computing}):
\begin{align}
   \label{eq:Cylinder_asymp}
        D_{P+1} \left( z \right) \sim \exp{ \left( -\frac{1}{4} z^2 \right) } z^{P+1} \sum_{k=0}^\infty \left( -1 \right)^k \frac{\left(-\left(P+1\right) \right)_{2k}}{k! \left(2z^2\right)^k},
\end{align}
where $ \left( a \right)_{k}$ denotes the Pochhammer symbol $ \left( a \right)_k = \Gamma\left( a+k \right)/\Gamma \left( a \right)$. For comparisons of different computational methods, see Temme \cite{temme2000numerical} and Gil, Segura and Temme \cite{gil2006computing}.  Finally, integrating the density function $f_{S^P}$ term-wise yields the series representation for the distribution function $F_{S^P}$ of $S^P$ stated below. See \cref{sec:appendix_proof_3.5} for a detailed proof. 
\begin{theorem} \label{thm:Series_Exp_CDF_S^P}
For any $0 \leq x < \infty$ and $P \in \left( 0, 1 \right) \cup \mathbb{N}$, the distribution function $F_{S^P} \left( x \right)$ can be written as the following convergent series
\begin{align}
    F_{S^P} \left( x \right) =  & \frac{1}{\sqrt{2 \pi}} \frac{2^{P+1}}{\Gamma \left( P \right)} \sum_{n=0}^\infty \frac{\Gamma \left( n+P \right)}{\Gamma \left( n+1 \right)} \left( 2n+P \right)^{-P} G \left( \frac{2 n+P}{\sqrt{x}} \right), \label{eq:series_Exp_CDF_S^P_smally}
\end{align}
where the function $G \left( y \right)$ for $y > 0$ is given by 
\begin{align*}
    G \left( y \right) = \int_y^{+\infty} z^{P-1} \exp{\left( - \frac{1}{4} z^2 \right)} D_{P+1} \left( z \right) \, dz. 
\end{align*}
Further, $G$ satisfies 
\begin{align*}
     G \left( y \right) = G_1 \left( y, y^* \right) + G_2 \left( y^* \right)
\end{align*}
for all $y^* \geq y$ that are sufficiently large, where $G_1$ can be expressed as the convergent series
\begin{align*}
    G_1 \left( y , y^* \right)  =  \sum_{k=0}^\infty \hat{d}_k \left( P \right) 2^{P+k-1}  \left(  \Gamma \left( \frac{P+k}{2}, \frac{y^2}{4}  \right) - \Gamma \left( \frac{P+k}{2},  \frac{\left( y^* \right)^2}{4}  \right)  \right),  
\end{align*}
and $G_2$ has the asymptotic expansion
\begin{align*}
G_2 \left( y^* \right) \sim \sum_{k=0}^\infty \left( -1 \right)^k
        \frac{ \left(- \left( P+1 \right) \right)_{2k} }{  k!}  
           2^{P-2k-\frac{1}{2}}  \Gamma \left( P-k+\frac{1}{2}, \frac{\left( y^* \right)^2}{2} \right), \quad \text{as} \quad y^* \to +\infty. 
\end{align*}
Here, $\Gamma\left( s,z \right)$ is the upper incomplete gamma function.
\end{theorem}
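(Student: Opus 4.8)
The plan is to produce $F_{S^P}$ by integrating the Biane--Yor density $f_{S^P}$ termwise, to collapse each resulting term into a single one-dimensional integral $G$ through a change of variables, and then to evaluate $G$ by breaking its range of integration in two and inserting the convergent power series \cref{eq:Cylinder_power} on the bounded piece and the asymptotic expansion \cref{eq:Cylinder_asymp} on the remaining tail.

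First I would justify exchanging $\int_0^x$ with $\sum_{n=0}^\infty$ in the given expression for $f_{S^P}$. Writing the $n$-th summand (with the common prefactor $y^{-(P+2)/2}$ absorbed) as $b_n(y)$, the substitution $z=(2n+P)/\sqrt{y}$, under which $y=(2n+P)^2/z^2$ and $dy=-2(2n+P)^2z^{-3}\,dz$, turns $\int_0^x b_n(y)\,dy$ into $2(2n+P)^{-P}G\!\left((2n+P)/\sqrt{x}\right)$ with $G$ exactly as in the statement. Using the leading behaviour $D_{P+1}(w)\sim w^{P+1}e^{-w^2/4}$ one checks $G(w)=O\!\left(w^{2P-1}e^{-w^2/2}\right)$ as $w\to+\infty$, while $\Gamma(n+P)/\Gamma(n+1)=O(n^{P-1})$ and $(2n+P)^{-P}=O(n^{-P})$, so $\sum_n\int_0^x|b_n(y)|\,dy<\infty$ for each finite $x$ (integrability at the endpoint $y=0$ being immediate from the same super-exponential decay). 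Fubini/Tonelli then justifies the interchange and, after collecting the constant $(2\pi)^{-1/2}2^P/\Gamma(P)$, yields \cref{eq:series_Exp_CDF_S^P_smally}.

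Next I would write $G(y)=\int_y^{y^*}+\int_{y^*}^{+\infty}=G_1(y,y^*)+G_2(y^*)$ for an auxiliary cut-off $y^*\ge y$. On the compact interval $[y,y^*]$ with $y>0$ the power series $D_{P+1}(z)=\sum_k\hat d_k(P)z^k$ converges uniformly, so, $z^{P-1}e^{-z^2/4}$ being integrable there, it may be integrated termwise; the elementary substitution $u=z^2/4$ reduces $\int_y^{y^*}z^{P+k-1}e^{-z^2/4}\,dz$ to $2^{P+k-1}\bigl(\Gamma((P+k)/2,y^2/4)-\Gamma((P+k)/2,(y^*)^2/4)\bigr)$, giving the stated series for $G_1$, whose absolute convergence follows by dominating it by $\int_y^{y^*}z^{P-1}e^{-z^2/4}\sum_k|\hat d_k(P)|z^k\,dz<\infty$. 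For $G_2$ I would insert \cref{eq:Cylinder_asymp} with an explicit remainder, so that $z^{P-1}e^{-z^2/4}D_{P+1}(z)=e^{-z^2/2}z^{2P}\bigl(\sum_{k=0}^{N-1}(-1)^k\tfrac{(-(P+1))_{2k}}{k!(2z^2)^k}+O(z^{-2N})\bigr)$; integrating the explicit terms against $e^{-z^2/2}z^{2P-2k}$ over $[y^*,+\infty)$ via $v=z^2/2$ gives $2^{P-2k-1/2}\Gamma(P-k+1/2,(y^*)^2/2)$, while the integrated remainder is of the order of the $N$-th such term, so the outcome is a genuine asymptotic series as $y^*\to+\infty$ of precisely the claimed form.

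The routine parts are the two changes of variables and the bookkeeping of constants. The points that need care are (i) the absolute-convergence estimate legitimising the termwise integration of the $n$-series, which rests on the super-exponential Gaussian decay of $G$ at infinity and of $b_n$ near $y=0$, and (ii) controlling, uniformly on $[y^*,+\infty)$, the remainder of \cref{eq:Cylinder_asymp} so that its termwise integration produces a valid (not merely formal) asymptotic expansion for $G_2$; step (ii) is a standard integration-of-asymptotics argument but is the main technical obstacle. Throughout one works under the stated hypothesis $P\in(0,1)\cup\mathbb N$, which in particular keeps $\Gamma(P)$ finite and nonzero and is the range in which \cref{eq:Cylinder_power} and \cref{eq:Cylinder_asymp} are applied.
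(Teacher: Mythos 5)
Your proposal is correct and follows essentially the same route as the paper's proof: termwise integration of the Biane--Yor density, the substitution $z=(2n+P)/\sqrt{y}$ producing $G$, and the split of $G$ at $y^*$ with the power series integrated termwise on the bounded piece and the asymptotic expansion (with a controlled remainder) on the tail. The only divergence is in justifying the sum--integral interchange: the paper bounds $\int_0^x\lvert f_n\rvert$ via the square-integrability of $D_{P+1}$ and H\"older's inequality followed by a ratio test (treating $P\in(0,1)$ and $P\in\mathbb{N}$ separately), whereas you invoke the direct Gaussian-decay estimate $G(w)=\mathrm{O}\left(w^{2P-1}\exp\left(-w^2/2\right)\right)$; both are valid.
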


The previous theorem provides an effective approach to calculate the distribution function $F_{S^P}$ for small $P$ across its support with high precision. In practice, we choose to use the asymptotic expansion for the parabolic cylinder function $D_{P+1} \left( z \right)$ whenever $z \geq \Delta \left( P + 3/2 \right)$ for some positive constant $\Delta \gg 1$, suggested by Gil, Segura and Temme \cite[Section $5$]{gil2006computing}. Accordingly, we set $y^* = \text{max} \left\{  \Delta \left( P + {3}/{2} \right), y \right\}$ when computing the function $G \left( y \right)$ for fixed $y > 0$. This means only asymptotic series is involved in the computation of $G \left( y \right) = G_2 \left( y \right)$ for sufficiently large $y$ such that $y \geq \Delta \left( P + 3/2 \right)$. The constant $\Delta$, which may vary depending on the value of $P$, can be determined by numerical trials of comparing the accuracy and efficiency of evaluating both the power series and asymptotic representations at particular points. As in the case for large $P$, we compute the above series representation for $F_{S^P}$ and perform the root-finding for $F_{S^P}^{-1}$ in Maple for $P =1,10$ and $50$. Notice that the series expansion developed here is valid for any $P \in \left(0 , 1 \right)$, not only for integer $P$. This will be useful in the simulation of $X_2$ later on.

\subsection{Chebyshev polynomial approximation for the inverse distribution function of \boldmath$S^P$} \label{sec:Cheby_S^P}
As presented above, for any positive integer $P$, the simulation of $S^P$ is based on generating a series of random variables $S^k$ for $k \in \mathbb{S}$
by direct inversion. This method takes a uniform sample $u \sim \text{Unif} \left( 0, 1 \right)$
and returns the quantile function evaluated at $u$ as a sample for the associated distribution, which requires computing the inverse of the distribution function. However, it is often the case that the inverting process is computationally inefficient due to many factors such as poor initial guess and the lack of an analytical expression for the corresponding quantile function. Since a large number of samples is needed for the Monte Carlo simulation when the same number of inversions of the distribution will be performed, we now look for a more tractable technique to complete this step.

Indeed, we employ the method of Chebyshev polynomials explained below to approximate the inverse distribution function $F_{S^P}^{-1}$ for $P \in \mathbb{S}$.
Despite the fact that the polynomial is just an approximation, we can still obtain highly accurate results by restricting the error, which is controlled by the degrees of the polynomials we construct. In practice, we require the uniform error to be far smaller than the Monte Carlo error, e.g. of order $10^{-12}$.

Recall that a degree $n$ Chebyshev polynomial approximation has the form 
\begin{align*}
    c_0 T_0 \left( z \right) +c_1 T_1 \left( z \right) +\cdots+c_n T_n \left( z \right) -\frac{1}{2} c_0,
\end{align*}
where $T_k \left( z \right) = \cos{ \left( k \arccos{z} \right)}$ for $k=1,\ldots,n$ are the Chebyshev polynomials of degree $k$ defined on $ \left[ -1,1 \right]$ and $c_k$ for $k=1,\ldots,n$ are the Chebyshev coefficients computed in the standard way following Press \textit{et al.} \cite{press1992numerical}. Since polynomials often exhibit more rapid changes than the distribution functions, approximations by polynomials might not be able to fully capture the behaviour of the inverse function $F_{S^P}^{-1} \left( u \right)$. Hence, identifying appropriate scaling schemes of the argument $u$ is of great importance to allow the application of the Chebyshev polynomial approximation.  
The choices of the scales are mainly characterised by the behaviour of the function depending on the range of $P$. We briefly state the scaling and its rationale behind for large $P$ and small $P$ separately.
\paragraph{Large $P$}
Instead of the sum $S^P$, we take the normalised random variable $Z^P$ with zero mean and unit variance into consideration. For the approximation of the inverse distribution function $F_{Z^P}^{-1}$, we focus on the sub-interval $ \left[ F_{Z^P}\left( 0 \right),1 \right)$ of its support $\left[ 0, 1 \right]$ first, corresponding to the region where the random variable $Z^P$ takes positive values. In the limit of large $P$, the distribution function of $Z^P$ resembles that of a standard normal distribution. Thus, we generalise and apply the ideas underlying the Beasley-Springer-Moro direct inversion method for standard normal random variables; see Moro \cite{moro1995full}, Joy, Boyle and Tan \cite{joy1996quasi} and Malham and Wiese \cite{malham2014efficient}. The normal distribution function has three regions exhibiting different characteristic behaviours on the positive real line. Accordingly, we roughly split the interval $\left[ F_{Z^P} \left( 0 \right),1 \right)$ into three regimes: the central $\left[ F_{Z^P} \left( 0 \right),u_1 \right]$, the middle $ \left( u_1,u_2 \right]$ and the tail $ \left( u_2,1-10^{-12} \right]$ regimes. In general, the central regime roughly represents the area where the decreasing density function has a increasing slope while the middle regime represents the area where the decreasing density function has a declining slope with the tail regime representing the region where  the density function is flat taking values close to zero. We neglect the regimes from $1-10^{-12}$ to $1$. 

\begin{remark}
It should be pointed out here that the above rule is just for reference only. In
reality, we can choose optimal values for the boundaries $u_1$ and $u_2$ by a small number of trials in Maple to ensure that the resulting Chebyshev polynomial approximations have moderate degrees while retaining the accuracy for all three regimes. We may come across the circumstance that the approximations which achieves the desired accuracy have degrees of say $15$ for both the central and middle regimes but a higher degree of say $50$ for the tail regime for some given $u_1$ and $u_2$. Such a case should be avoided from the perspective of efficiency as higher degree often comes with higher computational cost. Hence, it is necessary to set the values $u_1$ and $u_2$ again through further investigations so that the degrees of the approximation for all regions are balanced with each other. If both of the degrees of the Chebyshev polynomials constructed for two neighbouring regions are at relative lower level, we may combine those two regimes to one and produce a unified approximation.
\end{remark}

In the central regime, we follow Malham and Wiese \cite{malham2014efficient} to scale and shift the variable. Define $U \left( u \right) \coloneqq \sqrt{2 \pi} \left( u-F_{Z^P} \left( 0 \right) \right)$ and $z \left( u \right) \coloneqq k_1 U \left( u \right)+k_2$, where the parameters $k_1$ and $k_2$ are chosen to make sure $z\left( F_{Z^P} \left( 0 \right) \right) = -1$ and $z \left( u_1 \right) =1$. Then, we approximate the inverse distribution function by 
\begin{align*}
    F_{Z^P}^{-1} \left( u \right) \approx U \left( u \right) \cdot \left( c_0 T_0 \left(z \left( u \right) \right) +c_1 T_1\left(z \left( u \right) \right) +\cdots+c_n T_n\left(z \left( u \right) \right) -\frac{1}{2} c_0 \right).
\end{align*}
  
In the middle and tail regimes, we approximate 
\begin{align*}
        F_{Z^P}^{-1} \left( u \right) \approx  c_0 T_0 \left( z \left( u \right) \right) +c_1 T_1\left(z \left( u \right) \right) +\cdots+c_n T_n\left(z \left( u \right)\right) -\frac{1}{2} c_0,
\end{align*}
where $U\left( u \right) \coloneqq\log{ \left( -\log{ \left( 1-u \right)} \right)}$ and $z \left( u \right) \coloneqq k_1 U\left( u \right)+k_2$ with the parameters $k_1$ and $k_2$ chosen such that $z \left( u \right) =-1$ at the left endpoint and $z \left( u \right)=1$ at the right endpoint. The ansatz for $U$ follows from inverting the asymptotic tail approximation for the standard normal, which is equivalent in distribution to $Z^P$ when $P \to +\infty$ by the central limit theorem; see Moro \cite{moro1995full}.
 
The above serves as a general discussion for choosing the scaled variables and approximations in the region of $\left[ F_{Z^P} \left( 0 \right),1-10^{-12} \right]$ for large $P$. We apply this procedure to the cases $P=10, 50, 5000, 10^4, 10^5$ and $10^6$, the inverse distribution functions of which are roughly anti-symmetric.
For the remaining half sub-interval $\left[10^{-12},  F_{Z^P} \left( 0 \right) \right)$ of its support, we can apply similar results to the scaling and approximation following the arguments mentioned above. 
\paragraph{Small $P$}
In the Chebyshev polynomial approximation for small $P$, the idea remains the same as above. Notice that the random variable $S^P$ takes positive values only. Since the distribution has a heavy right tail, we break down the support of $F_{S^P}^{-1}$ into four regimes: the left $ \left[ 10^{-12},u_1 \right]$, the central $ \left( u_1,u_2 \right]$, the middle $\left( u_2,u_3 \right]$ and the right tail $ \left( u_3,1-10^{-12} \right]$ regimes. We neglect the regimes at a distance of $10^{-12}$ from its endpoints. In theory, these boundary points are determined in accordance with the behaviour of the distribution function, but again it is better 
to set them via empirical studies in practice.

The central limit theorem tells us the asymptotic distribution of the sum $S^P$ when P is large. However, for small $P$ we have to analyse the limiting behaviour of the distribution function $F_{S^P}$ and its inverse $F_{S^P}^{-1}$ in order to help us find the proper scaled variables when we construct Chebyshev polynomial approximations. We build on the series representation for $F_{S^P}$ given in \cref{thm:Series_Exp_CDF_S^P} to derive the results below; see \cref{sec:appendix_proof_3.7} for a detailed proof.  
\begin{corollary}
\label{thm:leading_CDF_S^P}
For any $P \in \left( 0, 1 \right) \cup \mathbb{N}$, the distribution function $F_{S^P}$ has the following asymptotic expansion 
\begin{align}
    F_{S^P} \left( x \right) \sim \frac{1}{\sqrt{\pi}} 2^{P+\frac{1}{2}} P^{P-1} x^{-P+\frac{1}{2}} \exp{\left( -\frac{P^2}{2x} \right)}, \quad \text{as} \quad x \to 0^+. \label{eq:cdf_Y2_leading_left}
\end{align}
\end{corollary}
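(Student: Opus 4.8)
The plan is to start from the convergent series representation for $F_{S^P}$ given in \cref{thm:Series_Exp_CDF_S^P},
namely
\begin{align*}
F_{S^P} \left( x \right) = \frac{1}{\sqrt{2 \pi}} \frac{2^{P+1}}{\Gamma \left( P \right)} \sum_{n=0}^\infty \frac{\Gamma \left( n+P \right)}{\Gamma \left( n+1 \right)} \left( 2n+P \right)^{-P} G \left( \frac{2 n+P}{\sqrt{x}} \right),
\end{align*}
and to extract its leading-order behaviour as $x \to 0^+$. The key observation is that when $x \to 0^+$, the argument $y_n \coloneqq \left( 2n+P \right)/\sqrt{x} \to +\infty$ for every $n \geq 0$, so each $G \left( y_n \right)$ may be replaced by its large-argument asymptotics $G \left( y_n \right) = G_2 \left( y_n \right)$, where from \cref{thm:Series_Exp_CDF_S^P}
\begin{align*}
G_2 \left( y \right) \sim \sum_{k=0}^\infty \left( -1 \right)^k \frac{ \left(- \left( P+1 \right) \right)_{2k} }{ k!} 2^{P-2k-\frac{1}{2}} \Gamma \left( P-k+\frac{1}{2}, \frac{y^2}{2} \right), \quad y \to +\infty.
\end{align*}
The leading term is the $k=0$ term, $G_2 \left( y \right) \sim 2^{P-\frac12}\,\Gamma \left( P+\tfrac12, y^2/2 \right)$, and then I would invoke the classical asymptotics of the upper incomplete gamma function, $\Gamma \left( s, t \right) \sim t^{s-1} e^{-t}$ as $t \to +\infty$, with $t = y_n^2/2 = \left( 2n+P \right)^2/(2x)$. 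This gives, for each fixed $n$,
\begin{align*}
G \left( \frac{2n+P}{\sqrt{x}} \right) \sim 2^{P-\frac12} \left( \frac{\left(2n+P\right)^2}{2x} \right)^{P-\frac12} \exp\!\left( -\frac{\left(2n+P\right)^2}{2x} \right), \quad x \to 0^+.
\end{align*}

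Next I would argue that the $n=0$ term dominates the whole series as $x \to 0^+$. The exponential factor $\exp\!\left( -\left(2n+P\right)^2/(2x) \right)$ decays in $n$, and the ratio of the $n=1$ term to the $n=0$ term is of order $\exp\!\left( -\left[(2+P)^2 - P^2\right]/(2x) \right) = \exp\!\left( -(2+2P)/x \right) \to 0$; the remaining algebraic and $\Gamma$-factors are bounded in this comparison, and one can bound the full tail $\sum_{n\ge1}$ by a geometric-type series dominated by the $n=1$ term. Hence only the $n=0$ contribution survives at leading order. Substituting $n=0$ into the prefactor $\frac{1}{\sqrt{2\pi}}\frac{2^{P+1}}{\Gamma(P)}\cdot\frac{\Gamma(P)}{\Gamma(1)}\cdot P^{-P}$ and combining with the $n=0$ asymptotic form of $G$ yields
\begin{align*}
F_{S^P} \left( x \right) \sim \frac{1}{\sqrt{2\pi}}\, 2^{P+1}\, P^{-P} \cdot 2^{P-\frac12} \left( \frac{P^2}{2x} \right)^{P-\frac12} \exp\!\left( -\frac{P^2}{2x} \right),
\end{align*}
and then it is a matter of collecting the powers of $2$, $P$ and $x$: the $2$-powers give $2^{P+1+P-\frac12 - (P-\frac12)} = 2^{P+1}$, wait — more carefully, $2^{P+1}\cdot 2^{P-\frac12}\cdot 2^{-(P-\frac12)} = 2^{P+1}$, and combined with $1/\sqrt{2\pi} = \pi^{-1/2} 2^{-1/2}$ this gives $\pi^{-1/2} 2^{P+\frac12}$; the $P$-powers give $P^{-P}\cdot P^{2P-1} = P^{P-1}$; the $x$-power is $x^{-P+\frac12}$. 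This reproduces exactly \cref{eq:cdf_Y2_leading_left}.

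The main obstacle I anticipate is making the interchange of limits rigorous: both the series $\sum_n$ and the asymptotic expansion of $G_2$ (and of $\Gamma(s,t)$) are involved, so I need to control the error uniformly enough to conclude that replacing $G$ by its leading asymptotic term inside the infinite sum, and then truncating the sum to its first term, together produce only a relative error $o(1)$ as $x \to 0^+$. Concretely, I would fix a finite $N$, show the tail $\sum_{n > N}$ is exponentially negligible compared with the $n=0$ term uniformly for small $x$ (using the monotone decay of $\exp(-(2n+P)^2/(2x))$ and crude polynomial bounds on the other factors, plus the fact that $\Gamma(n+P)/\Gamma(n+1)$ grows only polynomially in $n$), then handle the finitely many terms $1 \le n \le N$ by the elementary pairwise comparison above, and finally apply the $k=0$ asymptotics of $G_2$ with an explicit $O(1/y^2)$ remainder to the single surviving term. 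A secondary point worth checking is that the hypothesis $P \in (0,1) \cup \mathbb{N}$ is exactly what \cref{thm:Series_Exp_CDF_S^P} requires for the series to be valid, and that nothing in the argument degenerates at the boundary values of $P$ (e.g. $\Gamma(P)$ in the denominator is finite and nonzero there), so the stated corollary holds on the full claimed parameter range.
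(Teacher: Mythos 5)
Your proposal is correct and follows essentially the same route as the paper's own proof: start from the convergent series of \cref{thm:Series_Exp_CDF_S^P}, replace $G$ by the $k=0$ term of its large-argument expansion together with $\Gamma(s,t)\sim t^{s-1}e^{-t}$, observe that the $n\ge 1$ terms are exponentially negligible relative to the $n=0$ term, and collect constants. If anything, your plan is slightly more careful than the paper, which only records the termwise statement $G\bigl((2n+P)/\sqrt{x}\bigr)=\mathrm{o}\bigl(G(P/\sqrt{x})\bigr)$ and passes directly to the sum, whereas you explicitly propose a uniform bound on the tail $\sum_{n\ge 1}$ to justify that interchange.
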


The above expression describes the behaviour of the distribution function $F_{S^P} \left( x \right)$ as $x \to 0^+$. Now, our goal is to invert this relation to obtain the asymptotic approximation for the inverse distribution function $F_{S^P}^{-1} \left( u \right)$ as $u \to 0^+$. Let $u = F_{S^P}(y)$, then from \cref{eq:cdf_Y2_leading_left} it is clear 
\begin{align}
    \frac{u \sqrt{\pi}}{2^{P+\frac{1}{2}} P^{P-1}} \sim y^{-P+\frac{1}{2}} \exp{\left( -\frac{P^2}{2y} \right)}, \quad \text{as} \quad y \to 0^+. \label{eq:cdf_Y2_leading_left_u}
\end{align}
Introducing the new variable $v\coloneqq {u \sqrt{\pi}}/ \left( {2^{P+\frac{1}{2}} P^{P-1}} \right)$ and taking logarithm on both sides, we can rewrite \cref{eq:cdf_Y2_leading_left_u} as
\begin{align*}
    \log{v} \sim \left( P-\frac{1}{2} \right) \log{\frac{1}{y}} -\frac{P^2}{2y},  \quad \text{as} \quad y \to 0^+. 
\end{align*}
After rearrangement, the above expression becomes
\begin{align*}
  \frac{1}{y} \sim  \frac{2}{P^2} \left( P-\frac{1}{2} \right) \log{\frac{1}{y}} -\frac{2}{P^2} \log{v}, \quad \text{as} \quad y \to 0^+. 
\end{align*}
Since $\log{y^{-1}} / y^{-1} \to 0$ as $y \to 0^+$, we have $-2 \log{v} / P^2 \sim y^{-1}$ in the limit $y \to 0^+$. This relation further gives  
\begin{align*}
    \log{ \left(  - \frac{2}{P^2} \log{v} \right)} = \mathrm{o} \left( \frac{1}{y} \right),  \quad \text{as} \quad y \to 0^+. 
\end{align*}
Hence, taking advantage of order relations we can write $y$ in terms of $v$ as 
\begin{align} 
    \frac{1}{y} & \sim -\frac{2}{P^2} \log{v} +\frac{2}{P^2} \left( P-\frac{1}{2} \right) \log{
    \left(  -\frac{2}{P^2} \log{v} \right)}     \label{eq:asymp_inverse_0} \\
    & =  -\frac{2}{P^2} \log{v} + \frac{2}{P^2} \left( P-\frac{1}{2} \right) \log{\left( \frac{2}{P^2} \right)} + \frac{2}{P^2} \left( P-\frac{1}{2} \right) \log{\left( - \log{v} \right)} 
\nonumber 
\end{align}
when $y \to 0^+$. In particular by the fact that $\log{\left( - \log{v} \right)} = \mathrm{o} \left( y^{-1} \right)$ as $y \to 0^+$, its leading order behaviour yields
\begin{align*}
    y & \sim \left( \frac{2}{P^2} \left( P-\frac{1}{2} \right) \log{\frac{2}{P^2}} -\frac{2}{P^2} \log{v}    \right)^{-1} \\
    & = \left(   \frac{2}{P^2} \left( P-\frac{1}{2} \right) \log{\frac{2}{P^2}} -\frac{2}{P^2} \log \left( \frac{u \sqrt{\pi}}{2^{P+\frac{1}{2}} P^{P-1}} \right) \right)^{-1},
\end{align*}
as $y \to0^+$, i.e. $u \to 0^+$, where the last equation comes from the transformation $v = {u \sqrt{\pi}}/ \left( {2^{P+\frac{1}{2}} P^{P-1}} \right)$.

As $u \to 1$, i.e. $y \to + \infty$, we adopt a gamma approximation for the tail. This is implied by empirical tests which show that the distribution is positively skewed with a longer right tail. Hence, by matching the mean and variance of $S^P$ with those of a gamma random variable, the shape and rate parameters take the form $s=5P/2$ and $r=15/2$. Then, the distribution function $F_{S^P}$ is approximated by the distribution function $F_X$ of a gamma random variable $X$ with parameters $s$ and $r$ given as follows:
\begin{align*}
   F_X \left( y \right) = 1- \frac{1}{\Gamma \left( \frac{5}{2} P \right)} \Gamma \left( \frac{5}{2} P,\frac{15}{2} y \right),
\end{align*}
Making use of the asymptotic relationship given below for the incomplete gamma function (Abramowitz and Stegun \cite[formula $(6.5.32)$]{abramowitz1964handbook})
\begin{align*}
 \Gamma \left( s,z \right) \sim z^{s-1} \exp{ \left( -z \right)} \sum_{k=0}^\infty \frac{\Gamma \left( s \right)}{\Gamma \left( s-k \right)} z^{-k}, \quad \text{as} \quad z \to +\infty  
\end{align*}
establishes as $y \to +\infty$,
\begin{align}
F_X \left( y \right) \sim  1-\frac{1}{\Gamma \left( \frac{5}{2} P \right)} \left( \frac{15}{2} y \right)^{\frac{5}{2} P-1} \exp{ \left( -\frac{15}{2} y \right)}. \label{eq:cdf_Y2_leading_right}
\end{align}
Set $u = F_{X} \left( y \right)$. 
After rewriting \cref{eq:cdf_Y2_leading_right}, we obtain
\begin{align*}
\left( 1 - u \right) \Gamma \left( \frac{5}{2} P \right) \sim  \left( \frac{15}{2} y \right)^{\frac{5}{2} P-1} \exp{ \left( -\frac{15}{2} y \right)}, \quad \text{as} \quad y \to + \infty.
\end{align*}
To generate an asymptotic expression for $y$, we start by taking logarithm and defining the new variable $v\coloneqq\left( 1-u \right) \Gamma \left( 5 P/2 \right)$, which gives
\begin{align}
    \log{v} \sim \left( \frac{5}{2} P-1 \right) \log{\left( \frac{15}{2} y \right)} -\frac{15}{2} y,  \quad \text{as} \quad y \to + \infty. \label{eq:cdf_Y2_leading_right_log}
\end{align}
Rearranging \cref{eq:cdf_Y2_leading_right_log} leads to
\begin{align*}
    y \sim -\frac{2}{15} \log{v} +\frac{2}{15} \left( \frac{5}{2} P-1 \right) \log{\left( \frac{15}{2} y \right)},  \quad \text{as} \quad y \to + \infty.
\end{align*}
By a short calculation analogous to \cref{eq:asymp_inverse_0}, 
we conclude
\begin{align*}
 y \sim  -\frac{2}{15} \log{v} +\frac{2}{15} \left( \frac{5}{2} P-1 \right) \log{\left( -\log{v} \right)},  \quad \text{as} \quad y \to + \infty.             
\end{align*}
On substituting $v=\left( 1-u \right) \Gamma \left( 5 P/2 \right)$, as $y \to + \infty$, i.e. $u \to 1$, its leading order is of the form
\begin{align*}
    y \sim -\frac{2}{15} \log{\left(  \left( 1-u \right) \Gamma \left( \frac{5}{2} P \right) \right)}.
\end{align*}

The analysis above outlines the asymptotic approximation for $F_{S^P}^{-1} \left( u \right)$ as $u \to 0$ and $u \to 1$, and provides the ansatz behind the choices of reasonable scaling variables for Chebyshev polynomial approximations for small $P$, i.e. $P=1$. Accordingly, we report the routines for approximations of the inverse distribution function $F_{S^P}^{-1} \left( u \right)$ through Chebyshev polynomials for the four regimes identified in detail. Note again the parameters $k_1$ and $k_2$ given below restrict the ranges of the transformed variable $z$ to the interval $\left[-1, 1 \right]$.  

For all the four regimes, we approximate $F_{S^P}^{-1} \left( u \right)$ by a degree $n$ Chebyshev polynomial approximation of a scaled and shifted variable $z \left( u \right) \coloneqq k_1 U\left( u \right)+k_2$ as below:
    \begin{align*}
        F_{S^P}^{-1} \left( u \right) \approx c_0 T_0 \left( z \left( u \right) \right) +c_1 T_1\left( z \left( u \right) \right) +\cdots+c_n T_n \left( z \left( u \right) \right) -\frac{1}{2} c_0,
    \end{align*}
where $U \left( u \right) \coloneqq\left( \left( 2/P^2 \right) \left( P-1/2 \right) \log{ \left( 2/P^2 \right)} -\left( 2/P^2 \right) \log{ \left( u \sqrt{\pi}/ \left( 2^{P+1/2} P^{P-1} \right) \right)} \right)^{-1}  $ in the left regime $\left[ 10^{-12},u_1 \right]$, $U \left( u \right) \coloneqq \left( u_2-u \right) \sqrt{2P/45}$ in the central regime $ \left( u_1,u_2 \right]$ and $U \left( u \right) \coloneqq \left( - 2/15 \right) \log{ \left( \left( 1-u \right) \Gamma\left(5P/2 \right) \right)} $ in the middle regime $ \left( u_2,u_3 \right]$ and right tail regime $\left( u_3,1-10^{-12} \right]$. 
\begin{figure}[tbhp!]
    \centering
        \subfloat{\label{fig:P10^6}\includegraphics[width=0.95\textwidth]{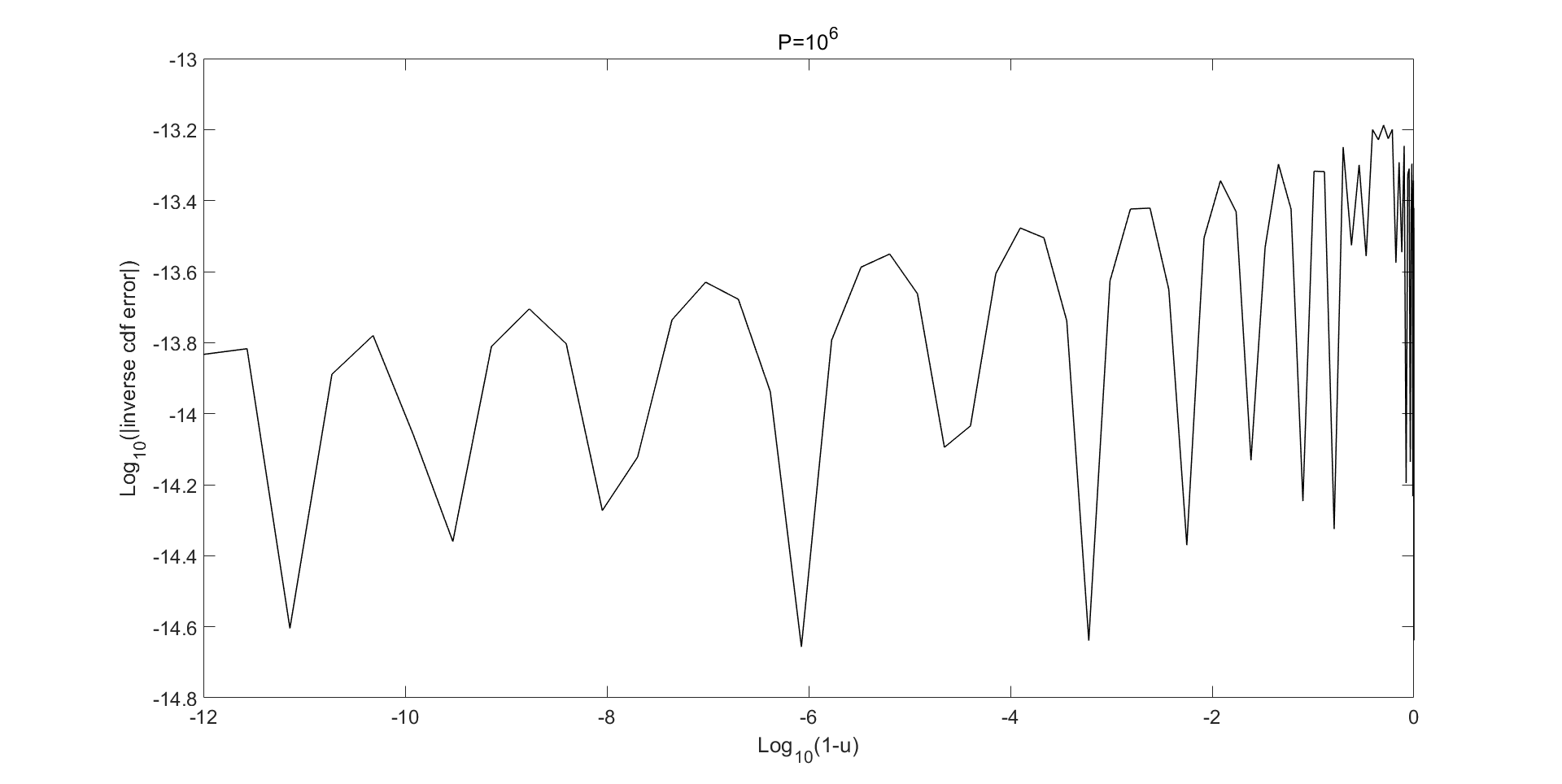}} \\
        \subfloat{\label{fig:P1}\includegraphics[width=0.95\textwidth]{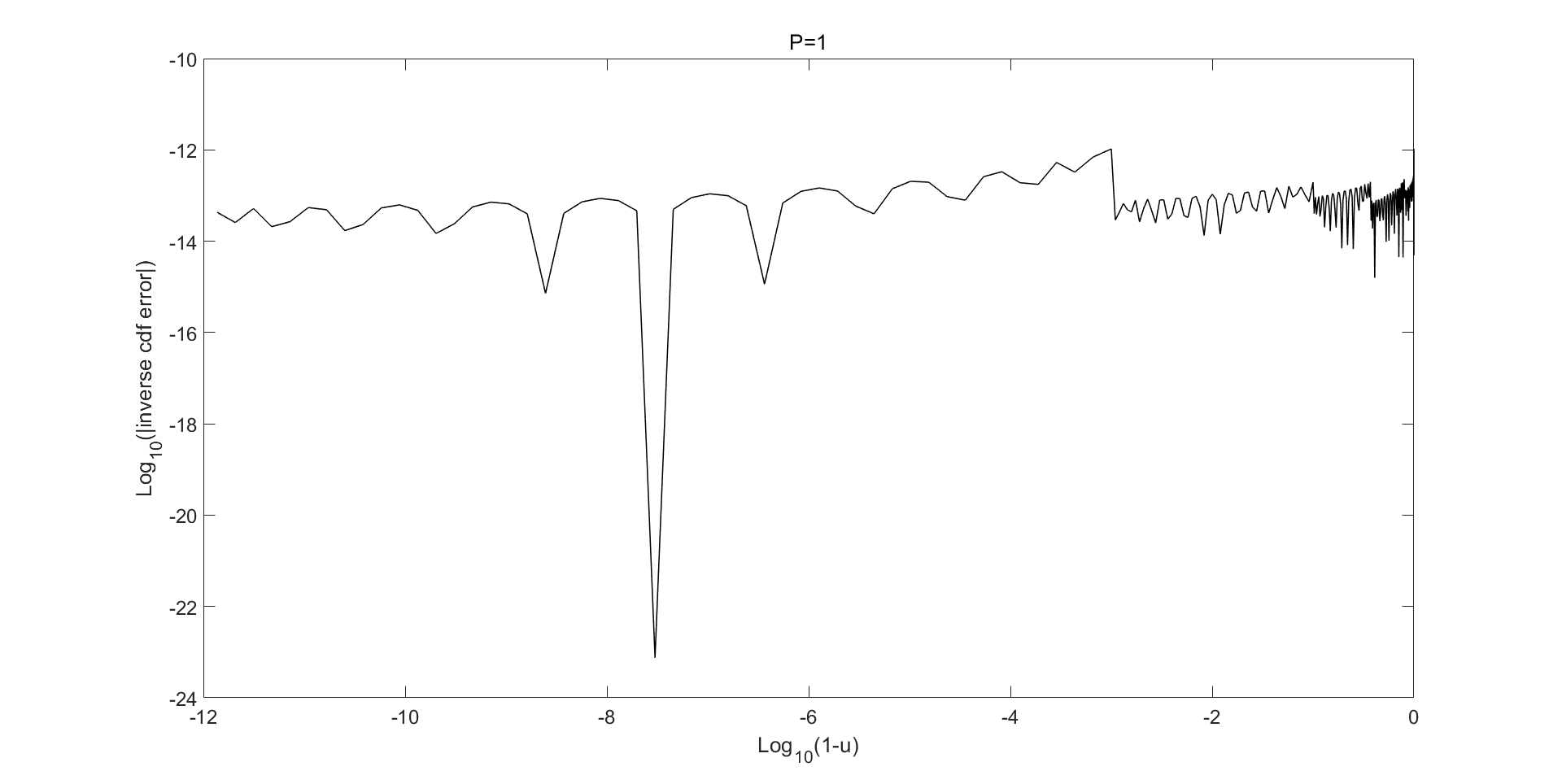}} 
    \caption{ We plot the errors in the Chebyshev polynomial approximations to the inverse distribution functions $F_{Z^P}^{-1} \left( u \right)$ with $P=10^6$ $( \text{top panel} )$ and $F_{S^P}^{-1} \left( u \right)$ with $P=1$ $( \text{bottom panel} )$ across all regimes, respectively. Note that to highlight the tail we use a log-log$_{10}$ scale with $1-u$ on the abscissa.}
    \label{fig:S^P}
\end{figure}

We have summarised the approximation techniques for the inverse distribution function $F_{Z^P}^{-1}(u)$ or $F_{S^P}^{-1}(u)$ taking into account the various values that $P$ and $u$ might take. Following this, we compute the coefficients for the Chebyshev polynomial approximations in the standard fashion (see Press \textit{et al.} \cite{press1992numerical}) for the cases $P=1, 10, 50, 5000, 10^4, 10^5$, $10^6$ using Maple. With all these accurate and reliable coefficients, quoted in the supplementary materials, then imported to Matlab, subsequent Chebyshev approximations are evaluated by Clenshaw's recurrence formula, which can be found in Press \textit{et al.} \cite{press1992numerical}. Therefore, for any $P > 0$, $S^P$ can be sampled repeatedly with high efficiency using \cref{alg:dir-inv-S^P}.

We end this section by showing the respective errors in the Chebyshev polynomial approximations to $F_{Z^P}^{-1} \left( u \right)$ and $F_{S^P}^{-1} \left( u \right)$ with $u \in \left[ 10^{-12},1-10^{-12} \right]$ when $P$ is $10^6$ and $1$ in \cref{fig:S^P}. For each $u$, the error of the approximation is $c - \hat{c}$, where $c$ is obtained by a high precision root-finding procedure applied to the expansions for the distribution functions and $\hat{c}$ is evaluated by the prescribed Chebyshev polynomials. To highlight the tail we plot the errors on a log-log$_{10}$ scale with $1-u$ on the abscissa. For $P=10^6$, we split the interval $\left[ 10^{-12}, 1-10^{-12}\right]$ into two regimes: $\left[ 10^{-12}, 0.5001 \right)$ and $ \left[ 0.5001, 1-10^{-12}\right] $, where both of the Chebyshev polynomials have degrees $16$. For $P=1$, we generate approximations for five regimes as described above where the right tail region is further split into two, the degrees for which are $25$ in the left with $u \in \left[ 10^{-12}, 0.2 \right)$, 
$18$ in the central with $u \in \left[ 0.2,0.63 \right)$, 
$15$ in the middle with $ u \in \left[ 0.63, 0.9 \right)$, 
$18$ and $13$ in the right tail regimes with $\left[ 0.9, 0.999 \right)$ and $\left[ 0.999, 1-10^{-12}\right)$, respectively. Notice the error in all cases is of order $10^{-12}$. Results for the other values, reported in Shen \cite{shen2019numerical}, have similar accuracy as those in \cref{fig:S^P}.

\subsection{Simulation of \boldmath$X_2$ and \boldmath$Z$} 
\label{sec:simulation of X_2}
Let us first introduce the notation $h=\delta/2$, where $\delta = 4 \kappa \theta / \sigma^2$ is the degrees of freedom. Note that for Heston models calibrated to real market data, the zero boundary of the variance process is typically attainable and strongly reflecting; see Haastrecht and Pelsser \cite{van2010efficient} and Lord, Koekkoek and Van Dijk \cite{lord2006comparison}. By the Feller condition, this requires $\delta < 2$, i.e. $h < 1$. Hence, after separating the time parameter, $X_2$ can be written in the form
\begin{align*}
    X_2=\sum_{n=1}^{\infty} \frac{\tau^2}{4^n} C_n^{h}= \tau^2 \sum_{n=1}^{\infty} \frac{1}{4^n} C_n^{h} = \tau^2 Y_2^{h},
\end{align*}
with $Y_2^h: = \sum_{n=1}^{\infty} C_n^{h}/4^n$ depending only on the parameter $h$. The structure of the random variable $Y_2^h$, along with its dependence on the model parameters, provides us with another possibility to sample $Y_2^h$ and thus $X_2$, apart from the truncation method. In fact, the Laplace transform of $Y_2^h$ is identical to
that for $S^h$. We can therefore try to extend the direct inversion of $S^h$ for any $h \in \mathbb{N}$ developed above for the simulation of $Y_2^h$. 

Recall that $Y_2^h$ has the Laplace transform 
\begin{align*}
  \Phi_{Y_2^h} (b)= \mathbb{E} \left[\exp{\left( -b Y_2^h \right)} \right] = \left( \frac{\sqrt{2 b}}{\sinh  \sqrt{2 b}} \right)^h,  \quad \text{for} \quad b > 0,
\end{align*}
which has the same expression as that of $S^h$ given by \cref{eq:Laplace_S^P} after replacing $P$ by $h$. The only difference is that the parameter $h$ now is restricted to $\left( 0,1 \right)$ rather than positive integers. This suggests the decomposition proposed in \textcolor{siaminlinkcolor}{section} \ref{sec:simulation of X_1} for integer $h$ and the resulting $S^h$ is no longer reasonable. However, motivated by Malham and Wiese \cite{malham2013chi}, we have the following alternative formula for $0< h < 1$, which is given to the first three decimal places
\begin{align*}
    h = \frac{h_5}{5} + \frac{h_{10}}{10} + \frac{h_{20}}{20} + \frac{h_{50}}{50} + \frac{h_{100}}{100} + \frac{h_{200}}{200} + \frac{h_{500}}{500} + \frac{h_{1000}}{1000} + \frac{h_{2000}}{2000},
\end{align*}
where $h_k \in \left\lbrace 0,1,2 \right\rbrace$ for $k \in \mathbb{H} = \left\{ 5, 10, 20, 50, 100, 200, 500, 1000, 2000   \right\}$.
The above decomposition works for the case when $h$ is rounded to three decimal digits, but it can be generalised to $h \in \left( 0, 1 \right)$ given to any decimal places in principle. Next, we give the direct inversion algorithm for generating $Y_2^h$ for any $h \in \left( 0,1 \right)$ given to the first three decimal places.
\begin{algorithm}
\caption{Direct inversion for $Y_2^h$}
\label{alg:dir-inv-Y_2^h}
\begin{algorithmic}[1]
\STATE{For each $k \in \mathbb{H}$,
sample $h_k$ independent random variables $Y_{2,i}^{1/k}$, $i=1, \ldots, h_k$ from the distribution of $Y_2^{1/k}$ by inverse transform sampling based on the corresponding Chebyshev polynomial approximations.}
\STATE{Compute the accumulated sum, i.e. $\sum_k \sum_{i=1}^{h_k}  Y_{2,i}^{1/k} \sim Y_2^h $. }
\end{algorithmic}
\end{algorithm}

\begin{figure}[tbhp!]
    \centering
    \subfloat{\includegraphics[width=0.95\textwidth]{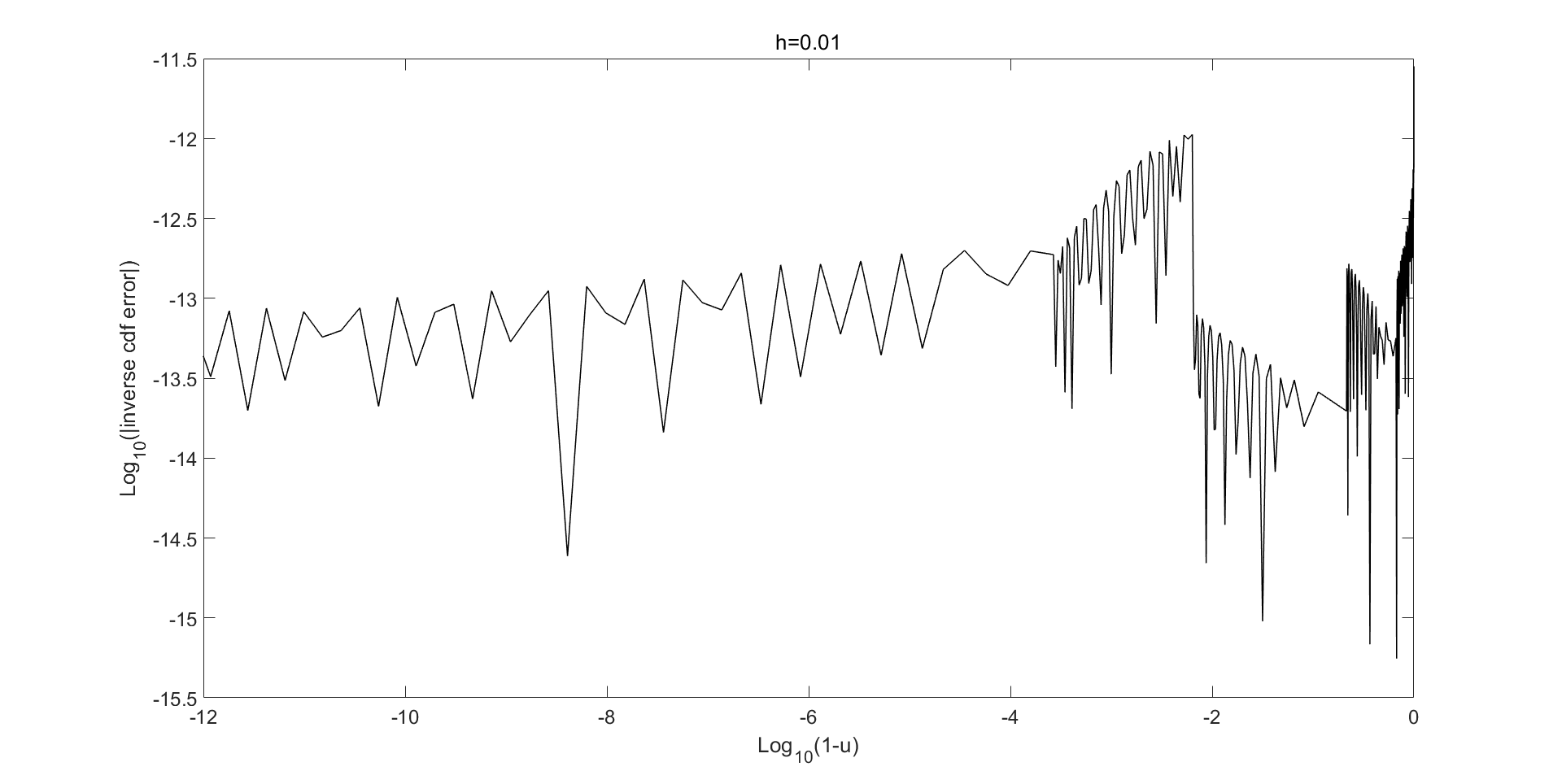}} 
    \caption{ We plot the errors in the Chebyshev polynomial approximations to the inverse distribution functions $F_{Y_2^h}^{-1} \left( u \right)$ with $h=0.01$ across all regimes. Note as above we use a log-log$_{10}$ scale with $1-u$ on the abscissa.}
    \label{fig:h100}
\end{figure}

\begin{figure}[tbhp!]
    \centering
     \includegraphics[width=0.95\textwidth]{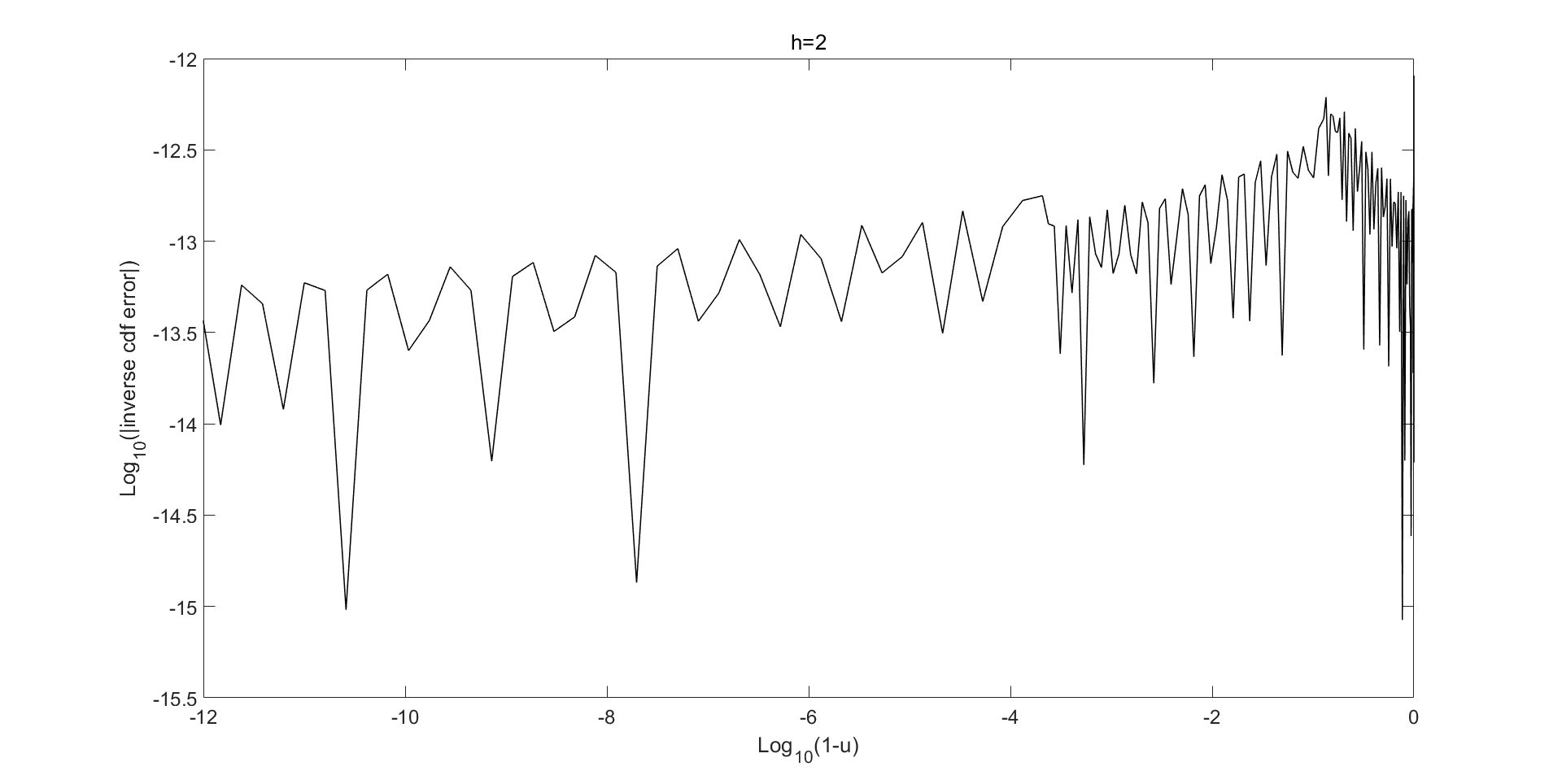} 
    \caption{We plot the errors in the Chebyshev polynomial approximations to the inverse distribution functions $F_{Y_2^h}^{-1} \left( u \right)$ with $h=2$, i.e. $F_{Z'}^{-1} \left( u \right)$ across all regimes. Note as above we use a log-log$_{10}$ scale with $1-u$ on the abscissa.}
    \label{fig:}
\end{figure}

Given \cref{alg:dir-inv-Y_2^h}, for a general $h$, the simulation of $Y_2^h $  is reduced to simulating several particular random variables such as $Y_2^{1/5}$, $Y_2^{1/10}$, $\ldots$ using their inverse distribution functions, which are approximated by the associated Chebyshev polynomials. To design these Chebyshev polynomial approximations, the approach for $S^P$ with small integer $P$ introduced in \textcolor{siaminlinkcolor}{section} \ref{sec:exp_samllP} and \textcolor{siaminlinkcolor}{section} \ref{sec:Cheby_S^P} can be fully used here. This is because the series expansion and the asymptotic approximation for the distribution function of $S^P$ remain valid for small non-integer $P$. Therefore, we apply the same strategy to calculate the Chebyshev polynomial approximations for the inverse distribution function of $Y_2^h$ with fixed
$h=1/k$, $k \in \mathbb{H}$,
the values for the coefficients of which are presented in the supplementary materials. \cref{fig:h100} shows the errors in the approximation for $F_{Y_2^h}^{-1}$ by Chebyshev polynomials across all regimes when $h=1/100$. Notice that under such circumstance, because of the heavy tail we further split the right tail region into two smaller regions where different Chebyshev polynomials are developed, making a total of five separate regions: $\left[ 10^{-12} , 0.3364 \right)$, $\left[ 0.3364, 0.7854 \right)$, $\left[ 0.7854, 0.9936 \right)$, $\left[ 0.9936, 0.9997 \right)$ and $\left[ 0.9997, 1-10^{-12} \right]$ with degrees $24$, $19$, $18$, $19$ and $31$, respectively. All errors are fluctuating at the level of $10^{-12}$. See Shen \cite{shen2019numerical} for more results of the other cases.

As $Z$ is a special case of $X_2$ when $h=2$, the approach to generating samples of $X_2$ discussed above is fully applicable here. In fact, we directly construct the Chebyshev polynomial approximations for the inverse distribution function $F_{Z'}^{-1}$ with $Z' =  Z/\tau^2$ since $Z'$ is independent of any parameters. We plot the resulting errors in \cref{fig:}. The polynomials have degrees between $22$ and $27$ in the four regions with errors of order $10^{-12}$.
\begin{table}[tbhp]
{\footnotesize
  \caption{Parameters for the Heston model.}\label{tab:parameter}}
\begin{center}
\begin{tabular}{ccccccc} 
\hline 
\addlinespace[0.5ex]
\multirow{2}{*}{\bf{Parameters}} & \multicolumn{4}{c}{\bf{European}} & \multicolumn{2}{c}{\bf{Path-dependent}} \\
\cmidrule(lr){2-5}
\cmidrule(lr){6-7}
&  \bf{ Case $\mathbf{1}$ }&  \bf{Case $\mathbf{2}$  } & \bf{ Case $\mathbf{3}$ }  &  \bf{Case $\mathbf{4}$ }   & \bf{Asian} & \bf{Barrier} \\
\addlinespace[0.5ex]
\hline 
\addlinespace[0.5ex]
$\kappa$ & $0.5$   & $0.3$    & $1$  & $6.21$ & $1.0407$ & $0.5$    \\   
$\theta$ & $0.04$   & $0.04$    & $0.09$   & $0.019$ & $0.0586$ & $0.04$  \\  
$\sigma$ & $1$   & $0.9$    & $1$   & $0.61$ & $0.5196$ & $1$  \\ 
$\rho$ & $-0.9$   & $-0.5$    & $-0.3$   & $-0.7$   & $-0.6747$ & $0$ \\ 
$t$ & $10$   & $15$    & $5$   & $1$ & $4$ & $1$  \\ 
$v_0$ & $0.04$   & $0.04$    & $0.09$   & $0.010201$ & $0.0194$ & $0.04$  \\ 
$S_0$ & $100$   & $100$    & $100$   & $100$  & $100$ & $100$\\ 
$r$ & $0$   & $0$    & $0.05$   & $0.0319$ &  $0$   & $0$ \\ 
\addlinespace[0.5ex]
\hline
\end{tabular}    
\end{center}
\end{table}
\section{Numerical analysis}
\label{sec:numerical_analysis}
In this section, we compare our direct inversion method with the gamma expansion of Glasserman and Kim \cite{glasserman2011gamma} by pricing four challenging European call options in the Heston model. The sets of parameters considered are given in \cref{tab:parameter}. These four sets are taken from Glasserman and Kim \cite{glasserman2011gamma}, which are found to be in the typical range of parameter values of the Heston model in practice. Two path-dependent options including an Asian option with yearly fixings (see Smith \cite{smith2007almost}, Haastrecht and Pelsser \cite{van2010efficient} and Malham and Wiese \cite{malham2013chi}) and a digital double no touch barrier option (see Lord, Koekkoek and Van Dijk \cite{lord2006comparison} and Malham and Wiese \cite{malham2013chi}) are also tested with parameter values shown in \cref{tab:parameter}.

\subsection{Time integrated conditional variance}
Before giving simulation results for option prices, we first illustrate the performance of the above direct inversion method in terms of accuracy based upon the series expansion given in \cref{thm:Series_Rep}. Recall that our objective is to sample from the distribution of the random variable $\int_0^t V_s \, ds$ given its endpoints $V_0$ and $V_t$, denoted by $\bar{I}$, i.e.
\begin{align*}
    \bar{I}  = \left( \int_0^t V_s \, ds \middle\vert V_0=v_0, V_t=v_t \right) = \frac{4}{\sigma^2}  \left( \int_0^\tau \tilde{A}_s \, ds \middle\vert \tilde{A}_0 = a_0, \tilde{A}_\tau = a_\tau \right) = \frac{4}{\sigma^2} I,
\end{align*}
under the probability measure $\mathbb{Q}$. We have decomposed the integral into the sum of three independent series after measure transformation. Among the realisation of those three series, the first one is truncated with tail approximated by a gamma random variable and the remaining two series are simulated exactly by direct inversion. 
\begin{figure}[tbhp!]
    \centering
    \subfloat[Case $1$: $v_0=v_t=0.04$]{\label{fig:case1_mom_errors_50000000_0.04}\includegraphics[scale=0.28]{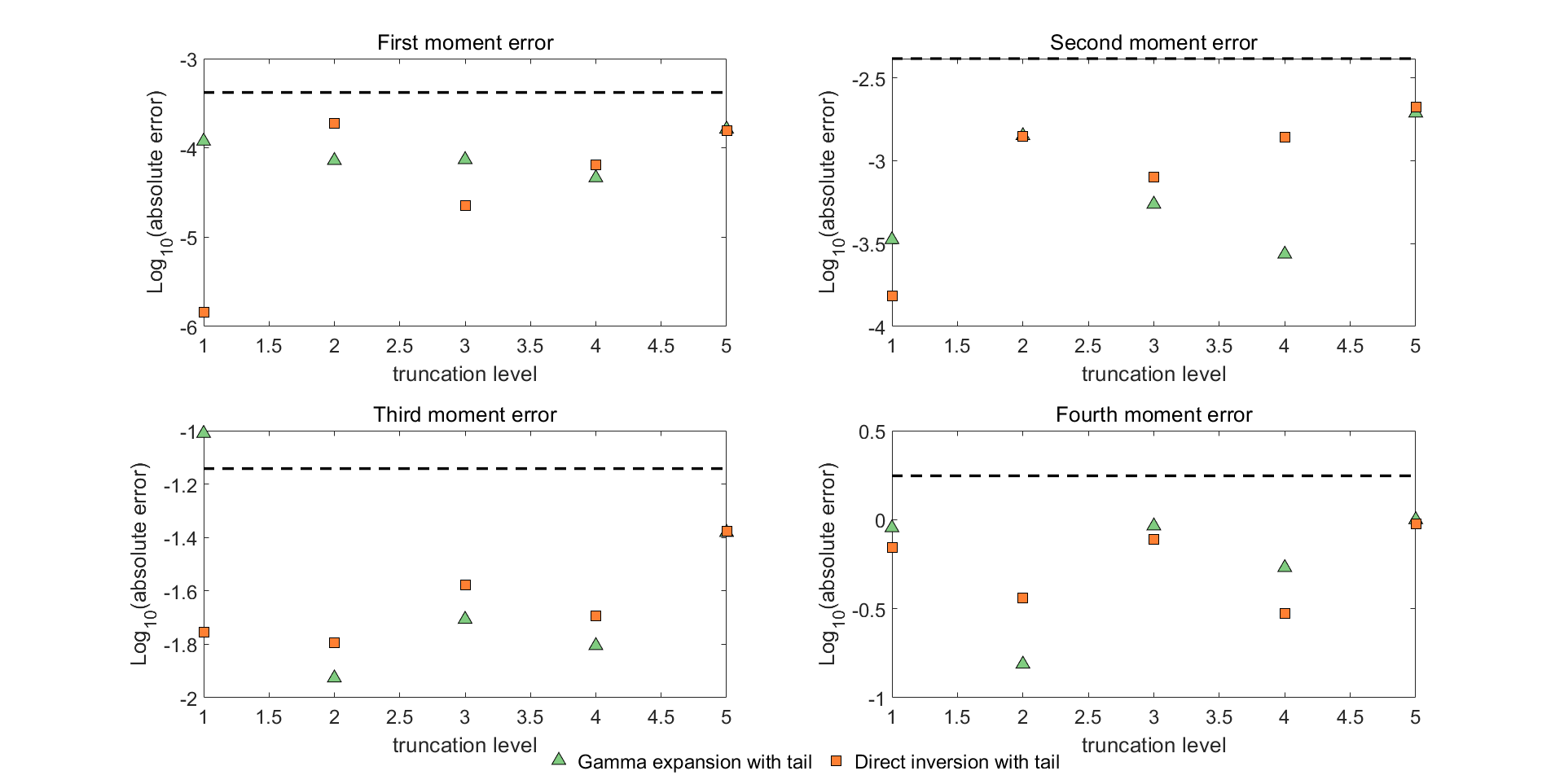}}  
    \caption{We indicate the absolute errors in the first four moments of the conditional integral $\bar{I}$ simulated by direct inversion and gamma expansion versus the truncation levels for Case $1$ with different values for $v_t$. Both methods are implemented with tail simulation. We perform $5 \cdot 10^7$ simulations for each case. Below the dashed line, the errors are not statistically significant at the level of three standard deviations.}
    \label{fig:case1_mom_errors_50000000_1}
\end{figure}

In \cref{fig:case1_mom_errors_50000000_1} and \cref{fig:case4_mom_errors_50000000_1} the absolute errors in the first four moments are displayed for simulating the conditional integral $\bar{I}$ with different values of $v_t$ using our method. For comparison, we include the results by employing the gamma expansion from Glasserman and Kim \cite{glasserman2011gamma} as well. For both methods, we apply tail approximations with truncation level increasing in integers. The number of samples generated in each case is $5 \cdot 10^7$. The three panels shown in \cref{fig:case1_mom_errors_50000000_1} from top to bottom correspond to the three representative values $v_t=0.04, 4, 0.000004$ for Case $1$ and the panels in \cref{fig:case4_mom_errors_50000000_1} correspond to the three fixed values $v_t=0.010201, 0.0025, 0.05$ (top to bottom) for Case $4$. Note that the true moments can be computed by evaluating the respective derivatives of the moment generating functions derived by Broadie and Kaya \cite{broadie2006exact} at the origin.
\begin{figure}[tbhp!]
\ContinuedFloat 
    \centering
    \subfloat[Case $1$: $v_0=0.04, v_t=4$]{\label{fig:case1_mom_errors_50000000_4}\includegraphics[scale=0.28]{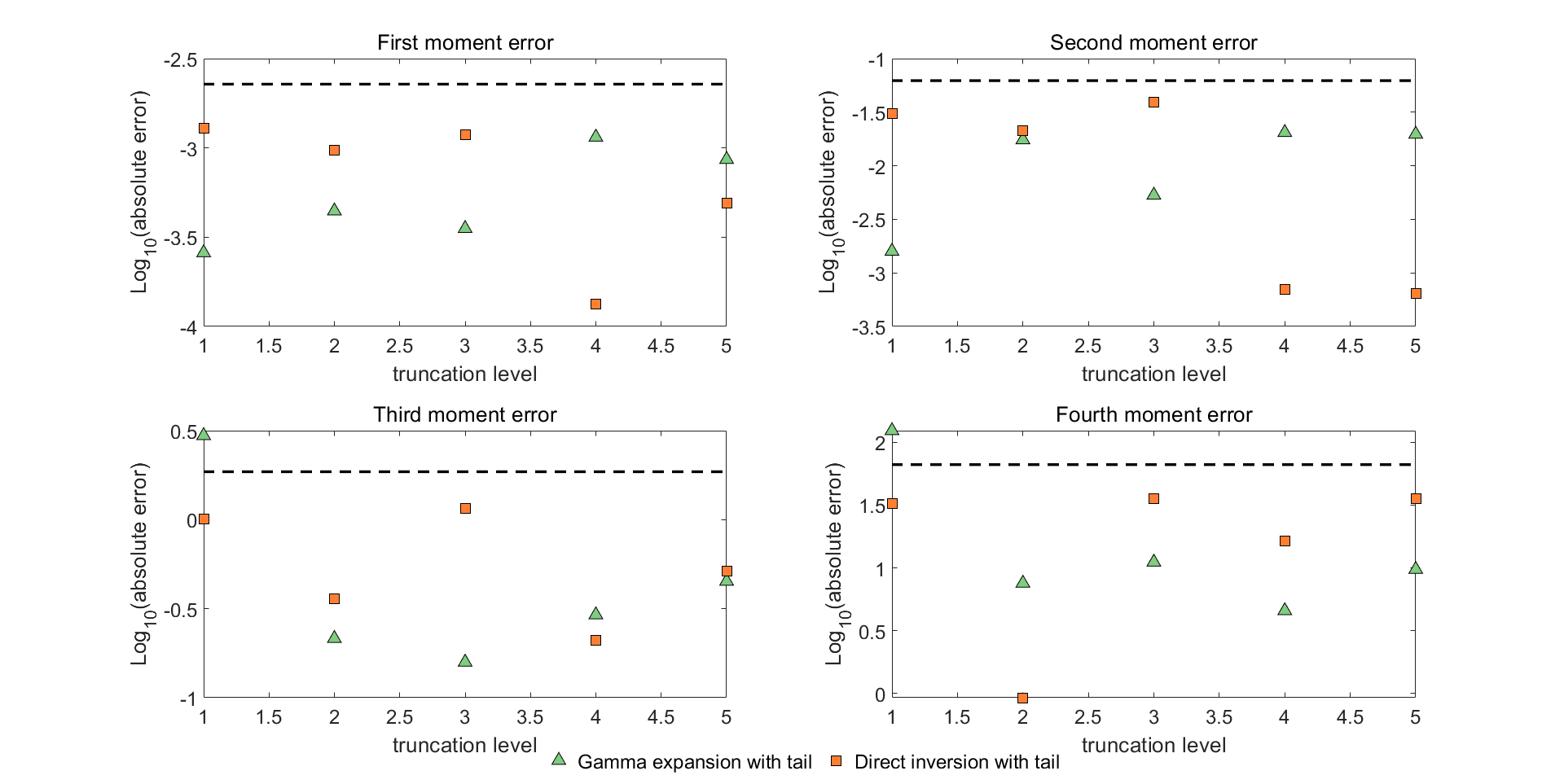}} \\
     \subfloat[Case $1$: $v_0=0.04, v_t=0.000004$]{\label{fig:case1_mom_errors_50000000_0.000004}\includegraphics[scale=0.28]{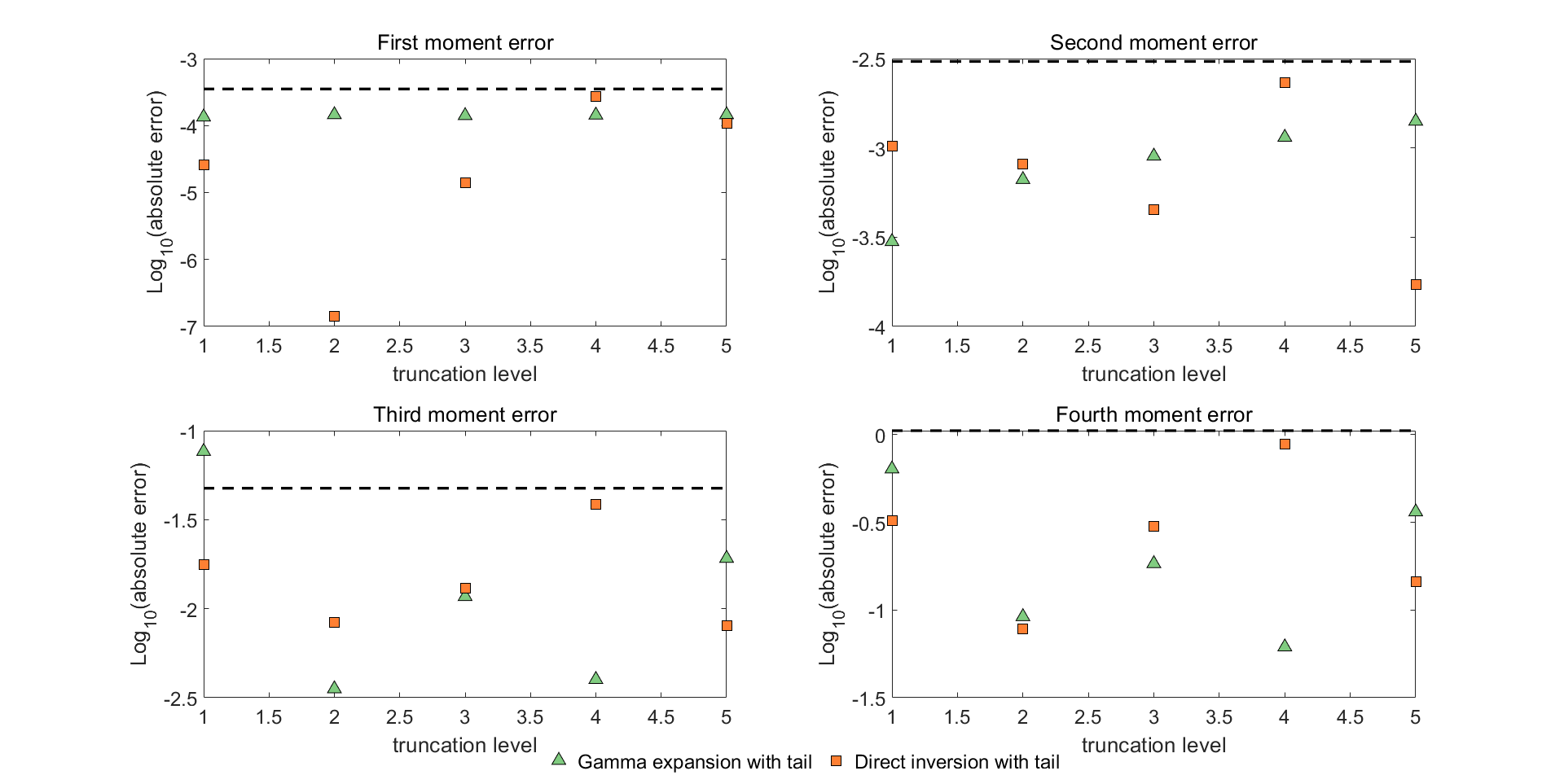}}  
    \caption{(cont.) We indicate the absolute errors in the first four moments of the conditional integral $\bar{I}$ simulated by direct inversion and gamma expansion versus the truncation levels for Case $1$ with different values for $v_t$. Both methods are implemented with tail simulation. We perform $5 \cdot 10^7$ simulations for each case. Below the dashed line, the errors are not statistically significant at the level of three standard deviations.}
    \label{fig:case1_mom_errors_50000000_2}
\end{figure}

We observe that most errors for the first two moments across different values of $v_t$ and truncation levels for both Case $1$ and $4$ are not significantly different from zero at the level of three standard deviations, suggesting both methods achieve high accuracy for these two moments as expected. This is consistent with the theory as tail simulation in each method is designed such that the first two moments are matched. In other words, the simulations should lead to the exact first and second moments in principle, whence only Monte Carlo noise with a scaling as the inverse of the square root of the sample size, i.e. $\left( 5 \cdot 10^7 \right)^{-{1}/{2}}$, is observed. 

For the higher moments, the errors of the direct inversion are fluctuating at some level below the statistical significance for all circumstances considered. These errors are so small that a decreasing trend is not visible when increasing the truncation level. In contrast, with the increment of the truncation levels, the errors of the gamma expansion first exhibit a decaying pattern until the curves become horizontal. For example, the behaviour of the decreasing errors of the third and fourth moments is obvious when the truncation level is increased from one to two. The falling tendency appears to be more significant when we increase the sample size, thus, reduce the Monte Carlo effect, which will be discussed later. This suggests that there exists some bias in the gamma expansion with small truncation levels while the direct inversion with lower truncation levels has the same accuracy as that with higher ones.
\begin{figure}[tbhp!]
    \centering
    \subfloat[Case $4$: $v_0=v_t=0.010201$]{\label{fig:case4_mom_errors_50000000_0.010201}\includegraphics[scale=0.28]{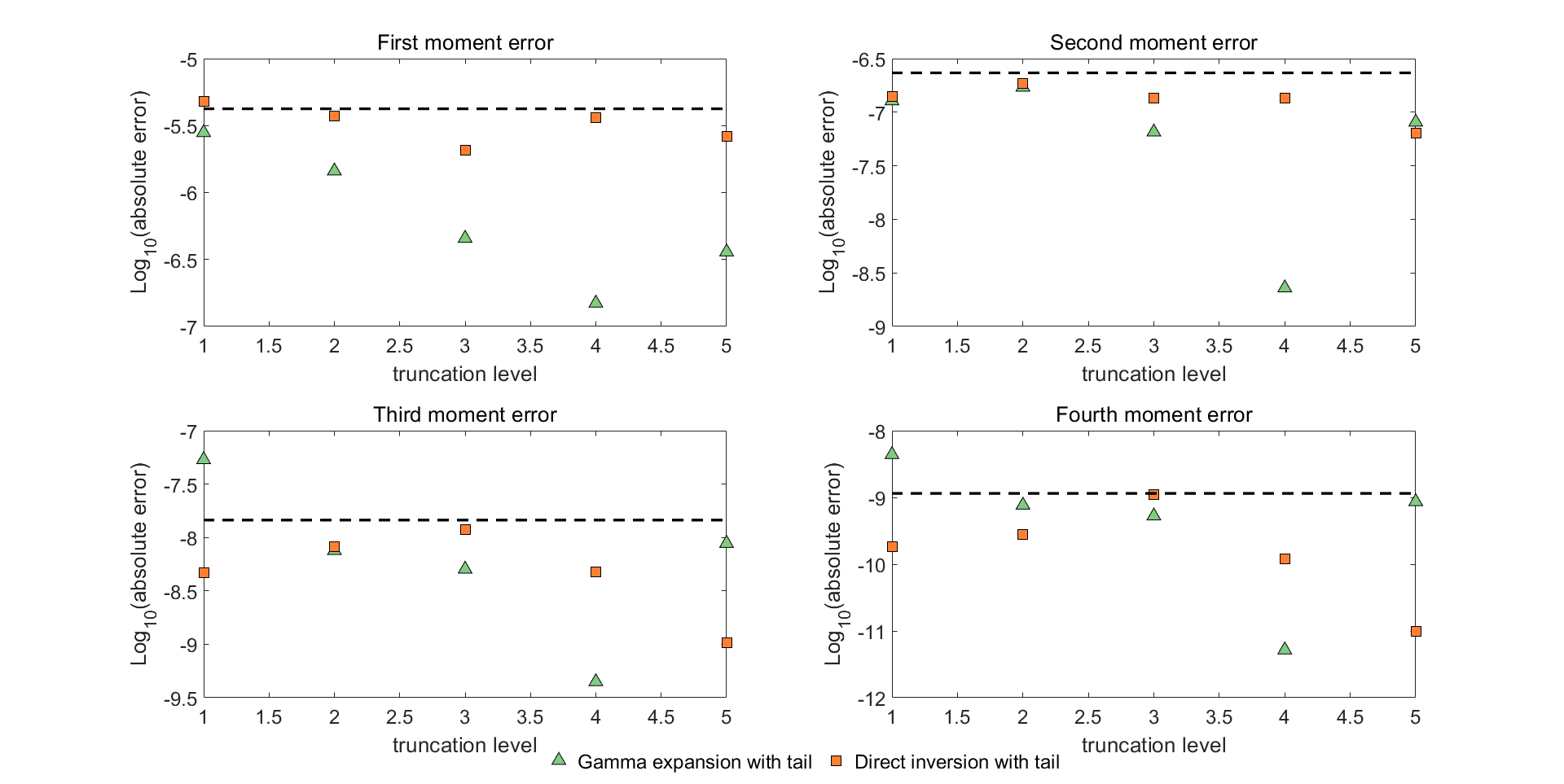}} 
    \caption{We indicate the absolute errors in the first four moments of the conditional integral $\bar{I}$ simulated by direct inversion and gamma expansion versus the truncation levels for Case $4$ with different values for $v_t$. Both methods are implemented with tail simulation. We perform $5 \cdot 10^7$ simulations for each case. Below the dashed line, the errors are not statistically significant at the level of three standard deviations.}
    \label{fig:case4_mom_errors_50000000_1}
\end{figure}
\begin{figure}[tbhp!]
\ContinuedFloat
    \centering
    \subfloat[Case $4$: $v_0=0.010201, v_t=0.0025$]{\label{fig:case4_mom_errors_50000000_0.0025}\includegraphics[scale=0.28]{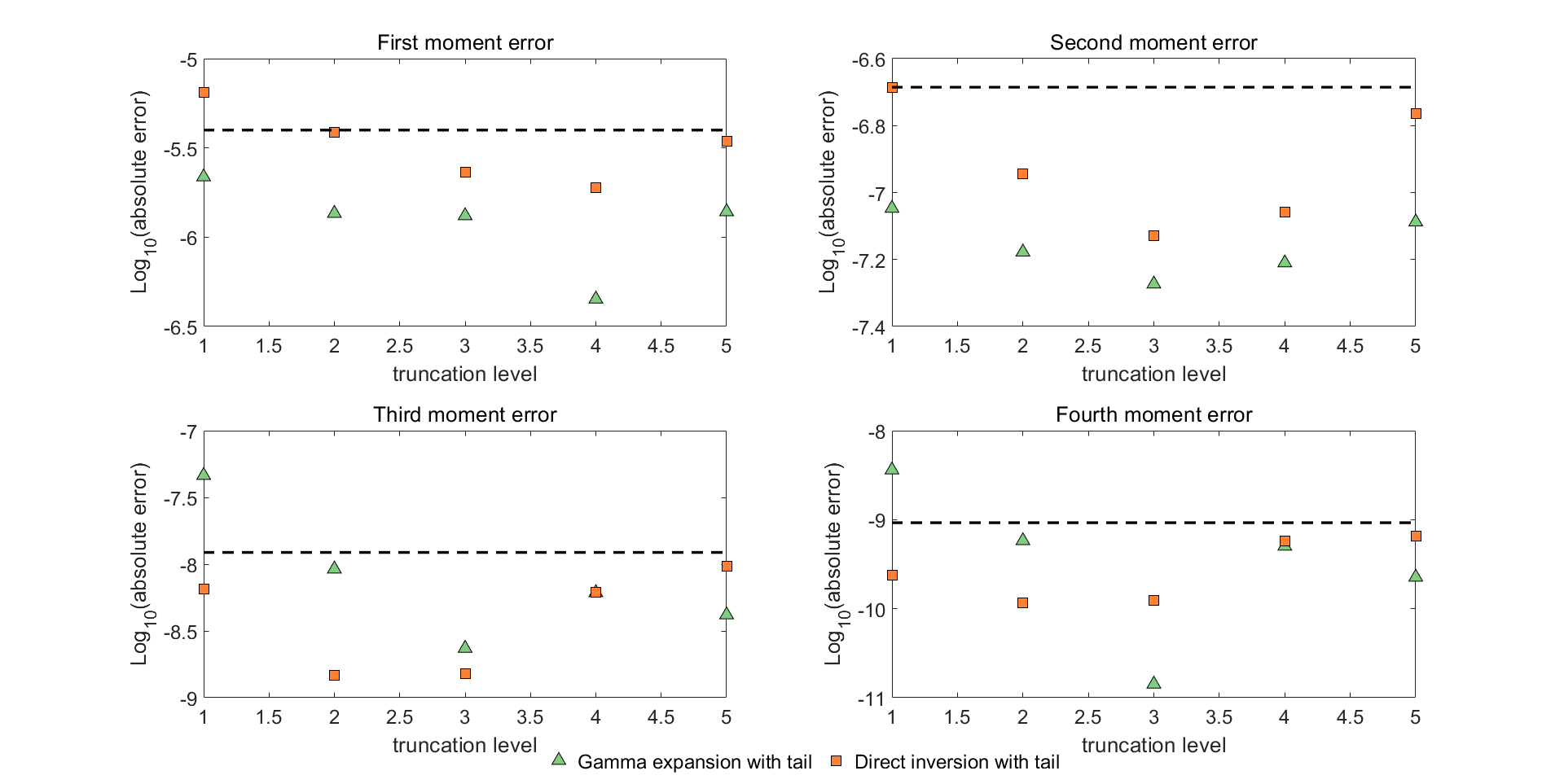}}  \\ 
    \subfloat[Case $4$: $v_0=0.010201, v_t=0.05$]{\label{fig:case4_mom_errors_50000000_0.05}\includegraphics[scale=0.28]{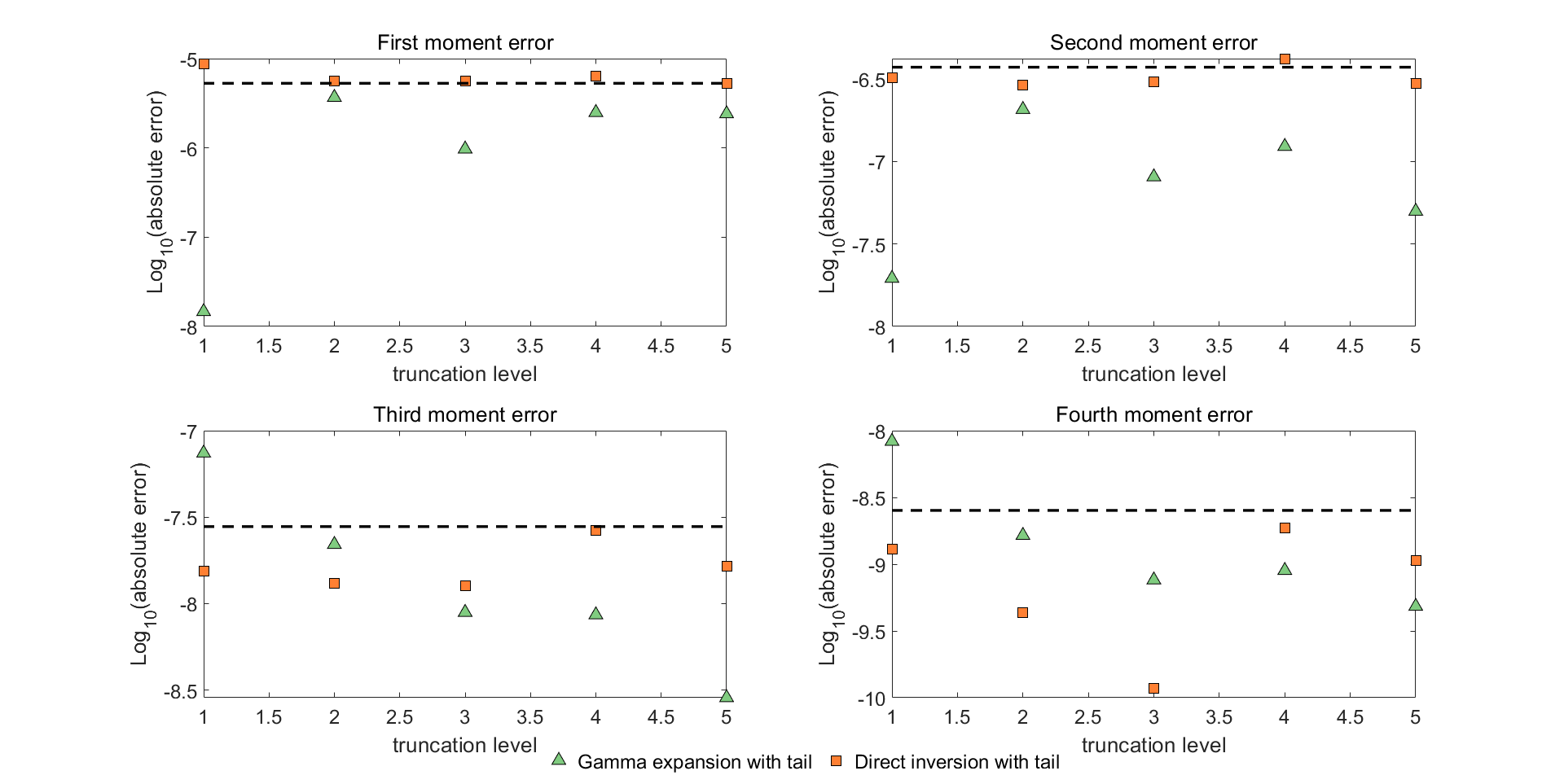}}  
    \caption{(cont.) We indicate the absolute errors in the first four moments of the conditional integral $\bar{I}$ simulated by direct inversion and gamma expansion versus the truncation levels for Case $4$ with different values for $v_t$. Both methods are implemented with tail simulation. We perform $5 \cdot 10^7$ simulations for each case. Below the dashed line, the errors are not statistically significant at the level of three standard deviations.}
    \label{fig:case4_mom_errors_50000000_2}
\end{figure}

While the figures for Case $1$ and $4$ have many details in common, they also reveal noteworthy differences in the first two moments. As illustrated in the upper panels in \cref{fig:case4_mom_errors_50000000_0.010201}, \cref{fig:case4_mom_errors_50000000_0.0025} and \cref{fig:case4_mom_errors_50000000_0.05} for Case $4$, most of the first and second moment errors in the direct inversion are slightly higher compared to those in the gamma expansion at the same truncation level. Errors of the two schemes considered in the first two moments for Case $1$ on the other hand seem to be of the same order to some extent with the same truncation level, which can be seen from the upper panels in \cref{fig:case1_mom_errors_50000000_0.04}, \cref{fig:case1_mom_errors_50000000_4} and \cref{fig:case1_mom_errors_50000000_0.000004}. In order to find a plausible explanation for this difference, we increase the sample size by a factor of $10$ and plot the resulting errors versus the truncation levels in \cref{fig:case1_mom_errors_500000000_1} and \cref{fig:case4_mom_errors_500000000_1}.
\begin{figure}[tbhp!]
    \centering
    \subfloat[Case $1$: $v_0=v_t=0.04$]{\label{fig:case1_mom_errors_500000000_0.04}\includegraphics[scale=0.28]{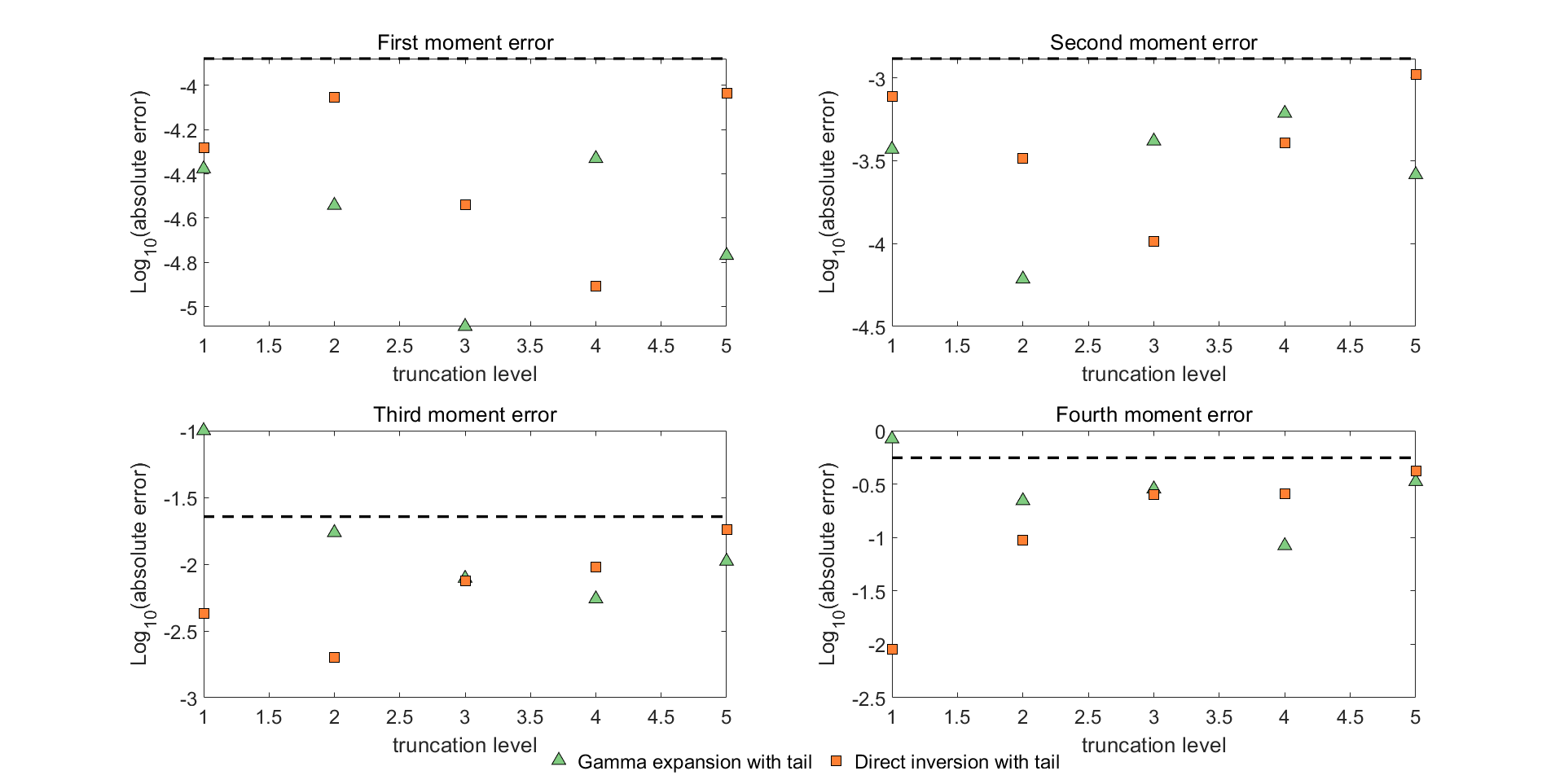}}   \\
    \subfloat[Case $1$: $v_0=0.04, v_t=4$]{\label{fig:case1_mom_errors_500000000_4}\includegraphics[scale=0.28]{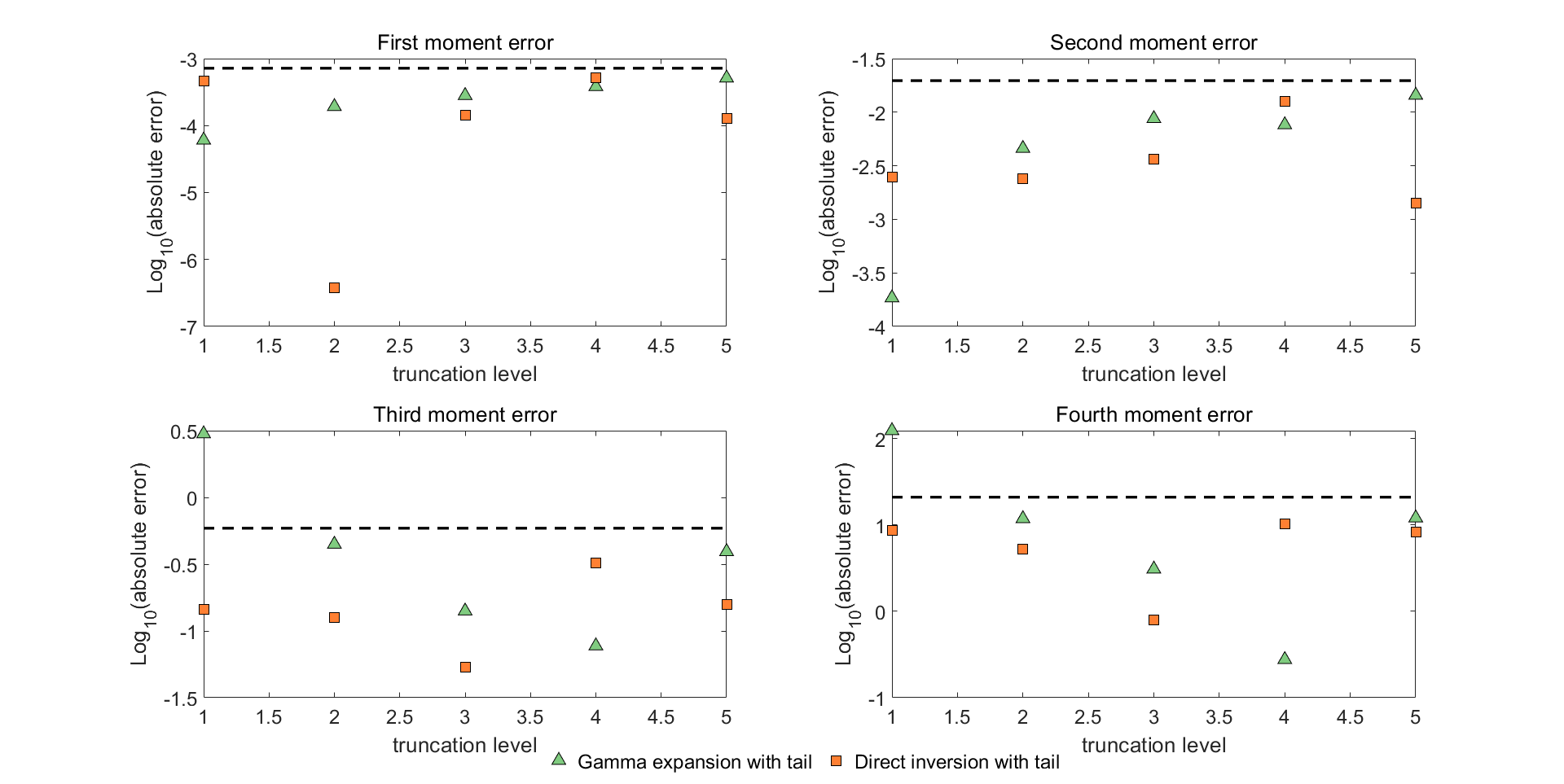}}
    \caption{We indicate the absolute errors in the first four moments of the conditional integral $\bar{I}$ simulated by direct inversion and gamma expansion versus the truncation levels for Case $1$ with different values for $v_t$. Both methods are implemented with tail simulation. We perform $5 \cdot 10^8$ simulations for each case. Below the dashed line, the errors are not statistically significant at the level of three standard deviations.}
    \label{fig:case1_mom_errors_500000000_1}
\end{figure}
\begin{figure}[tbhp!]
\ContinuedFloat 
    \centering
    \subfloat[Case $1$: $v_0=0.04, v_t=0.000004$]{\label{fig:case1_mom_errors_500000000_0.000004}\includegraphics[scale=0.28]{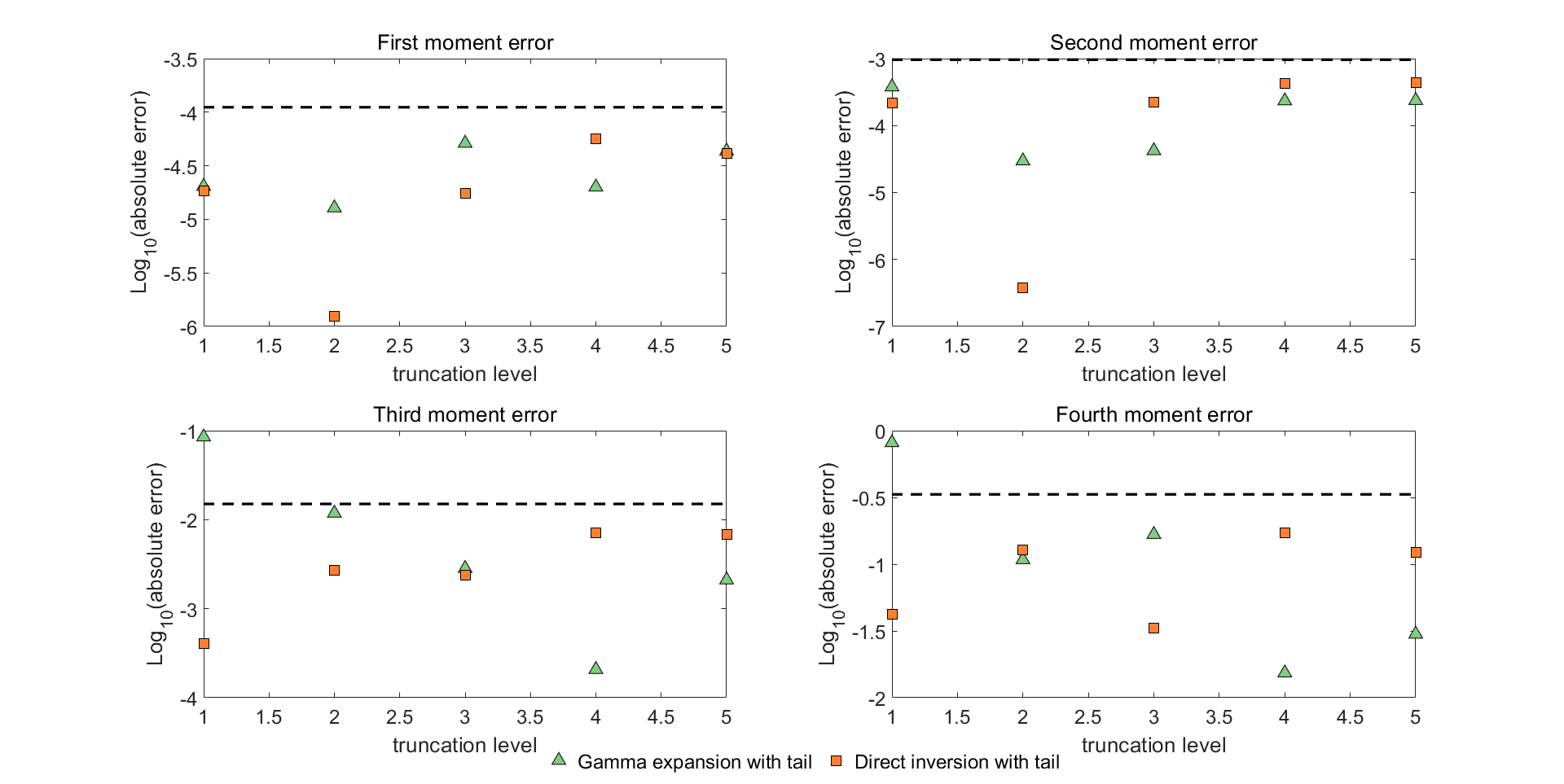}}  
    \caption{(cont.) We indicate the absolute errors in the first four moments of the conditional integral $\bar{I}$ simulated by direct inversion and gamma expansion versus the truncation levels for Case $1$ with different values for $v_t$. Both methods are implemented with tail simulation. We perform $5 \cdot 10^8$ simulations for each case. Below the dashed line, the errors are not statistically significant at the level of three standard deviations.}
    \label{fig:case1_mom_errors_500000000_2}
\end{figure}

For Case $1$, \cref{fig:case1_mom_errors_500000000_1} demonstrates the errors in all four moments based on the direct inversion are decreased as expected, i.e. proportional to the reciprocal of the square root of the sample size across all the values of $v_t$ and truncation levels. This confirms that the moment errors observed in \cref{fig:case1_mom_errors_50000000_1} using the direct inversion are dominated by the Monte Carlo error. On the other hand, for the gamma expansion we note in the upper panels of each subplots that the first two moments of the simulations for all five truncation levels are indeed matched with errors improving roughly according to the expected scaling when increasing the sample size. However, we see in the lower panels that the errors in the third and fourth moments hardly show any changes for lower truncation levels such as one and two while the accuracy for the other truncation levels is improved with the increase of the sample size. In fact, after reducing the Monte Carlo noise, there exists an even more clear decreasing trend for the higher order moment errors with the gamma expansion as the truncation level increases. This seems to corroborate the observations from \cref{fig:case1_mom_errors_50000000_1} for Case $1$, indicating that the gamma expansion with small truncation levels exhibits some bias while the direct inversion achieves the same accuracy, restricted by the Monte Carlo error, for all truncation levels.
\begin{figure}[tbhp!]
    \centering
    \subfloat[Case $4$: $v_0=v_t=0.010201$]{\label{fig:case4_mom_errors_500000000_0.010201}\includegraphics[scale=0.28]{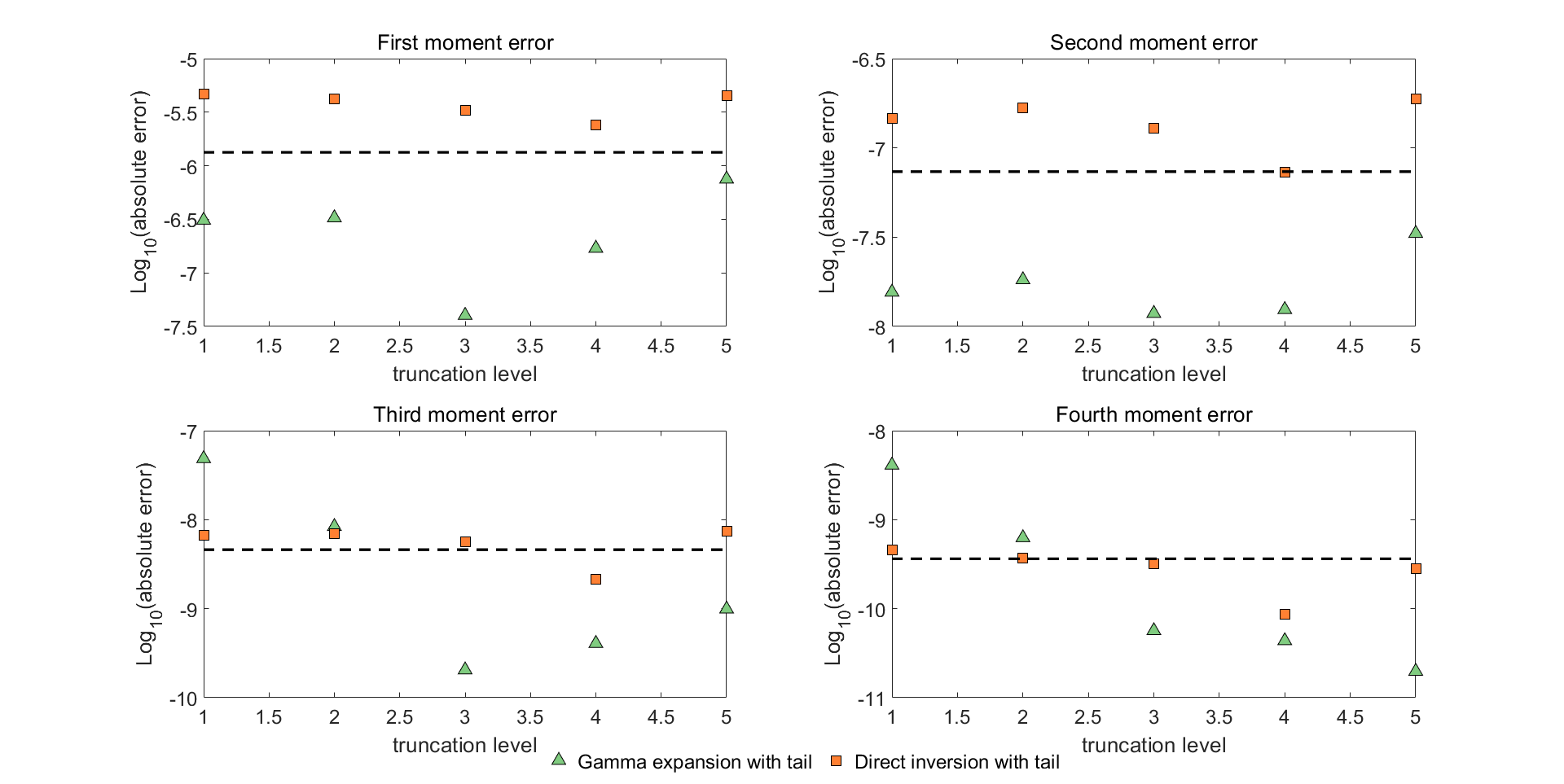}}  
    \caption{We indicate the absolute errors in the first four moments of the conditional integral $\bar{I}$ simulated by direct inversion and gamma expansion versus the truncation levels for Case $4$ with different values for $v_t$. Both methods are implemented with tail simulation. We perform $5 \cdot 10^8$ simulations for each case. Below the dashed line, the errors are not statistically significant at the level of three standard deviations.}
    \label{fig:case4_mom_errors_500000000_1}
\end{figure}
\begin{figure}[tbhp!]
\ContinuedFloat 
    \centering
    \subfloat[Case $4$: $v_0=0.010201, v_t=0.0025$]{\label{fig:case4_mom_errors_500000000_0.0025}\includegraphics[scale=0.28]{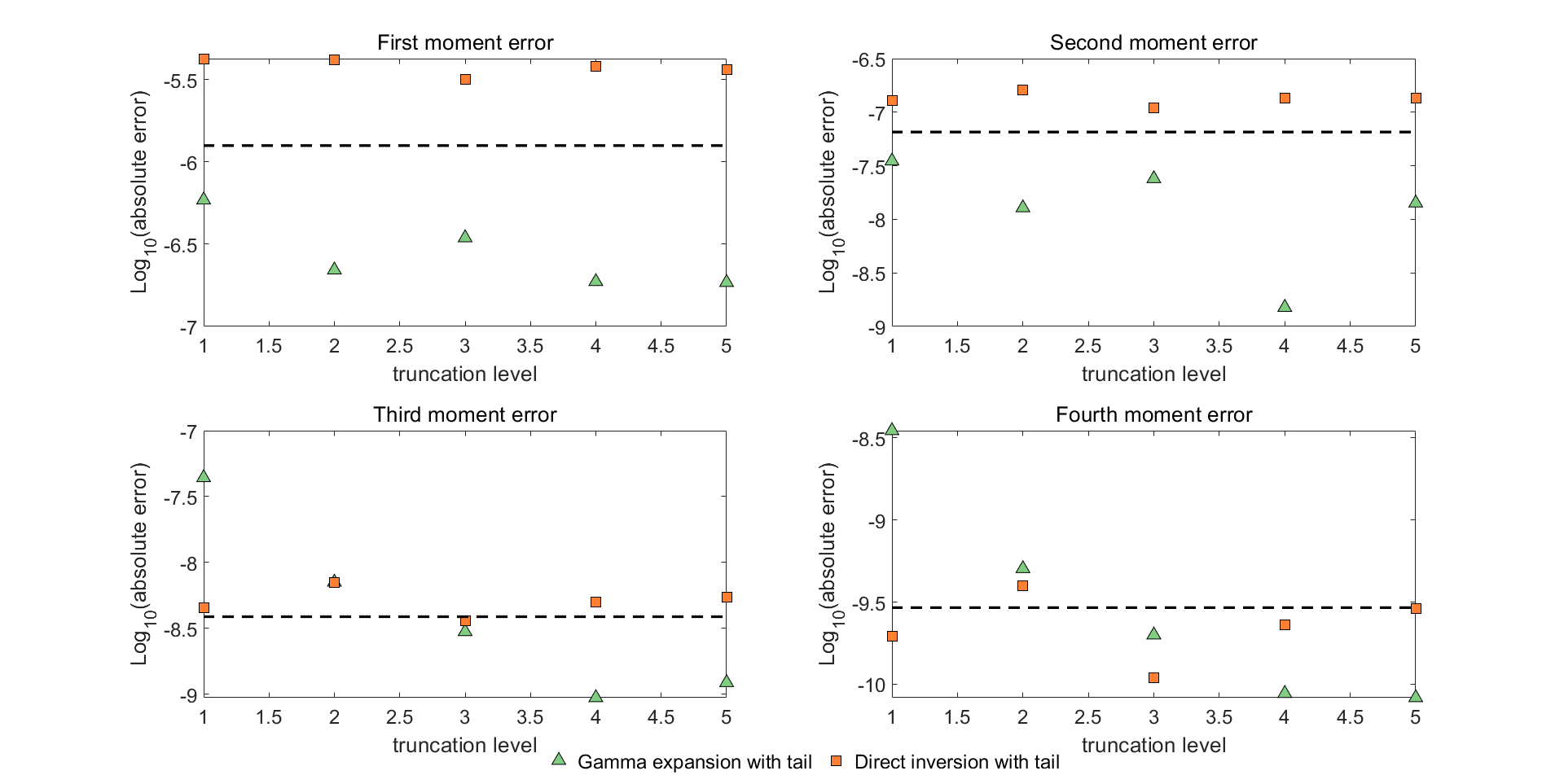}}  \\
     \subfloat[Case $4$: $v_0=0.010201, v_t=0.05$]{\label{fig:case4_mom_errors_500000000_0.05}\includegraphics[scale=0.28]{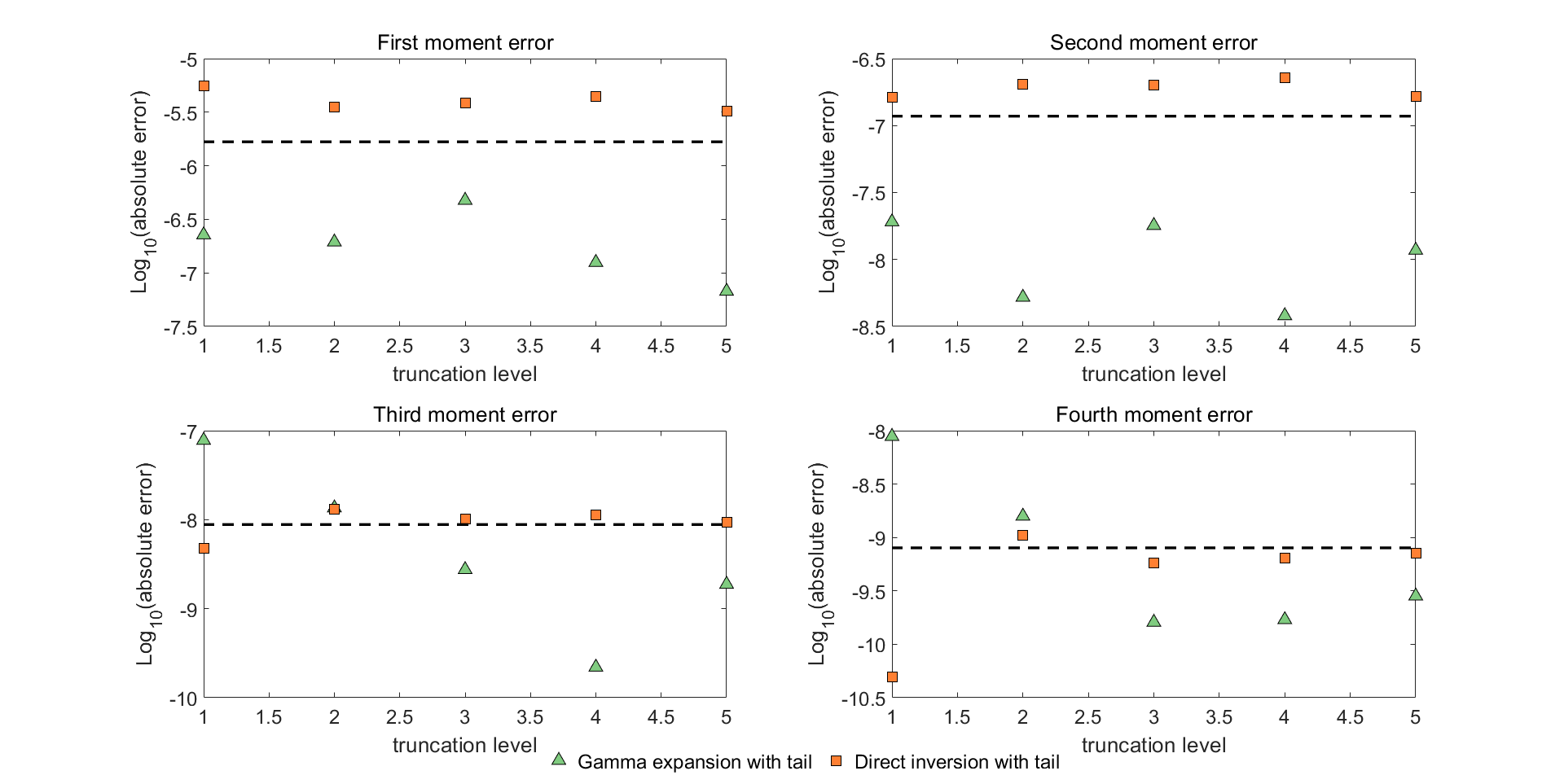}}  
    \caption{(cont.) We indicate the absolute errors in the first four moments of the conditional integral $\bar{I}$ simulated by direct inversion and gamma expansion versus the truncation levels for Case $4$ with different values for $v_t$. Both methods are implemented with tail simulation. We perform $5 \cdot 10^8$ simulations for each case. Below the dashed line, the errors are not statistically significant at the level of three standard deviations.}
    \label{fig:case4_mom_errors_500000000_2}
\end{figure}

In comparison, \cref{fig:case4_mom_errors_500000000_1} shows different behaviour for the errors related to the direct inversion for Case $4$ while similar conclusion can be reached for the gamma expansion as Case $1$. More specifically, we notice that all moment errors in direct inversion sampling for Case $4$ are invariant to increasing the sample size when the truncation levels are fixed. Further we observe that the errors, all remaining steady across a set of different truncation levels, become statistically significant when the number of samples is increased, especially for the first and second moments. Thus, this implies in Case $4$ the direct inversion performs equally well for all truncation levels, nevertheless, the accuracy of which is overridden by some bias. We should not fail to mention that the bias is roughly of the same order as the Monte Carlo error with $5 \cdot 10^7$ samples, whence it is not reflected in \cref{fig:case4_mom_errors_50000000_1}. This accounts for the finding for Case $4$ that the first and second moment errors for the direct inversion are always slightly larger than those for gamma expansion, where only Monte Carlo error is in presence.
We give a possible explanation for this bias as follows.

The reason for the bias with the direct inversion method for Case $4$ lies in the arithmetic precision we use for the parameter $h$, which is related to the random variable $X_2$. Recall that the proposed decomposition requires the rational parameter $h$ is given as a decimal with three significant figures. Let $\tilde{h}$ stand for the rounded number and $\tilde{X_2}$ denote the approximation to $X_2$ by replacing $h$ with $\tilde{h}$. Next, we give the exact errors in the first and second moments of $X_2$. Directly computing the first two moments using the series which defines $X_2$, we can write
\begin{align*}
    \mathbb{E} \left[X_2 \right]  
    = \frac{1}{3} \tau^2 h, \quad 
    \mathbb{E} \left[X_2^2 \right]  = \frac{2}{45} \tau^4 h + \frac{1}{9} \tau^4 h^2. 
\end{align*}
Then, the corresponding relative errors are
\begin{align*}
 \frac{\left\vert \mathbb{E} \left[X_2 \right]-  \mathbb{E} \left[\tilde{X_2} \right] \right\vert}{ \mathbb{E} \left[X_2 \right] } & = \frac{ \left\vert h - \tilde{h} \right\vert}{h} ,\\
    \frac{\left\vert \mathbb{E} \left[X_2^2 \right] -  \mathbb{E} \left[\tilde{X_2}^2 \right]    \right\vert}{ \mathbb{E} \left[X_2^2 \right]} & =  \frac{ \left\vert 2 \left( h - \tilde{h} \right) + 5 \left( h^2 - \tilde{h}^2 \right) \right\vert}{{2}  h + 5  h^2}.
\end{align*}
The above equations shows a linear scaling of the moment errors of $X_2$ in terms of the discrepancy between the true value $h$ and the approximated value $\tilde{h}$. \cref{tab:h} quotes the values for $h$ and $\tilde{h}$ for all the four European cases. Note that for Case $1$ and $3$ accurate values of $h$ are used while the relative errors for Case $2$ and $4$ are of order $10^{-3}$ and $10^{-4}$, respectively. In \cref{fig:moment_errors_X2} the panels show the relative errors in the first four moments of $X_2$ for Case $1$ to Case $4$ using $10^8$ and $10^9$ simulations. For Case $1$ and Case $3$, by successively increasing the sample size the high accuracy for the first four moments of $X_2$ sampled by direct inversion \cref{alg:dir-inv-Y_2^h} is indeed limited by the Monte Carlo error, which improves roughly according to the expected scale. However, the errors are invariant for Case $2$ and Case $4$ when increasing the sample size. For these two cases, the systematic Monte Carlo error is lower than the bias caused by replacing the true value $h$ with the approximated value $\tilde{h}$. Hence, the errors reflected in \cref{fig:moment_errors_X2}, dominated by the bias, fail to show improvement when the sample size is increased by a factor of $10$. 
\begin{table}[tbhp]
{\footnotesize
  \caption{True value $h$ and rounded value $\tilde{h}$.}\label{tab:h}}
\begin{center}
\begin{tabular}{ccccc} 
\hline
\addlinespace[0.5ex]
   &  \bf{ Case $\mathbf{1}$} & \bf{Case $\mathbf{2}$ } & \bf{ Case $\mathbf{3}$ } & \bf{ Case $\mathbf{4}$  }\\
\addlinespace[0.5ex]
\hline 
\addlinespace[0.5ex]
\bf{$h$} & $0.04000$   & $0.02963$    & $0.18000$  & $0.63418$    \\   
\bf{$\tilde{h}$} &  $0.04000$    &  $0.02950$       &    $0.18000$         &    $0.63400$        \\
\addlinespace[0.5ex]
\hline
\end{tabular}    
\end{center}
\end{table}
\begin{figure}[tbhp!]
    \centering
    \includegraphics[scale=0.28]{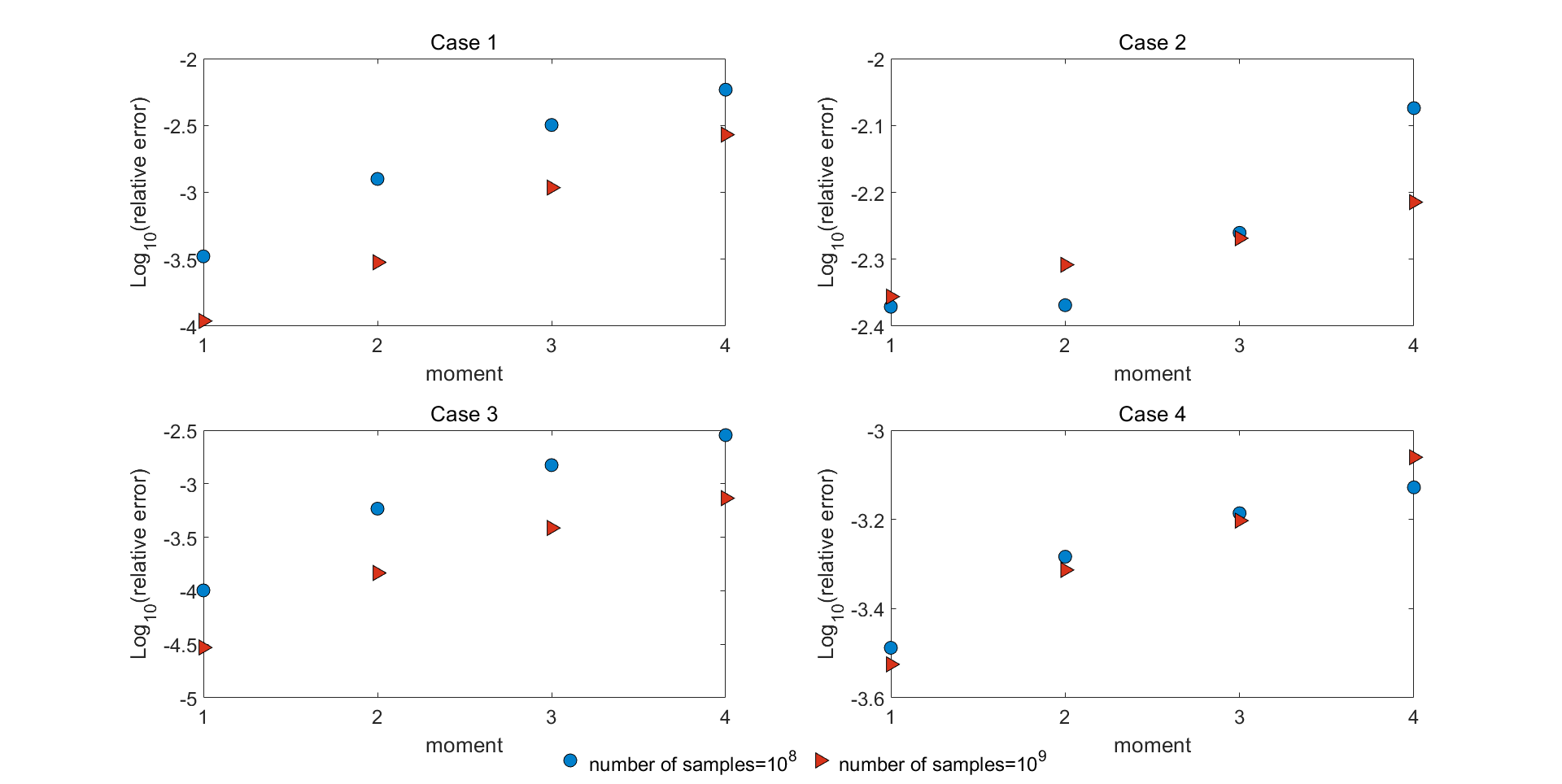}  \\
    \caption{We plot the relative errors in the first four moments of $X_2$ simulated by direct inversion \cref{alg:dir-inv-Y_2^h} for Case $1$ to Case $4$. By increasing the sample size by a factor of $10$, we note that the accuracy in the moment errors is improved as expected for Case $1$ and Case $3$. The four moment errors are invariant for Case $2$ and Case $4$ when increasing the sample size, suggesting possible bias in the direct inversion for these two cases.}
    \label{fig:moment_errors_X2}
\end{figure}

Based on the above analysis, we conclude that the direct inversion method for $X_2$ exhibits some small bias when approximation of the parameter $h$ is adopted. This can conceivably lead to the bias of the general direct inversion scheme for the conditional integral $\bar{I}$. However, this bias has nothing to do with the development of the method, but is associated with the decomposition technique and the arithmetic precision involved. Without loss of generality, this method can be extended to allow for a finer decomposition 
of the parameter $h$ given to any number of decimal places. In this sense, we expect that the accuracy available for this method will become more apparent.

\subsection{Option price}
In this section, we apply the direct inversion method and the gamma expansion to pricing four European call options and two path-dependent options including an Asian option and a barrier option. These two schemes are both based on the known conditional non-central chi-square distribution for the variance process and the conditional lognormal distribution for the asset price. We further compare the above two methods with the full truncation scheme of Lord, Koekkoek and Van Dijk \cite{lord2006comparison}, which is a time stepping method with asset price and variance simulated on discrete time grids. This type of equidistant discretization scheme for the one-dimensional CIR process has been shown to have an arbitrarily slow convergence rate in the strong sense in general; see Hefter and Jentzen \cite{Hefter2019arbitrarily}. Hence, developing other simulation methods becomes essential for practical purposes.   

\begin{figure}[tbhp!]
    \centering
    \subfloat[Case $1$]{\label{fig:option_price_case1}\includegraphics[scale=0.28]{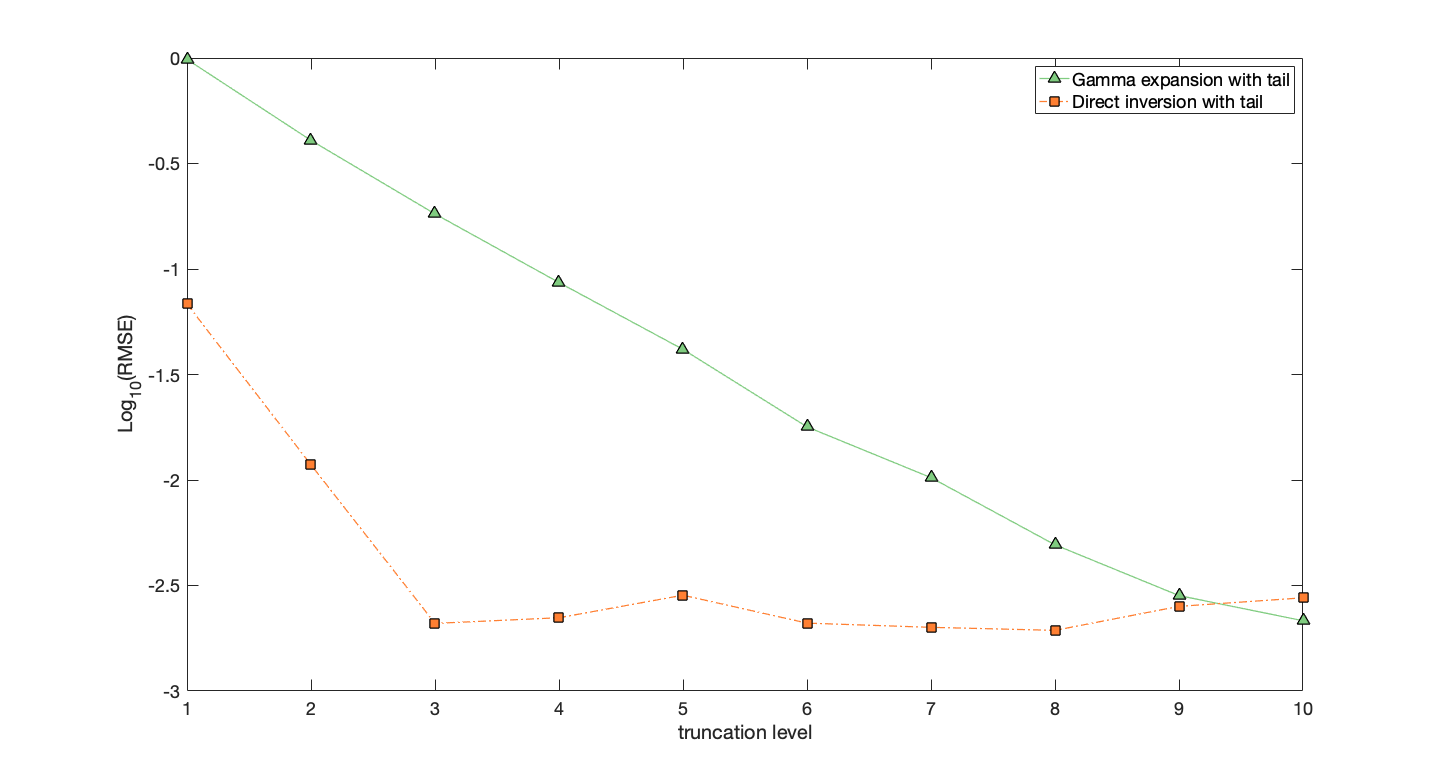}}  \\ 
    \subfloat[Case $4$]{\label{fig:option_price_case4}\includegraphics[scale=0.28]{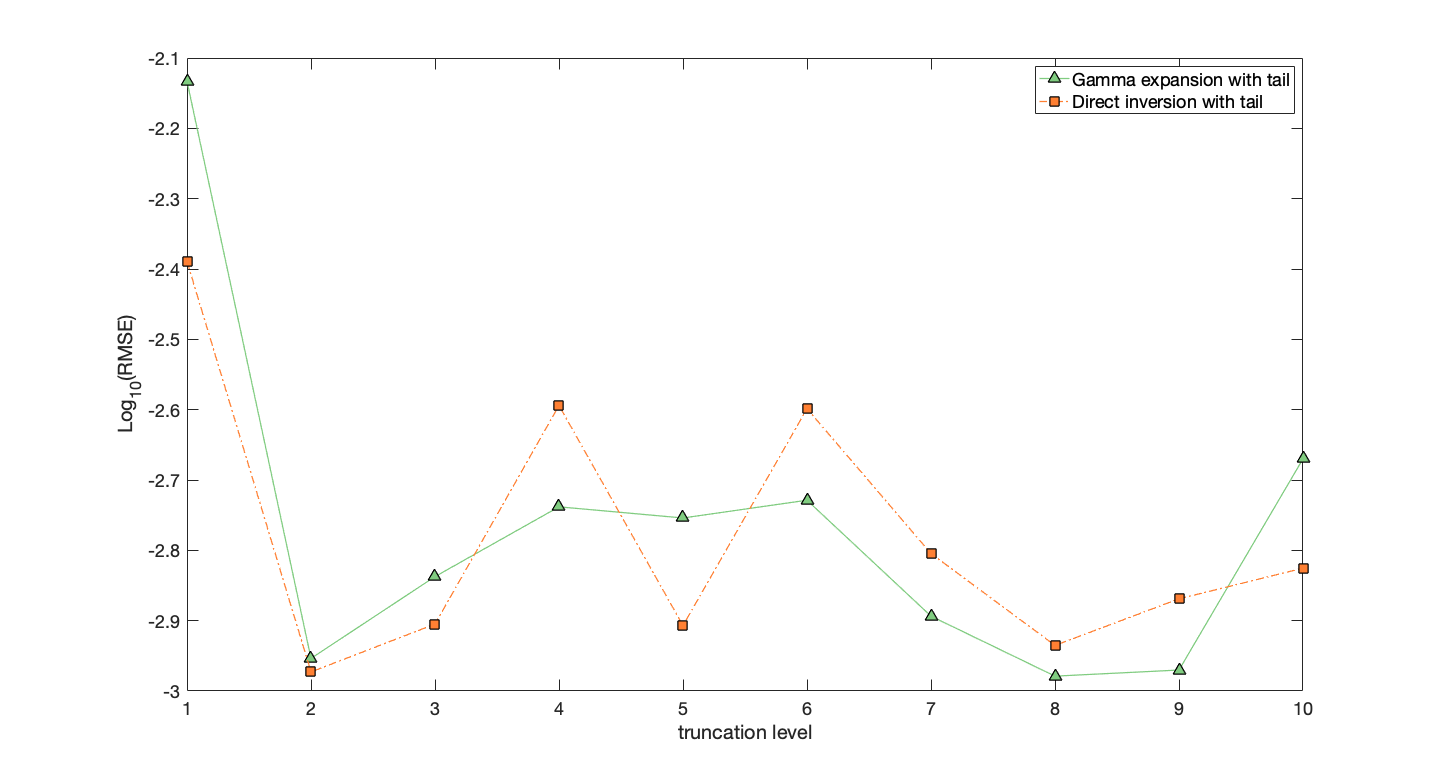}} 
    \caption{We show the root mean square error in the option price (strike $100$) versus the truncation level for Case $1$ and Case $4$. We use a sample size of $5 \cdot 10^7$ with truncation levels increasing in integers from $1$ to $10$.}
    \label{fig:option_price}
\end{figure}

We first demonstrate the tradeoff between the truncation level and accuracy for the direct inversion scheme and the gamma expansion. \cref{fig:option_price} plots the root mean square error for the price of an at the money European call option against the truncation level for Case $1$ and Case $4$. For both methods, we use a sample size of $5 \cdot 10^7$ and truncate after $M$ terms, increasing $M$ in integers from $1$ to $10$. For Case $1$, the direct inversion exhibits a faster convergence rate, revealed by the steeper slope in \cref{fig:option_price_case1}, in contrast with the gamma expansion. Indeed, truncation after three terms already provides a satisfactory estimator with error curve eventually becoming noisy in the larger $M$ regime. To obtain the same accuracy, many more terms up to $M=10$ are required for the gamma expansion. For Case $4$, increasing $M$ from one to two indeed helps to reduce the error. However, further increase in $M$ does not seem to bring improvement to the error for both methods, as seen from the horizontal error curves with small fluctuations in \cref{fig:option_price_case4}. This implies that approximations with small $M$ are sufficient to achieve acceptable accuracy.

\begin{remark}
\label{rem:numerical_results}
Similar conclusions as Case $1$ can be reached for Case $2$ and Case $3$ with numerical results presented in Shen \cite{shen2019numerical}. Simulations with different strikes for in the money and out of the money options for the above four test cases are also included there. 
\end{remark}
\begin{figure}[tbhp!]
\centering
     \centering
     \includegraphics[scale=0.28]{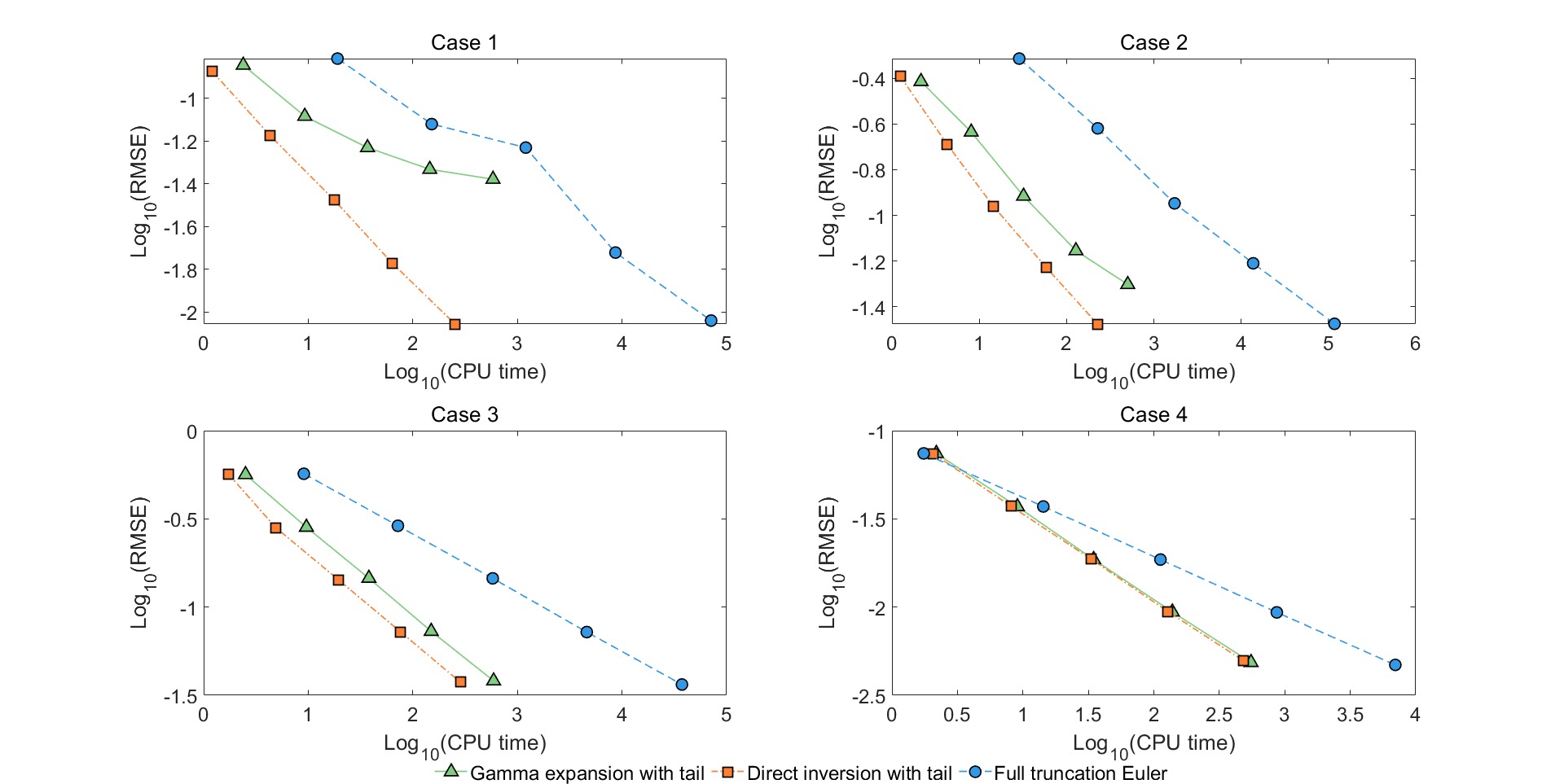}
    \caption{We show the convergence of the root mean square error in the option price (strike $100$) for Case $1$ to Case $4$ of gamma expansion and direction inversion, both at a truncation level $M=5$, and full truncation Euler scheme, with number of time steps equal to the square root of the sample size.}
    \label{fig:payoff_100}
\end{figure}

Next, we give the comparisons between the direct inversion method, the gamma expansion and the full truncation Euler scheme. In \cref{fig:payoff_100}, we plot the root mean square error in the option price as a function of the CPU time required on a log-log$_{10}$ scale for all schemes. For fair comparisons, we implement these three methods as efficiently as possible and generate the CPU time using compiled Matlab code. For the two non-discretization methods, we choose to use truncation level $M=5$. For the full truncation Euler method, we set the number of time steps equal to the square root of the sample size. This is motivated by the optimal allocation for the number of time steps from Duffie and Glynn \cite{duffie1995efficient}, which is proportional to the square root of the number of trials for methods with weak order of convergence being equal to one; see Broadie and Kaya \cite{broadie2006exact} and Lord, Koekkoek and Van Dijk \cite{lord2006comparison}. 

We can see from the upper panels in \cref{fig:payoff_100} that the bias in the gamma expansion with $M=5$ for Case $1$ and Case $2$ eventually dominates the root mean square error when the number of sample trails increases. By comparison, the root mean square errors for the direct inversion and full truncation Euler scheme are declining monotonically, with the former presenting a more rapid rate with reduced computational cost. For Case $3$ and Case $4$, the two non-discretization methods both outperform the full truncation Euler scheme, which has a slower convergence rate reflected by the less steeper slope in the graph. 

With regard to the computing time, the gamma expansion is at least two to three times slower than the direct inversion with similar accuracy for Case $1$, Case $2$ and Case $3$. For Case $4$, the new method takes much more time compared with Case $1$. This is because more effort is needed for the acceptance-rejection sampling of Case $4$ due to the slightly unfavourable values for the model parameters. Although the time needed for the direct inversion is marginally more than the gamma expansion for Case $4$ with $M=1$, as the desired accuracy is increased the new method requires less computational budget. In summary, we conclude that the performance of the direct inversion is the best among the three schemes considered here.

Now we turn to the pricing of options with payoffs depending on sample paths. We first consider an at the money Asian option with yearly fixings, the payoff of which is determined by the average of asset prices at the end of each year. We show in \cref{fig:asian} the root mean square error of the price versus the CPU time on a log-log$_{10}$ scale. For the direct inversion and the gamma expansion, we truncate the series after $M=1$ and simulate the asset prices for each year. Within one year, the terminal value is obtained directly using a single step. For the time discretization scheme, multiple steps are needed for each year. In this test, the number of time steps is taken to be the square root of the sample size in a similar manner to Broadie and Kaya \cite{broadie2006exact} and Smith \cite{smith2007almost}. 
\begin{figure}[H]
\centering
     \centering
     \includegraphics[scale=0.28]{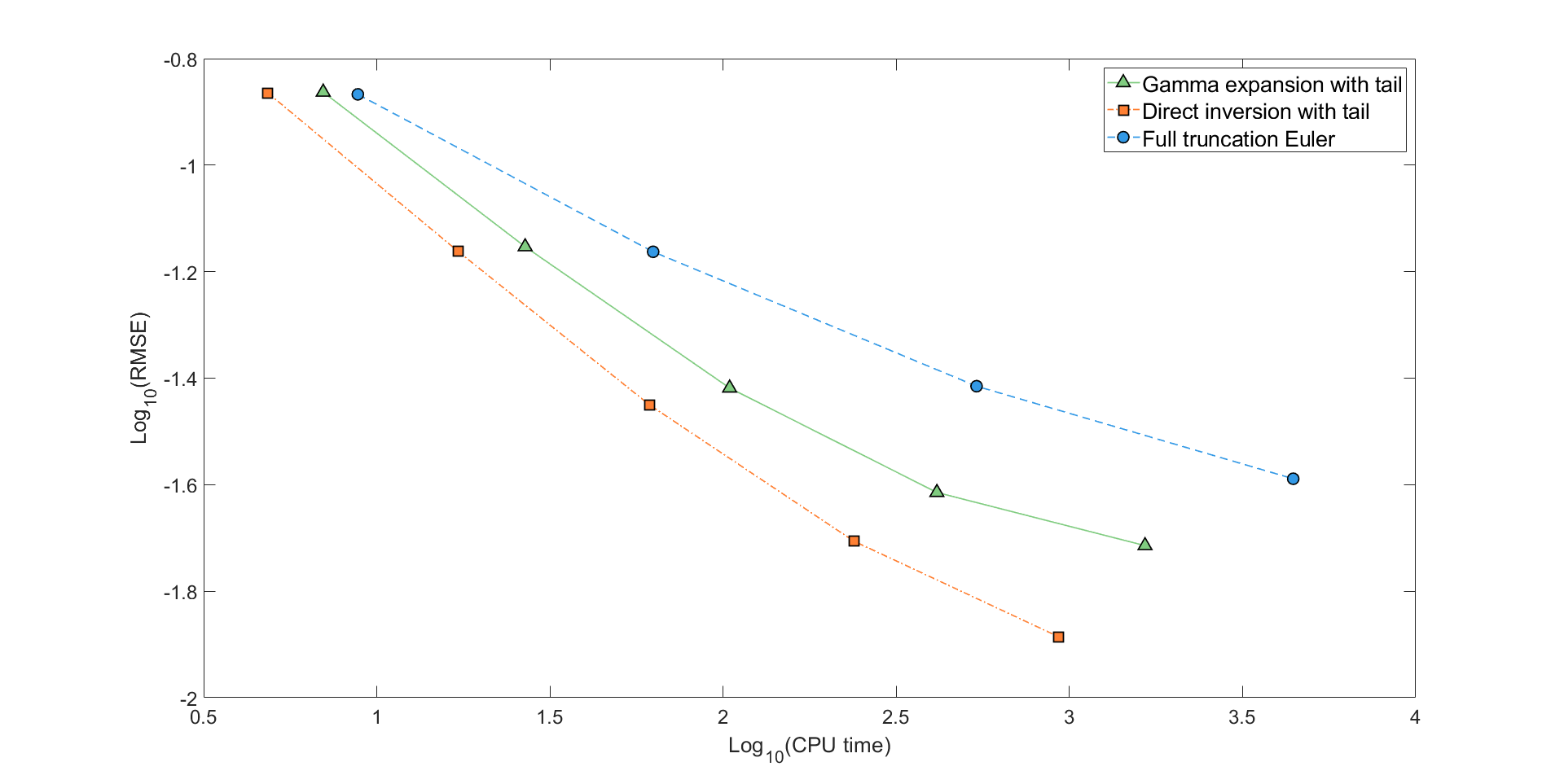}
    \caption{We show the convergence of the root mean square error in the option price (strike $100$) for the Asian option of gamma expansion and direction inversion, both at a truncation level $M=1$, and full truncation Euler scheme, with number of time steps equal to the square root of the sample size.}
    \label{fig:asian}
\end{figure}
\begin{table}[tbhp]
{\footnotesize
  \caption{Estimated prices with standard errors and CPU time for the barrier option. \label{tab:barrier}}}
\begin{center}
   \begin{tabular}{cccc}
\hline
\addlinespace[0.5ex]
\bf{Stepsize}              &                 & \bf{Direct inversion} & \bf{Gamma expansion} \\
\addlinespace[0.5ex]
\hline
\addlinespace[0.5ex]
\multirow{3}{*}{$1$}    & Estimated price & $0.68944$                    & $0.68908$                   \\
                      & Standard error & $0.00046$                    & $0.00046$                   \\
                      & CPU time        & $121.33$                  & $179.86$                 \\
\addlinespace[0.5ex] 
\hline
\addlinespace[0.5ex]                     
\multirow{3}{*}{$1/2$}  & Estimated price & $0.65891$                    & $0.65892$                   \\
                      & Standard error & $0.00047$                    & $0.00047$                   \\
                      & CPU time        & $202.52$                  & $328.76$                 \\
\addlinespace[0.5ex] 
\hline
\addlinespace[0.5ex]
\multirow{3}{*}{$1/4$}  & Estimated price & $0.63105$                    & $0.63182$                   \\
                      & Standard error & $0.00048$                    & $0.00048$                   \\
                      & CPU time        & $353.68$                  & $593.03$                 \\
\addlinespace[0.5ex] 
\hline
\addlinespace[0.5ex]
\multirow{3}{*}{$1/8$}  & Estimated price & $0.60544$                    & $0.60565$                   \\
                      & Standard error & $0.00049$                    & $0.00049$                   \\
                      & CPU time        & $653.35$                  & $1152.22$                \\
\addlinespace[0.5ex] 
\hline
\addlinespace[0.5ex]
\multirow{3}{*}{$1/16$} & Estimated price & $0.58364$                    & $0.58424$                   \\
                      & Standard error & $0.00049$                    & $0.00049$                   \\
                      & CPU time        & $1224.06$                 & $2278.41$                \\
\addlinespace[0.5ex] 
\hline
\addlinespace[0.5ex]
\multirow{3}{*}{$1/32$} & Estimated price & $0.56563$                    & $0.56556$                   \\
                      & Standard error & $0.00050$                    & $0.00050$                   \\
                      & CPU time        & $2173.38$                 & $3952.07$                \\
\addlinespace[0.5ex] 
\hline
\addlinespace[0.5ex]
\multirow{3}{*}{$1/64$}                  & Estimated price & $0.54983$                    & $0.54943$                   \\
                      & Standard error & $0.00050$                    & $0.00050$                   \\
                      & CPU time        & $4984.79$                 & $9904.97$                \\
\addlinespace[0.5ex] 
\hline
\addlinespace[0.5ex]
\multirow{3}{*}{$1/128$}                 & Estimated price & $0.53743$                    & $0.53763$                   \\
                      & Standard error & $0.00050$                    & $0.00050$                   \\
                      & CPU time        & $8020.93$                 & $24521.84$ \\
\addlinespace[0.5ex]
\hline
\end{tabular} 
\end{center}
\end{table}

We observe that both the gamma expansion and the direct inversion, even with a lower truncation level, deliver similar accuracy compared to the full truncation scheme for small sample sizes. However when the number of simulations increases, bias of the estimated price starts to dominate the root mean square error for all three methods with the standard deviation decreasing according to the expected scaling, i.e. the inverse of the square root of the sample size. This dominance by the bias eventually decelerates the decrease of the root mean square error. Among the above three methods, the direct inversion produces the smallest bias. In terms of the computing time, very similar conclusions can be drawn as the European option cases. For similar accuracy, the direct inversion is approximately $2$ to $7$ times faster than the gamma expansion. The time required by the full truncation Euler scheme is by far the largest.

We end this section with a test for pricing a digital double no touch barrier option. The payoff for such an option is either one or zero unit of currency depending on whether the barriers have been crossed. In \cref{tab:barrier}, we report the estimated price and standard error together with the CPU time of the direct inversion and the gamma expansion at truncation level $M=1$ for a double no touch barrier option with barriers $90$ and $110$. We sample a total of $10^6$ paths for each case. We increase the number of time steps per year from $1$ to $128$ and monitor at each time step if the asset price has hit one of the two barriers. 

We see from \cref{tab:barrier} that as we decrease the stepsize, the estimated price of both the direct inversion and the gamma expansion is decreasing monotonically. This is in accordance with our expectation since when more dates are being monitored, there are more chances for the asset price to cross the barriers. Because of the nature of these two methods, we expect their estimated price will eventually be almost exact with negligible truncation errors when the asset price is monitored on a more frequent basis, for instance, every trading day. The results here are also consistent with those of the four schemes tested in Malham and Wiese \cite[Table $5$]{malham2013chi} and the PT, FT and ABR scheme in Lord, Koekkoek and Van Dijk \cite[Table $7$]{lord2006comparison} in terms of accuracy. Similar conclusions can be reached as the cases for European and Asian options in terms of the computational time. The time required for the gamma expansion is $1.5$ to $3$ times more than the direct inversion.

\section{Conclusion}
\label{sec:conclusion}
In this paper, we have designed a new series expansion for the time integrated variance process under the Heston stochastic volatility model. Our expansion is built on a change of measure argument and the decomposition techniques for the integral of squared Bessel bridges by Pitman and Yor \cite{pitman1982decomposition} and Glasserman and Kim \cite{glasserman2011gamma}. Acceptance-rejection and direct inversion methods are developed to realise the conditional integral. On combining this result with the method of Broadie and Kaya \cite{broadie2006exact}, almost exact simulations of the stock price and variance can be generated on the basis of their exact distributions. We compare our approach with Glasserman and Kim \cite{glasserman2011gamma} through pricing four practical and challenging options. Apart from that, two path-dependent options including an Asian option and a barrier option are also tested using the above two methods. Further comparisons with a standard time discretization method, i.e. the full truncation Euler scheme, are performed as well. Evidence implies faster computational speed with comparable error in our method.

The series representation and sampling techniques above can also be transferred to the generalised squared Ornstein-Uhlenbeck process $x_t$ with parameters $b \in \mathbb{R}$ and $\delta > 0$ given by
\begin{align*}
d x_t = \left( \delta + 2 b x_t \right) \, dt + 2 \sqrt{x_t} \, dW_t,
\end{align*}
where $W_t$ denotes a standard Brownian motion. Although in this paper we focus only on the case $ 0 < \delta <2$, the present result can be applied to other cases $\delta \geq 2$. In essence we need to find an appropriate decomposition for $\delta$ and hence establish efficient Chebyshev polynomial approximations required for the resulting direct inversion algorithm. The expansions derived in \cref{sec:simulation} will be helpful in determining the coefficients.

Lastly, we recommend a direction for future research. Our method entails an acceptance-rejection algorithm with acceptance probability depending on model parameters. Thus, it is difficult to measure its general computational complexity, i.e. the average number of iterations needed. Besides, in the application of risk management and trading, the acceptance-rejection scheme is less favourable as it will introduce considerable Monte Carlo noise in sensitivity analysis. For these reasons, an alternative should be considered. One realistic way to avoid the use of acceptance-rejection is to sample the Radon-Nikod\'{y}m derivative directly under the new probability measure instead.

\appendix
\section{Proof of \cref{lem:moments_of_remainder}}
\label{sec:appendix_proof_3.1}
For the remainder $R_1^K$, as stated in \cref{thm:Series_Rep}, the $S_{n,k}$ are independent and identically distributed random variables and the $P_n$ are independent Poisson random variables with mean $ \left( a_0 + a_\tau \right) 2^{n-1}/ \tau$. Taking the expectation of $R_1^K$ directly, we have
\begingroup
\allowdisplaybreaks
\begin{align*}
    \mathbb{E} \left[ R_1^K \right] 
    & =  \sum_{n=K+1}^\infty \frac{\tau^2}{4^n} \mathbb{E} \left[ P_n \right]  \mathbb{E} \left[  S_{n,k}    \right] \\
    & = \sum_{n=K+1}^\infty \frac{\tau^2}{4^n}   \left( \frac{a_0+a_\tau}{\tau} 2^{n-1} \right) \left( \frac{2}{\pi^2} \sum_{l=1}^\infty \frac{1}{l^2} \right) \\
    & =  \frac{ \left( a_0+a_\tau \right) \tau}{6} \frac{1}{2^K},
\end{align*}
\endgroup
where the last identity holds since
$\sum_{l=1}^\infty l^{-2} = \pi^2/6$ and 
$\sum_{n=K+1}^{\infty}  2^{-\left( n+1\right)} = 2^{-\left( K+1 \right)}$. Similarly, we can compute
\begingroup
\allowdisplaybreaks
\begin{align*}
    \mathrm{Var}\left[ R_1^K  \right]  
    & =  \sum_{n=K+1}^\infty \frac{\tau^4}{16^n}  \left( \mathrm{Var} \left[  P_n \right] \left( \mathbb{E} \left[ S_{n,k} \right] \right)^2 + \mathbb{E} \left[ P_n \right] \mathrm{Var} \left[ S_{n,k} \right]           \right) \\
    & = \sum_{n=K+1}^\infty \frac{\tau^4}{16^n} \left(    \frac{ a_0 + a_\tau  }{\tau} 2^{n-1} \right) \left(  \left(  \frac{2}{\pi^2} \sum_{l=1}^\infty \frac{1}{l^2} \right)^2 +  \frac{4}{ \pi^4} \sum_{l=1}^{\infty} \frac{1}{l^4}     \right) \\
    & =  \frac{ \left( a_0+a_\tau \right) \tau^3}{90} \frac{1}{8^K} ,
\end{align*}
\endgroup
where we use the formulae $\sum_{l=1}^\infty l^{-4} = \pi^4/90$ and $\sum_{n=K+1}^{\infty}  8^{-n} = 8^{-K} / 7$.

For the remainder $R_2^K$, similar to the calculations for the moments of $R_1^K$, we find  \begin{align*}
    \mathbb{E} \left[ R_2^K \right] & = \sum_{n=K+1}^\infty \frac{\tau^2}{4^n} \mathbb{E} \left[ C_{n}^{\delta/2} \right] \\
     & =  \sum_{n=K+1}^\infty \frac{\tau^2}{4^n} \left( \frac{2}{\pi^2} \sum_{l=1}^\infty    \frac{\delta}{2}    \frac{ 1  }{\left(l - \frac{1}{2} \right)^2}   \right) \\
     & = \frac{\delta \tau^2}{6} \frac{1}{4^K}.
\end{align*}
Note that the last step is a direct result of $\sum_{l=1}^\infty \left( 2 l -1 \right)^{-2} = \pi^2/8$ and $ \sum_{n=K+1}^\infty 4^{-n} = 4^{-K}/3$. Further we can proceed with the computation of the variance:
\begingroup
\allowdisplaybreaks
\begin{align*}
    \mathrm{Var} \left[ R_2^K \right] & = \sum_{n=K+1}^\infty \frac{\tau^4}{16^n} \mathrm{Var} \left[C_{n}^{\delta/2} \right] \\
    & = \sum_{n=K+1}^\infty \frac{\tau^4}{16^n}  \left(  \frac{4}{\pi^4} \sum_{l=1}^\infty \frac{\delta}{2} \frac{1}{\left( l- \frac{1}{2} \right)^4} \right) \\
    & =\frac{\delta \tau^4}{45} \frac{1}{16^K},
\end{align*}
\endgroup
in which we apply $\sum_{l=1}^\infty \left(2l-1 \right)^{-4} = \pi^4/96$ and $\sum_{n=K+1}^\infty 16^{-n} = 16^{-K}/15  $.

\section{Proof of \cref{thm:Asymp_Exp_pdf_Z^P}}
\label{sec:appendix_proof_3.2}
We use the method of steepest descents to find the limiting behaviour as $P\to +\infty$ of the probability density function $f_{Z^P}$ of the standardised random variable $Z^P$ given by 
\begin{align}
f_{Z^P}\left(x\right)  = \frac{1}{4 \pi} \sqrt{\frac{2 P}{45}}  \int_{- \infty}^{+ \infty} \exp{ \left(P \rho \left( z; \beta \right) \right) } \, dz,
\label{eq:f_^P_integral}
\end{align}
where $\beta=x \sqrt{2/45}  / \sqrt{P}$ and 
\begin{align*}
\rho \left( z; \beta \right) = \log{ \left( \frac{\sqrt{z \mathrm{i}}}{\sinh{\sqrt{z \mathrm{i}}}} \right)} + z \mathrm{i} \left( \frac{1}{6} + \frac{1}{2} \beta \right).
\end{align*}
The complex logarithm is chosen to be the principal branch. Note that the support of $Z^P$ is $\left[ - \sqrt{5P} / \sqrt{2}, + \infty \right)$ and hence we focus on the case when $\beta \geq -1/3$.

We are interested in the saddle point $z_0$ such that $\rho^{\prime} \left( z_0;\beta \right)=0$. Before that, let 
\begin{align*}
    \varsigma \left( z \right) \coloneqq \frac{ \sqrt{z \mathrm{i} }}{ \sinh{\sqrt{z \mathrm{i} }}},
\end{align*}
which gives 
\begin{align*}
    \rho \left( z; \beta \right) = \log{ \left( \varsigma \left( z \right)\right)} + z \mathrm{i} \left( \frac{1}{6} + \frac{1}{2} \beta \right).
\end{align*}
We observe $\varsigma \left( z \right)$ and $\rho \left( z; \beta \right)$ are  analytic when $\mathrm{Im} \left( z \right) < \pi^2$ after defining $\varsigma\left( 0 \right) \coloneqq 1$ and $\rho \left( 0; \beta \right) \coloneqq 0$.

Bleistein and Handelsman \cite[Chapter $7.6$]{bleistein1986asymptotic} suggest that we should seek a saddle point near $z=0$, which will be the dominant one. To obtain an explicit form for the saddle point, we take advantage of the series expansions of $\rho \left( z;\beta \right)$ and its differentiation $\rho^{\prime} \left( z; \beta \right)$. Let us first consider the series expansion of $\rho \left( z;\beta \right)$ about $z=0$, which is of the form
\begin{align}
\rho \left(z; \beta \right) = \frac{1}{2} \beta \mathrm{i} z + \sum_{k=2}^{\infty} \hat{r}_k    z^k,
\label{eq:rho_taylor_0}  
\end{align}
where 
\begin{align}
    \label{eq:coeff_r}
   \hat{r}_2  = - \frac{1}{180}, \quad \hat{r}_3  = \frac{\mathrm{i}}{2835}
\end{align}
and so forth. Although we can compute the coefficients $\hat{r}_k$ analytically through Taylor expansion of $\rho \left( z;\beta \right)$ up to any order, we compute them using Maple in practice. Hence, its differentiation can be expressed as 
\begin{align*}
\rho^{\prime} \left( z; \beta \right) = \frac{1}{2} \beta \mathrm{i} + \sum_{k=2}^{\infty} k \hat{r}_k z^{k-1} .
\end{align*}
The above two series converge pointwise in the domain where $\left\vert z \right\vert < \pi^2$. It seems that a precise form for the saddle point $z_0$ such that $\rho^{\prime} \left( z_0;\beta \right)=0$ is not obtainable. However, we can get a good approximation by making use of the smallness of $\beta$. For $\beta \ll 1$, i.e. $\left\vert x \right\vert \ll \sqrt{P} / \sqrt{2/45}$, we solve the above equation by iteration. In fact, after two iterations, we have the following expression for the desired saddle point
\begin{align*}
z_0= 45 \mathrm{i} \beta - \frac{1350}{7} \mathrm{i} \beta^2 + \mathrm{O} \left( \beta^3 \right).
\end{align*}

We show in \cref{sec:appendix_unique_existence} that the saddle point exists and is unique in the domain of analyticity of $\rho \left( z; \beta \right)$, i.e. $\left\{ z \in \mathbb{C} : \mathrm{Im} \left(z\right) < \pi^2 \right\}$ for any $\beta > -1/3$, i.e. $x > -\sqrt{5P/2}$. Further since $\rho^{\prime} \left( z; \beta \right)$ is analytic in this region, we can show the solution $z_0 = z_0 \left( \beta \right)$ is also analytic around $\beta = 0$ by the analytic implicit function theorem and thus $z_0 = z_0 \left( \beta \right)$ has a Taylor series expansion about $\beta =0$ valid for sufficiently small $\left\vert \beta \right\vert$. Hence we can write
\begin{align}
z_0 = \beta \sum_{k=0}^{\infty} \hat{\xi}_k \beta^k
\label{eq:saddle_point}
\end{align}
for sufficiently small $\left\vert \beta \right\vert$, where
\begin{align}
    \label{eq:coeff_xi}
\hat{\xi}_0 = 45 \mathrm{i}, \quad \hat{\xi}_1 = - \frac{1350 \mathrm{i}}{7}  
\end{align}
and so on. Again, all these coefficients $\hat{\xi}_k$ are calculated via Maple in practice. Notice that the saddle point $z_0$ is near the origin and along the imaginary axis. 

Now, we can expand $\rho \left( z; \beta \right)$ as a Taylor series near the saddle point $z_0$
\begin{align}
\rho \left( z; \beta \right) =\rho \left( z_0; \beta \right) + \frac{1}{2!} \rho^{\prime \prime} \left( z_0; \beta \right) \left( z-z_0 \right)^2 + \left( z-z_0 \right)^3 \sum_{k \geq 0} \frac{\rho^{ \left( k+3 \right)} \left( z_0; \beta \right)}{ \left( k+3 \right)!} \left( z-z_0 \right)^{k},
\label{eq:rho_taylor_z0}
\end{align}
which converges in a neighbourhood of $z_0$. For preparations, we must evaluate $\rho^{ \left( k \right)} \left( z; \beta \right)$ for $k \geq 2$ at $z=z_0$. Differentiating the series expansion for $\rho$ in equation \cref{eq:rho_taylor_0} leads to 
\begin{align*}
\rho^{ \left( n \right) } \left( z; \beta \right)  = \sum_{k=n}^{\infty} k \left( k-1 \right) \cdots \left( k-n+1 \right) \hat{r}_k z^{k-n} 
 = \sum_{k=n}^{\infty} \hat{\varphi}_{k,n} z^{k-n}
\end{align*}
for $n \geq 2$, where 
\begin{align}
    \label{eq:coeff_varphi}
\hat{\varphi}_{k,n} \coloneqq k \left( k-1 \right) \cdots \left( k-n+1 \right) \hat{r}_k  
\end{align}
for $k \geq n$. This series converges in the same domain as the expansion for $\rho$; see equation \cref{eq:rho_taylor_0}. To substitute the Taylor series \cref{eq:saddle_point} regarding the saddle point $z_0$ into the above equation, we require $z_0$ to be within its radius of convergence, i.e. $\left\vert z_0 \right\vert < \pi^2$. Hence we restrict $\beta$ such that $\beta > -1/3 - \left( 1- \pi \coth{\pi} \right) / \pi^2$ (see \cref{rem:beta_restriction}). By noting that 
\begin{align}
    z_0 ^ j  = 
    \beta^j \sum_{l=0}^{\infty} \hat{\upsilon}_{l,j} \beta^l
\label{eq:z0_product}
\end{align}
for $j \geq 0$, where 
\begin{align}
\label{eq:coeff_upsilon}
\begin{split}
 \hat{\upsilon}_{0,j} & =  {\hat{\xi}_0}^j , \\
\hat{\upsilon}_{l,j} & =  \frac{1}{l \hat{\xi}_0} \sum_{k=1}^l \left( kj - l + k \right) \hat{\xi}_k \hat{\upsilon}_{l-k,j}, \quad  \text{for}  \quad l \geq 1,   
\end{split}
\end{align}
we get that for $n \geq 2$,
\begin{align}
\rho^{ \left( n \right)} \left( z_0; \beta \right)   =
   \sum_{l=0}^{\infty} \hat{\phi}_{l,n} \beta^l
 \label{eq:rho_taylor_diff_z0}
\end{align}
for $\beta > -1/3 - \left( 1- \pi \coth{\pi} \right) / \pi^2$ and $\left\vert \beta \right\vert$ sufficiently small, where 
\begin{align}
    \label{eq:coeff_phi}
\hat{\phi}_{l,n} =  \sum_{k=0}^l \hat{\varphi}_{n+l-k,n} \hat{\upsilon}_{k,l-k}  
\end{align}
for $ l \geq 0$.

With the completion of the foregoing, we are now ready to determine the paths of steepest descent through $z_0$ given by 
\begin{align*}
    \mathrm{Im} \left( \rho \left( z; \beta \right) \right) - \mathrm{Im} \left( \rho \left( z_0; \beta \right) \right)  = 0 \Longleftrightarrow 
    \mathrm{Im} \left( \rho \left( z; \beta \right) - \rho \left( z_0; \beta \right) \right) =0.
\end{align*}
The condition determining the paths of steepest descent just above, with an error $\mathrm{O} \left( \left( z -z_0 \right)^3 \right)$, is given by 
\begin{align}
    \mathrm{Im} \left( \frac{1}{2!} \rho^{\prime \prime} \left( z_0;\beta \right) \left( z-z_0 \right)^2    \right) = 0. \label{eq:steepest_descent}
\end{align}
Direct computation reveals that
\begin{align*}
    \rho^{\prime \prime} \left( z_0;\beta \right) < 0.
\end{align*}
If we set $z \coloneqq u + \mathrm{i} v$ for $ u , v \in \mathbb{R}$, then \cref{eq:steepest_descent} implies that the paths of steepest descent and ascent from $z_0$ lie along the curves
\begin{align*} 
      2 u \left( v - \mathrm{Im} \left( z_0 \right) \right) = 0
\end{align*}
as $z_0$ is purely imaginary. These paths, close enough to the saddle point $z_0$, that is when $\left\vert z- z_0 \right\vert$ is small, consist of the straight lines $u=0$ and $v = \mathrm{Im} \left( z_0 \right)$. To distinguish between the ascent and descent paths, we consider $\mathrm{Re} \left( \rho \left( z; \beta \right) \right)$ along the two lines near $z=z_0$. Along $u=0$, we have
\begin{align*}
    \mathrm{Re} \left( \rho \left( z ; \beta \right) \right) 
    = \mathrm{Re} \left(  \rho \left( z_0; \beta \right) \right) -\frac{1}{2!} \rho^{\prime \prime} \left( z_0; \beta \right)   \left( v - \mathrm{Im} \left( z_0 \right) \right)^2 + \mathrm{O} \left( \left\vert z- z_0 \right\vert^3 \right) 
     \geq  \mathrm{Re} \left(  \rho \left( z_0; \beta \right) \right)
\end{align*}
when $z$ is near $z_0$. Along $v = \mathrm{Im} \left( z_0 \right)$, we have
\begin{align*}
    \mathrm{Re} \left( \rho \left( z ; \beta \right) \right) = \mathrm{Re} \left(  \rho \left( z_0; \beta \right) \right) + \frac{1}{2!} \rho^{\prime \prime} \left( z_0; \beta \right) u^2 +\mathrm{O} \left( \left\vert z- z_0 \right\vert^3 \right)  
   \leq \mathrm{Re} \left(  \rho \left( z_0; \beta \right) \right)
\end{align*}
for $z$ close enough to $z_0$. Thus, the path of steepest descents from $z_0$ is $v = \mathrm{Im} \left( z_0 \right)$, parallel to the real axis. 

As $z_0$ is in the domain of analyticity of $\rho \left( z; \beta \right)$, we can deform the original contour of the integration \cref{eq:f_^P_integral} onto the steepest descent paths through the saddle point $z_0$, denoted by $\mathcal{C}_l$ for $u < 0$ and $\mathcal{C}_r$ for $u > 0$, both pointing a direction away from $z_0$. It follows from Cauchy's theorem that
\begin{align*}
      f_{Z^P} \left( x \right)  =  \frac{1}{4 \pi} \sqrt{\frac{2P}{45}}  \int_{-\infty}^{+ \infty} \exp{\left( P \rho \left( z; \beta \right) \right)} \, dz = 
    \frac{1}{4 \pi} \sqrt{\frac{2P}{45}} \int_{\mathcal{C}_r - \mathcal{C}_l} \exp{\left( P \rho \left( z; \beta \right) \right)} \, dz,
\end{align*}
whence the main contributions to the asymptotic expansion of the integral now comes from a small neighbourhood of $z_0$ for large $P$. We use Laplace's method to evaluate this integral. For some $\epsilon > 0$, we have the following asymptotic relation:
\begin{align}
    f_{Z^P} \left( x \right) \sim  \frac{1}{4 \pi} \sqrt{\frac{2P}{45}} \int_{\mathrm{Im}  \left( z_0 \right) \mathrm{i} - \epsilon}^{\mathrm{Im} \left( z_0 \right) \mathrm{i}+ \epsilon}  \exp{\left( P \rho \left( z; \beta \right) \right)} \, dz, \quad \text{as} \quad P \to + \infty,
    \label{eq:f_z_laplace}
\end{align}
where by replacing the contour of integration $\mathcal{C}_r - \mathcal{C}_l$ with a narrow interval centred around $z_0$, only exponentially small errors are introduced for large $P$. Now, $\epsilon$ can be chosen so small that we can replace $\rho \left( z; \beta \right)$ by its Taylor expansion \cref{eq:rho_taylor_z0}, which converges on the interval $\left(\mathrm{Im} \left( z_0 \right) \mathrm{i} - \epsilon, \mathrm{Im} \left( z_0 \right) \mathrm{i} +\epsilon \right)$. Then, separating the quadratic term from all the higher-order terms of the series expansion \cref{eq:rho_taylor_z0} in $\exp{ \left( P  \rho \left( z;\beta \right) \right)}$ and setting 
\begin{align}
\label{eq:g_function}
g \left( z;\beta \right)  \coloneqq  \exp{\left( P  \left( z-z_0 \right)^3 \sum_{k \geq 0} \frac{\rho^{ \left( k+3 \right)} \left( z_0; \beta \right)}{ \left( k+3 \right)!} \left( z-z_0 \right)^{k} \right)},
\end{align}
the integral \cref{eq:f_z_laplace} becomes
\begin{align}
\label{eq:f_Z^P_series_z0}
     f_{Z^P} \left( x \right) \sim  \frac{1}{4 \pi} \sqrt{\frac{2P}{45}} \exp{\left( P  \rho \left( z_0; \beta \right) \right)} \int_{\mathrm{Im}  \left( z_0 \right) \mathrm{i} - \epsilon}^{\mathrm{Im} \left( z_0 \right) \mathrm{i}+ \epsilon}  \exp{\left( P  \frac{1}{2!} \rho^{\prime \prime} \left( z_0; \beta \right) \left( z-z_0 \right)^2 \right)} g \left( z; \beta \right) \, dz,
\end{align}
as $P \to +\infty$.

To find $\rho \left( z_0; \beta \right)$, we use \cref{eq:rho_taylor_0}, \cref{eq:saddle_point} and \cref{eq:z0_product} to write
\begin{align*}
 \rho \left( z_0; \beta \right)  = \frac{1}{2} \beta \mathrm{i} z_0 + \sum_{k=2}^{\infty} \hat{r}_k z_0^k 
      = \sum_{k=2}^\infty \hat{\rho}_k \beta^k,
\end{align*}
where for $k \geq 2$, 
\begin{align}
\label{eq:coeff_rho}
    \hat{\rho}_k \coloneqq \frac{\mathrm{i} \hat{\xi}_{k-2}}{2} + \sum_{m=2}^k \hat{r}_m \hat{\upsilon}_{k-m,m}.  
\end{align}

Since the series in the argument of the exponential function which defines $g \left( z; \beta \right)$ in \cref{eq:g_function} is convergent near $z_0$, we can write as $z \to z_0$,
\begin{align}
    g \left( z; \beta \right) = &   \exp{\left( P \left( z-z_0 \right)^3 \sum_{k \geq 0} \hat{\sigma}_k \left( \beta \right) \left( z-z_0 \right)^k   \right)} \nonumber \\ 
        \label{eq:exp_high}
     =  &   \sum_{n=0}^{\infty} \frac{1}{n!} P^n \left( z-z_0 \right)^{3n} \left( \sum_{k \geq 0} \hat{\sigma}_k \left( \beta \right) \left( z-z_0 \right)^k \right)^n,
\end{align}
where  $\hat{\sigma}_k \left( \beta \right) \coloneqq {\rho^{ \left( k+3 \right)} \left( z_0; \beta \right)}/{ \left( k+3 \right)!} $ for $k \geq 0$. Further, the Taylor series expansion \cref{eq:rho_taylor_diff_z0} for $ \rho^{ \left( k+3 \right)} \left( z_0; \beta \right)$ gives us $\hat{\sigma}_k \left( \beta \right)  = \sum_{l=0}^\infty \hat{\gamma}_{l,k}  \beta^l$ with
\begin{align}
    \label{eq:coeff_gamma}
\hat{\gamma}_{l,k} \coloneqq \frac{ \hat{\phi}_{l,k+3}}{{ \left( k+3 \right)!}}
\end{align}
for $l,k \geq 0$ when $\beta > -1/3 - \left( 1- \pi \coth{\pi} \right) / \pi^2$ and $\left\vert \beta \right\vert$ is sufficiently small. As an immediate consequence of the properties for Taylor series expansions, we have for $n \geq 2$, 
\begin{align*} 
\left( \sum_{k \geq 0} \hat{\sigma}_k \left( \beta \right) \left( z-z_0 \right)^k \right)^n = \sum_{k_1=0}^{\infty} \sum_{k_2=0}^{k_1}  \cdots \sum_{k_n=0}^{k_{n-1}}  \hat{\sigma}_{k_n} \left( \beta \right)  \hat{\sigma}_{k_{n-1} - k_n} \left( \beta \right) \cdots \hat{\sigma}_{k_{1} - k_2} \left( \beta \right)  \left( z-z_0 \right)^{k_1},
\end{align*}
as $z \to z_0$. In addition, we observe for $n \geq 2$ and $ 0 \leq k_n \leq k_{n-1} \leq \cdots \leq k_1$,
\begingroup
\allowdisplaybreaks
\begin{align*}
\lefteqn{\hat{\sigma}_{k_n} \left( \beta \right) \hat{\sigma}_{k_{n-1} - k_n} \left( \beta \right) \cdots \hat{\sigma}_{k_{1} - k_2} \left( \beta \right) \nonumber } \\
 = &   \left( \sum_{l_1=0}^{\infty} \hat{\gamma}_{l_1, k_n} \beta^{l_1} \right)     \left( \sum_{l_2=0}^{\infty} \hat{\gamma}_{l_2, k_{n-1}-k_n} \beta^{l_2} \right) \cdots  \left( \sum_{l_n=0}^{\infty} \hat{\gamma}_{l_n, k_1-k_2} \beta^{l_n} \right) \nonumber \nonumber \\
 = &  \sum_{l_1=0}^{\infty} \sum_{l_2=0}^{l_1} \cdots \sum_{l_n=0}^{l_{n-1}}  \hat{\gamma}_{l_n, k_n} \hat{\gamma}_{l_{n-1}-l_n, k_{n-1}-k_n} \cdots \hat{\gamma}_{l_1-l_2, k_1-k_2} \beta^{l_1} \nonumber \\
 = &  \sum_{l_1=0}^{\infty} \hat{\mathcal{C}}_{l_1, k_1, k_2, \cdots, k_n} \beta^{l_1}
\end{align*}
\endgroup
for $\beta > -1/3 - \left( 1- \pi \coth{\pi} \right) / \pi^2$ and $\left\vert \beta \right\vert$ sufficiently small, where 
\begin{align}
    \label{eq:coeff_C}
\hat{\mathcal{C}}_{l_1, k_1, k_2, \cdots, k_n} \coloneqq \sum_{l_2=0}^{l_1} \cdots \sum_{l_n=0}^{l_{n-1}}  \hat{\gamma}_{l_n, k_n} \hat{\gamma}_{l_{n-1}-l_n, k_{n-1}-k_n} \cdots \hat{\gamma}_{l_1-l_2, k_1-k_2}
\end{align}
for $l_1 \geq 0$. Generally, for $n \geq 0$ we see that
\begin{align*}
    \left( \sum_{k \geq 0} \hat{\sigma}_k \left( \beta \right) \left( z-z_0 \right)^k   \right)^n   =  \sum_{k = 0}^{\infty} \hat{\theta}_{k,n} \left( \beta \right) \left( z-z_0 \right)^k, \quad \text{as} \quad z \to z_0.
\end{align*}
Here $\hat{\theta}_{k,n} (\beta)$ are functions of $\beta$ satisfying $\hat{\theta}_{k,n} (\beta) = \sum_{l=0}^{\infty} \hat{\mathcal{E}}_{l,k,n} \beta^{l}$ for $k \geq0$ with the constants $\hat{\mathcal{E}}_{l,k,n}$ as stated below:
for $n=0$,
\begin{align}
\label{eq:coeff_E1}
   \hat{\mathcal{E}}_{l,k,n} =  \hat{\mathcal{E}}_{l,k,0} = \begin{cases}
1, \quad & \text{for} \quad k=l=0, \\
0, \quad & \text{otherwise, } 
\end{cases}
\end{align}
for $n=1$,
\begin{align}
\label{eq:coeff_E2}
     \hat{\mathcal{E}}_{l,k,n} =  \hat{\mathcal{E}}_{l,k,1} = \hat{\gamma}_{l,k}, \quad \text{for} \quad k, l \geq 0,
\end{align}
for $n=2$,
\begin{align}
\label{eq:coeff_E3}
    \hat{\mathcal{E}}_{l,k,n} =  \hat{\mathcal{E}}_{l,k,2}= \sum_{k_2=0}^{k}  \hat{\mathcal{C}}_{l, k, k_2}, \quad \text{for} \quad k,l \geq 0,
\end{align}
for $n \geq 3$,
\begin{align}
\label{eq:coeff_E4}
   \hat{\mathcal{E}}_{l,k,n} = \sum_{k_2=0}^{k}  \sum_{k_3=0}^{k_2} \cdots \sum_{k_n=0}^{k_{n-1}} \hat{\mathcal{C}}_{l, k, k_2, \cdots, k_n}, \quad  \text{for} \quad k, l \geq 0.
\end{align}
Using these factors, we can rewrite $g \left( z; \beta \right)$ in \cref{eq:exp_high} as 
\begingroup
\allowdisplaybreaks
\begin{align*}
g \left( z;\beta \right) &  = 
 \sum_{n=0}^{\infty} \frac{1}{n!} P^n \left( z-z_0 \right)^{3n}  \left( \sum_{k=0}^{\infty} \hat{\theta}_{k,n} \left( \beta \right) \left( z-z_0 \right)^{k}  \right)  \\
&  =  \sum_{n=0}^{\infty} \sum_{k=0}^{\infty} \frac{1}{n!} P^n \hat{\theta}_{k,n} \left( \beta \right) \left( z-z_0 \right)^{3n+k}     \\
& =  \sum_{j=0}^{\infty} \sum_{3n+k=j} \frac{1}{n!} P^n \hat{\theta}_{k,n} \left( \beta \right)  \left( z-z_0 \right)^j  \\
&  = \sum_{j=0}^{\infty} \sum_{n=0}^{\lfloor \frac{j}{3} \rfloor} \frac{1}{n!} P^n \hat{\theta}_{
j-3n,n} \left( \beta \right) \left( z-z_0 \right)^j  \\
&  = \sum_{j=0}^{\infty} \hat{g}_j \left( \beta \right) \left( z-z_0 \right)^j
\end{align*}
\endgroup
in the limit $z \to z_0$, where $ \hat{g}_j \left( \beta \right) 
\coloneqq \sum_{n=0}^{\lfloor j/3 \rfloor}  P^n \hat{\theta}_{j-3n,n} \left( \beta \right)/n!$ for $j \geq 0  $. Hence, we have 
\begin{align*}
    g \left( z; \beta \right) - \sum_{j=0}^J \hat{g}_j \left( \beta \right) \left( z-z_0 \right)^j = \mathrm{o} \left( \left( z - z_0 \right)^J \right), \quad \text{as} \quad z \to z_0 
\end{align*}
for any $J \geq 0$. From this it follows that for any $\epsilon^* > 0$ there is an interval $\left\vert z-z_0 \right\vert < L$ for some $L >0$, in which 
\begin{align*}
    \left\vert  g \left( z; \beta \right) - \sum_{j=0}^J \hat{g}_j \left( \beta \right) \left( z-z_0 \right)^j \right\vert \leq \epsilon^* \left\vert  \left( z - z_0 \right)^J \right\vert.
\end{align*}
Therefore for any $0 < \epsilon < L $, we have
\begin{align*}
\lefteqn{\left\vert  \int_{\mathrm{Im}  \left( z_0 \right) \mathrm{i} - \epsilon}^{\mathrm{Im}  \left( z_0 \right) \mathrm{i} }  \exp{\left( P  \frac{1}{2} \rho^{\prime \prime} \left( z_0; \beta \right) \left( z-z_0 \right)^2 \right)} \left(   g \left( z; \beta \right) - \sum_{j=0}^J \hat{g}_j \left( \beta \right) \left( z-z_0 \right)^j  \right) \, dz   \right\vert} \\
  \leq & \int_{\mathrm{Im}  \left( z_0 \right) \mathrm{i} - \epsilon}^{\mathrm{Im}  \left( z_0 \right) \mathrm{i} }  \exp{\left( P  \frac{1}{2} \rho^{\prime \prime} \left( z_0; \beta \right) \left( z-z_0 \right)^2 \right)} \left\vert g \left( z; \beta \right) - \sum_{j=0}^J \hat{g}_j \left( \beta \right) \left( z-z_0 \right)^j \right\vert \, dz \\
  \leq & \epsilon^* \int_{\mathrm{Im}  \left( z_0 \right) \mathrm{i} - \epsilon}^{\mathrm{Im}  \left( z_0 \right) \mathrm{i} }  \exp{\left( P  \frac{1}{2} \rho^{\prime \prime} \left( z_0; \beta \right) \left( z-z_0 \right)^2 \right)} \left\vert  \left( z - z_0 \right)^J \right\vert \, dz \\
  = & \epsilon^* \left( -1 \right)^J  \int_{\mathrm{Im}  \left( z_0 \right) \mathrm{i} - \epsilon}^{\mathrm{Im}  \left( z_0 \right) \mathrm{i} }  \exp{\left( P  \frac{1}{2} \rho^{\prime \prime} \left( z_0; \beta \right) \left( z-z_0 \right)^2 \right)} \left(z -z_0 \right)^J \, dz.
\end{align*}
Then as $\epsilon \to 0^+$, we can write
\begin{align*}
\lefteqn{ \int_{\mathrm{Im}  \left( z_0 \right) \mathrm{i} - \epsilon}^{\mathrm{Im}  \left( z_0 \right) \mathrm{i} }  \exp{\left( P  \frac{1}{2} \rho^{\prime \prime} \left( z_0; \beta \right) \left( z-z_0 \right)^2 \right)} g\left( z ; \beta \right) \, dz } \\
= & \sum_{j=0}^J \hat{g}_j \left( \beta \right) \int_{\mathrm{Im}  \left( z_0 \right) \mathrm{i} - \epsilon}^{\mathrm{Im}  \left( z_0 \right) \mathrm{i} } \exp{\left( P  \frac{1}{2} \rho^{\prime \prime} \left( z_0; \beta \right) \left( z-z_0 \right)^2 \right)}  \left( z -z_0 \right)^j \, dz  \\
    & + \mathrm{o} \left( \int_{\mathrm{Im}  \left( z_0 \right) \mathrm{i} - \epsilon}^{\mathrm{Im}  \left( z_0 \right) \mathrm{i} } \exp{\left( P  \frac{1}{2} \rho^{\prime \prime} \left( z_0; \beta \right) \left( z-z_0 \right)^2 \right)}  \left( z -z_0 \right)^J \, dz  \right),
\end{align*}
which gives
\begingroup
\begin{align*}
\lefteqn{    \int_{\mathrm{Im}  \left( z_0 \right) \mathrm{i} - \epsilon}^{\mathrm{Im}  \left( z_0 \right) \mathrm{i} }  \exp{\left( P  \frac{1}{2} \rho^{\prime \prime} \left( z_0; \beta \right) \left( z-z_0 \right)^2 \right)} g\left( z ; \beta \right) \, dz} \\
   \sim & \sum_{j=0}^\infty \hat{g}_j \left( \beta \right) \int_{\mathrm{Im}  \left( z_0 \right) \mathrm{i} - \epsilon}^{\mathrm{Im}  \left( z_0 \right) \mathrm{i} } \exp{\left( P  \frac{1}{2} \rho^{\prime \prime} \left( z_0; \beta \right) \left( z-z_0 \right)^2 \right)}  \left( z -z_0 \right)^j \, dz
\end{align*}
\endgroup
for small $\epsilon$. Now the above integrals can be evaluated by change of variables. For arbitrary $j \geq 0$, the substitution $z= \mathrm{Im} \left( z_0 \right) \mathrm{i} + x$ yields 
\begingroup
\allowdisplaybreaks
\begin{align*}
    & \int_{\mathrm{Im} \left( z_0 \right) \mathrm{i} - \epsilon}^{\mathrm{Im} \left( z_0 \right) \mathrm{i} }   \exp{\left( P \frac{1}{2} \rho^{\prime \prime} \left( z_0; \beta \right) \left( z-z_0 \right)^2 \right)} \left( z-z_0 \right)^{ j} \, dz\\  
= &  \int_{-\epsilon}^{0} \exp{ \left( P \frac{1}{2} \rho^{\prime \prime} \left( z_0; \beta \right) x^2    \right)} x^j \, dx \\
= & \frac{1}{2} \left( -1 \right)^j  \left( \frac{-2}{\rho^{\prime \prime} \left( z_0;\beta \right)} \right)^{\frac{1}{2} \left( j+1 \right)}  \cdot \int_0^{-\frac{1}{2}  \rho^{\prime \prime} \left( z_0; \beta \right) \epsilon^2 } \zeta^{\frac{1}{2} \left( j-1 \right)}   \exp{\left( -P \zeta \right)}  \, d \zeta,  
\end{align*}
\endgroup
where the last step is a result of the change of variable $\rho^{\prime \prime} \left( z_0; \beta \right) x^2 /2 = - \zeta$. Thus as $\epsilon \to 0^+$, we have
\begin{align*}
\lefteqn{  \int_{\mathrm{Im}  \left( z_0 \right) \mathrm{i} - \epsilon}^{\mathrm{Im}  \left( z_0 \right) \mathrm{i} }  \exp{\left( P  \frac{1}{2} \rho^{\prime \prime} \left( z_0; \beta \right) \left( z-z_0 \right)^2 \right)} g\left( z ; \beta \right) \, dz} \\
\sim & \sum_{j=0}^\infty \hat{g}_j \left( \beta \right) \frac{1}{2} \left( -1 \right)^j  \left( \frac{-2}{\rho^{\prime \prime} \left( z_0;\beta \right)} \right)^{\frac{1}{2} \left( j+1 \right)} \int_0^{-\frac{1}{2}  \rho^{\prime \prime} \left( z_0; \beta \right) \epsilon^2 } \zeta^{\frac{1}{2} \left( j-1 \right)}   \exp{\left( -P \zeta \right)}  \, d \zeta.
\end{align*}
Similar arguments give us that
\begin{align*}
\lefteqn{ \int_{\mathrm{Im}  \left( z_0 \right) \mathrm{i}}^{\mathrm{Im}  \left( z_0 \right) \mathrm{i} + \epsilon }  \exp{\left( P  \frac{1}{2} \rho^{\prime \prime} \left( z_0; \beta \right) \left( z-z_0 \right)^2 \right)} g\left( z ; \beta \right) \, dz} \\
\sim & \sum_{j=0}^\infty \hat{g}_j \left( \beta \right) \frac{1}{2}   \left( \frac{-2}{\rho^{\prime \prime} \left( z_0;\beta \right)} \right)^{\frac{1}{2} \left( j+1 \right)} \int_0^{-\frac{1}{2}  \rho^{\prime \prime} \left( z_0; \beta \right) \epsilon^2 } \zeta^{\frac{1}{2} \left( j-1 \right)}   \exp{\left( -P \zeta \right)}  \, d \zeta,
\end{align*}
as $\epsilon \to 0^+$. 

Hence, the integration in \cref{eq:f_Z^P_series_z0} can be expanded in an asymptotic series for small $\epsilon$ as follows:
\begin{align*}
\lefteqn{  \int_{\mathrm{Im}  \left( z_0 \right) \mathrm{i} - \epsilon}^{\mathrm{Im} \left( z_0 \right) \mathrm{i}+ \epsilon}  \exp{\left( P  \frac{1}{2!} \rho^{\prime \prime} \left( z_0; \beta \right) \left( z-z_0 \right)^2 \right)} g \left( z; \beta \right) \, dz} \\
\sim & \sum_{j=0}^\infty    \frac{1}{2}  \left( 1 + \left( -1 \right)^j \right)  \hat{g}_j \left( \beta \right) \left( \frac{-2}{\rho^{\prime \prime} \left( z_0;\beta \right)} \right)^{\frac{1}{2} \left( j+1 \right)} \int_0^{-\frac{1}{2}  \rho^{\prime \prime} \left( z_0; \beta \right) \epsilon^2} \zeta^{\frac{1}{2} \left( j-1 \right)}   \exp{\left( -P \zeta \right)}  \, d \zeta,
\end{align*}
where terms with odd $j$ vanish. For large $P$, we can extend the integration region in each integral to infinity.  With this replacement, we introduce only exponentially small errors for large $P$, whence we have as $P \to + \infty$,
\begin{align*}
 \int_0^{-\frac{1}{2}  \rho^{\prime \prime} \left( z_0; \beta \right) \epsilon^2} \zeta^{\frac{1}{2} \left( j-1 \right)}   \exp{\left( -P \zeta \right)}  \, d \zeta 
\sim & \int_0^{+ \infty} \zeta^{\frac{1}{2} \left( j-1 \right)}   \exp{\left( -P \zeta \right)}  \, d \zeta \\
= & P^{- \frac{1}{2} \left( j+{1} \right)} \Gamma \left(\frac{j}{2} +\frac{1}{2} \right)
\end{align*}
for $j \geq 0$. Assembling the above results, we have the following asymptotic series
\begin{align}
f_{Z^P} \left( x \right) \sim \frac{1}{4 \pi} \sqrt{\frac{2 P}{45}} \exp{\left( P \sum_{l=2}^{\infty} \hat{\rho}_l \beta^l \right)} \sum_{j=0}^{\infty} \hat{g}_{2  j} \left( \beta \right) \left( \frac{-2}{\rho^{\prime \prime} \left( z_0;\beta \right)} \right)^{j+\frac{1}{2}} P^{- \left( j+\frac{1}{2} \right)} \Gamma \left(j+\frac{1}{2} \right) \label{eq:f_Z_cha_var}
\end{align}
in the limit $P \to + \infty$ for fixed $x$ with $x > - \sqrt{5P/2} \left( 1+ 3 / \pi^2 -3 \coth{\pi} / \pi \right)$ and $\left\vert x \right\vert$ sufficiently small.

Lastly, we wish to express the terms involving $\beta$, i.e. $ \hat{g}_{2  j} \left( \beta \right)   \left( -2/\rho^{\prime \prime} \left( z_0;\beta \right) \right)^{j+1/2}$ as 
a Taylor series expansion in $\beta$. This can be achieved by collecting the coefficients from the product of their individual series. Assume that
\begin{align}
\sqrt{\frac{-2}{\rho^{\prime \prime} \left( z_0; \beta \right)}} = \sum_{n=0}^{\infty} \hat{K}_{n} \beta^n
\label{eq:power_beta_K}
\end{align}
for $\left\vert \beta \right\vert$ sufficiently small and some constants $\hat{K}_{n}$ for $ n \geq 0$.
Then taking the square on both sides yields
\begin{align*}
    \frac{-2}{\rho^{\prime \prime} \left( z_0; \beta \right)} 
    =  \sum_{l=0}^\infty \sum_{k=0}^l \hat{K}_k \hat{K}_{l-k} \beta^l =  \sum_{l=0}^\infty \hat{\mu}_l \beta^l,
\end{align*}
where $\hat{\mu}_{l} \coloneqq \sum_{k=0}^l \hat{K}_{k} \hat{K}_{l-k}$ for $l \geq 0$. On the other hand, by performing simple arithmetical operations on the Taylor series \cref{eq:rho_taylor_diff_z0} with $n=2$  for $\rho^{\prime \prime} \left( z_0; \beta \right)$, we see
\begin{align*}
     \frac{-2}{\rho^{\prime \prime} \left( z_0; \beta \right)} =  \sum_{l=0}^\infty \hat{\varpi}_l \beta^l
\end{align*}
for  $\beta > -1/3 - \left( 1- \pi \coth{\pi} \right) / \pi^2$ and $\left\vert \beta \right\vert$ sufficiently small, where 
\begin{align}
    \label{eq:coeff_varpi}
\begin{split}
\hat{\varpi}_0  & =  - \frac{2}{\hat{\phi}_{0,2}}, \\
\hat{\varpi}_l  &=  - \frac{1}{{\hat{\phi}_{0,2}}} \sum_{m=0}^{l-1} \hat{\varpi}_m \hat{\phi}_{l-m,2}, \quad \text{for} \quad l \geq 1.    
\end{split}
\end{align}
Hence, by equating the coefficients, we find 
\begin{align}
\label{eq:coeff_K}
    \hat{\varpi}_l = \hat{\mu}_l = \sum_{k=0}^{l} \hat{K}_k \hat{K}_{l-k}, \quad \text{for} \quad l \geq 0,
\end{align}
providing the values for the constants $\hat{K}_k$ with $k \geq 0$.

Based on the previous analysis, we are now ready to derive the Taylor series expansion of $ \left( -2/\rho^{\prime \prime} \left( z_0;\beta \right) \right)^{j+1/2}$ for $\beta > -1/3 - \left( 1- \pi \coth{\pi} \right) / \pi^2$ and $\left\vert \beta \right\vert$ sufficiently small. Indeed, from \cref{eq:power_beta_K} we have for $j \geq 0$,
\begin{align*}
 \left( \frac{-2}{\rho^{\prime \prime} \left( z_0;\beta \right)} \right)^{j+\frac{1}{2}} 
= \left( \sum_{n=0}^{\infty} \hat{K}_{n} \beta^n \right)^{2j+1} = \sum_{n=0}^{\infty} \hat{\omega}_{n,j} \beta^n,
\end{align*} 
where for $j=0$,
\begin{align}
\label{eq:coeff_omega1}
    \hat{\omega}_{n,j} = \hat{\omega}_{n,0} = \hat{K}_n, \quad \text{for} \quad n \geq 0,
\end{align}
for $j \geq 1$,
\begin{align}
\label{eq:coeff_omega2}
    \hat{\omega}_{n,j} = \sum_{n_2 =0}^{n} \sum_{n_3=0}^{n_2} \cdots \sum_{n_{2j+1}=0}^{n_{2j}}   \hat{K}_{n_{2j+1}} \hat{K}_{n_{2j}-n_{2j+1}} \cdots \hat{K}_{n_2-n_3} \hat{K}_{n-n_2}, \quad \text{for} \quad n \geq 0.
\end{align}
If we combine the series for $\hat{g}_{2j} \left( \beta \right)$ with the explicit series expansion given above, we obtain for $j \geq 0$,
\begingroup
\allowdisplaybreaks
\begin{align*}
 \hat{g}_{2j} \left( \beta \right) \left(  \frac{-2}{\rho^{\prime \prime} \left( z_0; \beta \right)  } \right)^{j+\frac{1}{2}} 
 = & \left( \sum_{n=0}^{\lfloor \frac{2}{3} j \rfloor} \frac{1}{n!} P^n \sum_{l=0}^{\infty} \hat{\mathcal{E}}_{l,2j-3n,n} \beta^l \right) \left( \sum_{n=0}^{\infty} \hat{\omega}_{n,j}  \beta^n \right) \\
  = & \left( \sum_{l=0}^{\infty} \left( \sum_{n=0}^{\lfloor \frac{2}{3} j \rfloor} \frac{1}{n!} P^n \hat{\mathcal{E}}_{l,2j-3n,n} \right) \beta^l \right) \left(  \sum_{k=0}^{\infty} \hat{\omega}_{k,j}  \beta^k \right) \\
  = & \sum_{l=0}^{\infty} \left( \sum_{k=0}^l \left( \sum_{n=0}^{\lfloor \frac{2}{3} j \rfloor} \frac{1}{n!} P^n \hat{\mathcal{E}}_{k,2j-3n,n} \right) \hat{\omega}_{l-k,j} \right) \beta^l \\
 = & \sum_{l=0}^{\infty}  \sum_{n=0}^{\lfloor \frac{2}{3} j \rfloor}  \sum_{k=0}^l \frac{1}{n!}   \hat{\omega}_{l-k, j}  \hat{\mathcal{E}}_{k,2j-3n,n}  P^n \beta^l \\
=  & \sum_{l=0}^{\infty} \sum_{n=0}^{\lfloor \frac{2}{3} j \rfloor} \hat{\alpha}_{n,l,j} P^n \beta^l,
\end{align*}
\endgroup
where
\begin{align}
\label{eq:coeff_alpha}
   \hat{\alpha}_{n,l,j} \coloneqq \frac{1}{n!} \sum_{k=0}^l  \hat{\omega}_{l-k, j} \hat{\mathcal{E}}_{k,2j-3n,n}
\end{align}
for $ 0 \leq n \leq \lfloor 2j /3 \rfloor$ and $l  \geq 0$. Then \cref{eq:f_Z_cha_var} becomes
\begin{align}
f_{Z^P} \left( x \right) & \sim \frac{1}{4 \pi} \sqrt{\frac{2 P}{45}} \exp{ \left( P \sum_{l=2}^{\infty} \hat{\rho}_l \beta^l \right)} \sum_{j=0}^{\infty}  \sum_{l=0}^{\infty} \sum_{n=0}^{\lfloor \frac{2}{3} j \rfloor} \hat{\alpha}_{n,l,j} P^n \beta^l  P^{- \left( j+\frac{1}{2} \right)} \Gamma \left( j +\frac{1}{2} \right), \label{eq:f_Z_asymp_beta}
\end{align}
as $P \to +\infty$ for  $\beta > -1/3 - \left( 1- \pi \coth{\pi} \right) / \pi^2$ and $\left\vert \beta \right\vert$ sufficiently small, which completes the proof.

\section{Proof of \cref{thm:Asymp_Exp_CDF_Z^P}}
\label{sec:appendix_proof_3.3}
Before integrating the density function, we first rewrite its asymptotic expansion in \cref{thm:Asymp_Exp_pdf_Z^P} in terms of the original variable $x$ by using the identity $\beta = x \sqrt{2/45} / \sqrt{P}$. Accordingly in the limit $P \to +\infty$ with $x > - \sqrt{5P/2} \left( 1+ 3 / \pi^2 -3 \coth{\pi} / \pi \right)$ and $\left\vert x \right\vert$ sufficiently small, the probability density function has the following asymptotic expansion
\begin{align}
\label{eq:f_Z_asymp}
f_{Z^P} \left( x \right) 
  \sim & \frac{1}{4 \pi} \sqrt{\frac{2}{45}} \exp{ \left( \sum_{l=2}^{\infty} \hat{\rho}_l \left( \frac{2}{45} \right)^{\frac{l}{2}}  P^{1- \frac{l}{2}} x^l  \right)} \\
   & \cdot \sum_{j=0}^{\infty} \sum_{l=0}^{\infty} \sum_{n=0}^{\lfloor \frac{2}{3} j \rfloor}  \hat{\alpha}_{n,l,j} \left( \frac{2}{45} \right)^{\frac{l}{2}} \Gamma \left( j+\frac{1}{2} \right) P^{n-\frac{l}{2}-j} x^l. \nonumber  
\end{align}
To justify that the integrated series is indeed asymptotic to the distribution function, we adjust the terms in \cref{eq:f_Z_asymp} to form a more appropriate expression for easier computation. 

Specifically, we separate the quadratic term $\hat{\rho}_2 \left( 2/45 \right) x^2$ from the argument $\sum_{l=2}^{\infty}$ $\hat{\rho}_l \left( 2/45 \right)^{{l}/{2}}  P^{1- l/2}  x^l$ of the exponential function. As the integration is taken with respect to $x$, we expand the remaining term in an asymptotic approximation in $P$ with all the coefficients given as polynomials of $x$. Note that $\hat{\rho}_2 =  \mathrm{i}  \hat{\xi}_0 /2 + \hat{r}_2    ( \hat{\xi}_0 )^2   =- {45}/{4}$, which gives
\begin{align}
  \exp{ \left( \sum_{l=2}^{\infty} \hat{\rho}_l \left( \frac{2}{45} \right)^{\frac{l}{2}}   P^{1- \frac{l}{2}} x^l  \right)}  
= \exp{\left( - \frac{1}{2} x^2 \right)} \exp{ \left( \sum_{l=3}^{\infty} \hat{\rho}_l \left( \frac{2}{45} \right)^{\frac{l}{2}}   P^{1- \frac{l}{2}} x^l  \right)}. \label{eq:pre_exp}
\end{align}
Now note, as $P \to + \infty$, we have
\begin{align}
\exp{ \left( \sum_{l=3}^{\infty} \hat{\rho}_l \left( \frac{2}{45} \right)^{\frac{l}{2}}   P^{1- \frac{l}{2}} x^l  \right)} \nonumber 
\sim &\sum_{n=0}^{\infty} \frac{1}{n!} \left( \sum_{l=3}^{\infty} \hat{\rho}_l \left( \frac{2}{45} \right)^{\frac{l}{2}}  P^{1- \frac{l}{2}}  x^l  \right)^n \nonumber \\
\sim & \sum_{n=0}^{\infty} \frac{1}{n!} \left( \frac{2}{45} \right)^{\frac{3n}{2}}  P^{-\frac{n}{2}} x^{3n} \left( \sum_{k=0}^{\infty} \hat{\rho}_{k+3} \left( \frac{2}{45} \right)^{\frac{k}{2}}  P^{- \frac{k}{2}}  x^{k}  \right)^n. \label{pre_erp_high}
\end{align}
Analogous to the previous computations, the generalisation of multiplication of asymptotic expansions tells us for $n \geq 0$,
\begin{align}
    \left( \sum_{k=0}^{\infty} \hat{\rho}_{k+3} \left( \frac{2}{45} \right)^{\frac{k}{2}}  P^{- \frac{k}{2}}  x^{k} \right)^n \sim \sum_{k=0}^{\infty} \hat{\vartheta}_{k,n} \left( \frac{2}{45}\right)^{\frac{k}{2}} P^{- \frac{k}{2}}  x^k , \label{eq:pre_exp_high_indeptvar_series}
\end{align}
where
\begin{align}
\label{eq:coeff_vartheta}
\begin{split}
    \hat{\vartheta}_{0,n} & = \left( \hat{\rho}_3 \right)^n, \\
    \hat{\vartheta}_{k,n} & = \frac{1}{k \hat{\rho}_3} \sum_{m=1}^{k} \left( m n - k + m \right) \hat{\rho}_{k+3} \hat{\vartheta}_{k-m,n}, \quad \text{for} \quad k \geq 1.    
\end{split}
\end{align}
Using \cref{eq:pre_exp}$-$\cref{eq:pre_exp_high_indeptvar_series} amounts to
\begin{align*}
 \exp{ \left( \sum_{l=2}^{\infty} \hat{\rho}_l \left( \frac{2}{45} \right)^{\frac{l}{2}}  P^{1- \frac{l}{2}}  x^l \right)} 
 \sim & \exp{\left( - \frac{1}{2} x^2 \right)} \sum_{n=0}^{\infty} \frac{1}{n!}  \left( \frac{2}{45} \right)^{\frac{3n}{2}}  P^{-\frac{n}{2}} x^{3n} \\
 & \cdot \sum_{k=0}^{\infty} \hat{\vartheta}_{k,n} \left( \frac{2}{45} \right)^{\frac{k}{2}}  P^{-\frac{k}{2}} x^{k} \\
\sim & \exp{\left( - \frac{1}{2} x^2 \right)}  \sum_{n=0}^{\infty} \sum_{k=0}^{\infty} \frac{1}{n!}  \left(  \frac{2}{45}\right)^{\frac{3n+k}{2}} \hat{\vartheta}_{k,n} P^{-\frac{n+k}{2}}  x^{3n+k} \\
\sim & \exp{\left( - \frac{1}{2} x^2 \right)}  \sum_{j=0}^{\infty} \left( \sum_{n=0}^{j} \frac{1}{n!} \left( \frac{2}{45} \right)^{\frac{2n+j}{2}} \hat{\vartheta}_{j-n,n} x^{2n+j} \right) P^{-\frac{j}{2}} \\
\sim & \exp{\left( - \frac{1}{2} x^2 \right)}  \sum_{j=0}^{\infty} \hat{A}_j \left( x \right) P^{-\frac{j}{2}},
\end{align*}
as $P \to + \infty$, where 
\begin{align*}
    \hat{A}_j \left( x \right) \coloneqq \sum_{n=0}^{j} \frac{1}{n!} \left( \frac{2}{45} \right)^{\frac{2n+j}{2}} \hat{\vartheta}_{j-n,n} x^{2n+j} = \sum_{n=0}^j \hat{\eta}_{n,j} x^{2n+j}, \quad \text{for} \quad j \geq 0
\end{align*}
with 
\begin{align}
    \label{eq:coeff_eta}
\hat{\eta}_{n,j} \coloneqq \left( \frac{2}{45} \right)^{n+ \frac{1}{2}j} \frac{\hat{\vartheta}_{j-n,n}}{n!}   
\end{align}
for $0 \leq n \leq j$. Further, we see that
\begin{align*}
\lefteqn{   \sum_{j=0}^{\infty} \sum_{l=0}^{\infty} \sum_{n=0}^{\lfloor \frac{2}{3} j \rfloor}  \hat{\alpha}_{n,l,j} \left( \frac{2}{45} \right)^{\frac{l}{2}} \Gamma \left( j+\frac{1}{2} \right)  x^l P^{n-\frac{l}{2}-j}} \\
\sim & \sum_{m=0}^\infty \sum_{l=0}^\infty \sum_{j=m}^{3m}  \hat{\alpha}_{j-m,l,j} \left( \frac{2}{45} \right)^{\frac{l}{2}} \Gamma \left( j+\frac{1}{2} \right)  x^l P^{- \left( m + \frac{l}{2} \right)} \\
\sim & \sum_{\substack{
         r = 0\\
         r \text{ even}}}^\infty
        \sum_{\substack{
         l = 0\\
         l \text{ even}}}^r
         \sum_{j= \frac{r- l }{2} }^{\frac{3 \left( r- l \right)}{2}} \hat{\alpha}_{j-\frac{{ r-l }}{2} ,l,j} \left( \frac{2}{45} \right)^{\frac{l}{2}} \Gamma \left( j+\frac{1}{2} \right)  x^l P^{- \frac{r}{2} } \\
& + \sum_{\substack{
         r = 1\\
         r \text{ odd}}}^\infty
        \sum_{\substack{
         l = 1\\
         l \text{ odd}}}^r
         \sum_{j= \frac{r- l }{2} }^{\frac{3 \left( r- l \right)}{2}} \hat{\alpha}_{j-\frac{{ r-l }}{2} ,l,j} \left( \frac{2}{45} \right)^{\frac{l}{2}} \Gamma \left( j+\frac{1}{2} \right)  x^l P^{- \frac{r}{2} } \\
\sim & \sum_{r=0}^\infty \hat{B}_r \left( x \right) P^{-\frac{r}{2}}  
\end{align*}
when $P \to + \infty$ with
\begin{align*}
     \hat{B}_r \left( x \right) \coloneqq \sum_{l=0}^r \hat{\lambda}_{l,r}  x^l, \quad \text{for} \quad r \geq 0,
\end{align*}
where for even $r$,
\begin{align}
\label{eq:coeff_lambda1}
    \hat{\lambda}_{l,r}  \coloneqq  \begin{cases}
    0, \quad & \text{for odd} \quad l, \\
    \sum\limits_{j=\frac{r-l}{2}}^{\frac{3 \left( r-l \right)}{2}} \hat{\alpha}_{j-\frac{r-l}{2}, l , j} \left( \frac{2}{45} \right)^{\frac{l}{2}} \Gamma \left( j+\frac{1}{2} \right), \quad & \text{for even} \quad l, 
    \end{cases}
\end{align}
and for odd $r$,
\begin{align}
\label{eq:coeff_lambda2}
    \hat{\lambda}_{l,r}  \coloneqq  \begin{cases}
    0, \quad & \text{for even} \quad l, \\
    \sum\limits_{j=\frac{r-l}{2}}^{\frac{3 \left( r-l \right)}{2}} \hat{\alpha}_{j-\frac{r-l}{2}, l , j} \left( \frac{2}{45} \right)^{\frac{l}{2}} \Gamma \left( j+\frac{1}{2} \right), \quad & \text{for odd} \quad l. 
    \end{cases}
\end{align}
Following the above discussion, \cref{eq:f_Z_asymp} can be rearranged as
\begingroup
\begin{align*}
f_{Z^P}(x) \sim & \frac{1}{4 \pi} \sqrt{\frac{2}{45}} \exp{\left( - \frac{1}{2} x^2 \right)}   \left( \sum_{j=0}^\infty \hat{A}_j \left( x \right) P^{-\frac{j}{2}} \right) \left( \sum_{r=0}^\infty \hat{B}_r \left( x \right) P^{-\frac{r}{2}} \right)  \\
\sim & \frac{1}{4 \pi} \sqrt{\frac{2}{45}} \exp{\left( - \frac{1}{2} x^2 \right)} \sum_{j=0}^\infty  \left( \sum_{r=0}^j \hat{A}_r \left( x \right) \hat{B}_{j-r} \left( x \right) \right) P^{- \frac{j}{2}} \\
\sim & \frac{1}{4 \pi} \sqrt{\frac{2}{45}} \exp{\left( - \frac{1}{2} x^2 \right)} \sum_{j=0}^\infty \hat{\psi}_j \left( x \right)  P^{- \frac{j}{2}}, 
\end{align*}
\endgroup
where for $j \geq 0$,
\begin{align*}
    \hat{\psi}_j \left( x \right)  \coloneqq  \sum_{r=0}^j \hat{A}_r \left( x \right) \hat{B}_{j-r} \left( x \right) 
     = \sum_{r=0}^j  \sum_{n=0}^r \sum_{l=0}^{j-r} \hat{\eta}_{n,r} \hat{\lambda}_{l,j-r} x^{2n+r+l}.
\end{align*}
Then, by the definition for asymptotic expansions, we have the order relation given below: for any $J \geq 0$,
\begin{align*}
  f_{Z^ P} \left( x \right) - \frac{1}{4 \pi} \sqrt{\frac{2}{45}} \exp{\left( - \frac{1}{2} x^2 \right)} \sum_{j=0}^{J} \hat{\psi}_j \left( x \right)  P^{-\frac{j}{2}} =  \mathrm{o} \left( P^{-\frac{J}{2}} \right),
\end{align*}
as $P \to + \infty$ with $x > - \sqrt{5P/2} \left( 1+ 3 / \pi^2 -3 \coth{\pi} / \pi \right)$ and $\left\vert x  \right\vert$ sufficiently small. Integrating on finite interval $\left( z_1, z_2 \right]$ such that $ z_i  > - \sqrt{5P/2} \left( 1+ 3 / \pi^2 -3 \coth{\pi} / \pi \right)$ and $\left\vert z_i \right\vert$ sufficiently small for $i=1,2$, we have
\begin{align*}
     \int_{z_1}^{z_2} f_{Z^P} \left( x \right) \, dx 
=  \frac{1}{4 \pi} \sqrt{\frac{2}{45}} \sum_{j=0}^J P^{-\frac{j}{2}} \int_{z_1}^{z_2}  \exp{\left( - \frac{1}{2} x^2 \right)}  \hat{\psi}_j \left( x \right) \, dx + \mathrm{o} \left( P^{-\frac{J}{2}} \right).
\end{align*}
Next, we show the integrals on the right hand side are finite. In fact, for $0 \leq j \leq J$, we can write
\begin{align*}
    \left\vert  \int_{z_1}^{z_2}  \exp{\left( - \frac{1}{2} x^2 \right)}  \hat{\psi}_j \left( x \right) \, dx  \right\vert 
\leq & \int_{z_1}^{z_2}  \exp{\left( - \frac{1}{2} x^2 \right)} \left\vert  \hat{\psi}_j \left( x \right) \right\vert \, dx \\
\leq & \sum_{r=0}^j  \sum_{n=0}^r \sum_{l=0}^{j-r} \left\vert \hat{\eta}_{n,r} \hat{\lambda}_{l,j-r}  \right\vert \int_{z_1}^{z_2}  \exp{\left( - \frac{1}{2} x^2 \right)}  \left\vert  x^{2n+r+l} \right\vert \, dx \\
< & + \infty.
\end{align*}
Notice that the constants $\hat{\eta}_{n,r}$ and $\hat{\lambda}_{l,j-r}$ are finite. Hence, we have the following asymptotic expansion: when $P \to +\infty$,
\begin{align}
  \int_{z_1}^{z_2}  f_{Z^ P} \left( x \right) \, dx   
  \sim &  \frac{1}{4 \pi} \sqrt{\frac{2}{45}} \sum_{j=0}^\infty P^{-\frac{j}{2}} \int_{z_1}^{z_2}  \exp{\left( - \frac{1}{2} x^2 \right)}  \hat{\psi}_j \left( x \right) \, dx \nonumber \\
  \sim &  \frac{1}{4 \pi} \sqrt{\frac{2}{45}} \sum_{j=0}^\infty P^{-\frac{j}{2}} \int_{z_1}^{z_2}  \exp{\left( - \frac{1}{2} x^2 \right)}  \sum_{r=0}^j  \sum_{n=0}^r \sum_{l=0}^{j-r} \hat{\eta}_{n,r} \hat{\lambda}_{l,j-r} x^{2n+r+l} \, dx \nonumber \\
  \sim & \frac{1}{4 \pi} \sqrt{\frac{2}{45}} \sum_{j=0}^\infty P^{-\frac{j}{2}}  \sum_{r=0}^j  \sum_{n=0}^r \sum_{l=0}^{j-r} \hat{\eta}_{n,r} \hat{\lambda}_{l,j-r} \int_{z_1}^{z_2}  \exp{\left( - \frac{1}{2} x^2 \right)}  x^{2n+r+l} \, dx. \label{eq:cdf_integral}
\end{align}
We apply the change of variable $v=x^2/2$ to compute the above integrals. For $z_1 < z_2 <0$ and $q \geq 0$, we have  
\begingroup
\begin{align*}
     \int_{z_1}^{z_2} \exp{ \left( -\frac{1}{2} x^2 \right)} x^q \, dx  
    = & \left( -1 \right)^q \left( \sqrt{2} \right)^{q-1} \int_{\frac{z_2^2}{2}}^{\frac{z_1^2}{2}}   \exp{\left( -v \right)} v^{ \frac{q+1}{2}  -1} \, dv \\
     = & \left( -1 \right)^q \left( \sqrt{2} \right)^{q-1} \left(  \gamma \left( \frac{q+1}{2}, \frac{\left(z_1\right)^2}{2}  \right) -  \gamma \left( \frac{q+1}{2}, \frac{\left( z_2 \right)^2}{2} \right) \right),
\end{align*}
\endgroup
where $\gamma \left( s, z \right)$ is the lower incomplete gamma function. For $z_1 < 0 \leq z_2 $ and $q \geq 0$, we consider the integral on the two sub-intervals $\left[ z_1,0 \right)$ and $ \left[ 0,z_2 \right]$ separately. By additivity, we get
\begin{align*}
     \int_{z_1}^{z_2} \exp{ \left( -\frac{1}{2} x^2 \right)} x^q \, dx  
    = & \left( -1 \right)^q \left( \sqrt{2} \right)^{q-1} \int_{0}^{\frac{z_1^2}{2}} \exp{\left( -v \right)} v^{ \frac{q+1}{2}  -1} \, dv  \\
    & + \left( \sqrt{2} \right)^{q-1} \int_{0}^{\frac{z_2^2}{2}} \exp{\left( -v \right)} v^{ \frac{q+1}{2}  -1} \, dv  \\
    = & \left( \sqrt{2} \right)^{q-1} \left( \left(-1 \right)^q \gamma \left( \frac{q+1}{2}, \frac{\left(z_1 \right)^2}{2} \right) +  \gamma \left( \frac{q+1}{2}, \frac{\left( z_2 \right)^2}{2} \right) \right).
\end{align*}
Substituting the explicit form for the integrals back into \cref{eq:cdf_integral} yields the stated representation, completing the proof.


\section{Proof of \cref{thm:Series_Exp_CDF_S^P}}
\label{sec:appendix_proof_3.5}
Recall that the probability density function $f_{S^P}$ of $S^P$ has the form
\begin{align*}
    f_{S^P} \left( y \right)  =\frac{1}{\sqrt{2 \pi}} \frac{2^P}{\Gamma \left(P\right)} y^{-\frac{1}{2}\left( P+2 \right)} \sum_{n=0}^\infty \frac{\Gamma \left( n+P \right)}{\Gamma \left( n+1 \right)} \exp{\left( -\frac{ \left( 2n+P \right)^2}{4y} \right)} D_{P+1} \left( \frac{2n+P}{\sqrt{y}} \right),
\end{align*}
in which $D_{P+1} \left( z \right)$ is a parabolic cylinder function of order $P+1$. The distribution function $F_{S^P}$ is derived by term-wise integration of the above series. First we show that $\sum_{n=0}^\infty \int_0^x \left\vert f_n \left( y \right) \right\vert \, dy$ $ < \infty$ for any finite $x \geq 0$, where
\begin{align*}
    f_n \left( y \right) \coloneqq \frac{\Gamma \left( n+P \right)}{\Gamma \left( n+1 \right)} \exp{\left( -\frac{ \left( 2n+P \right)^2}{4y} \right)} D_{P+1} \left( \frac{2n+P}{\sqrt{y}} \right)  y^{-\frac{1}{2}\left( P+2 \right)}.
\end{align*}
For fixed $n \geq 0$, application of the variable transformation $z = \left( 2 n +P \right) / \sqrt{y}$ gives 
\begin{align*}
     \int_0^x \left\vert f_n\left(y \right) \right\vert \, dy 
    =  2  \frac{\Gamma \left( n+P \right)}{\Gamma \left( n+1 \right)} \left( 2 n +P \right)^{-P} \int_{\frac{2n+P}{\sqrt{x}}}^{+ \infty}  \exp{\left( -\frac{1}{4} z^2 \right)} \left\vert D_{P+1} \left( z \right) \right\vert z^{P-1} \, dz.
\end{align*}
Notice that the parabolic cylinder function $D_{P+1} \left( z \right)$ is square integrable on $\left[ 0, \infty \right)$ (Gradshteyn and Ryzhik \cite[Chapter $7.711$]{gradshteyn2014table}), i.e. 
\begin{align*}
  \left\lVert D_{P+1} \right\rVert_2 \coloneqq \left( \int_0^{+\infty} \left\vert D_{P+1} \left( z \right) \right\vert^2 \, dz \right)^{\frac{1}{2}}< \infty.
\end{align*}
By H\"{o}lder's inequality, we have 
\begin{align}
    \int_{y}^{+ \infty}  \exp{\left( -\frac{1}{4} z^2 \right)} \left\vert D_{P+1} \left( z \right) \right\vert z^{P-1} \, dz \leq \left( \int_y^{+\infty} z^{2 \left( P-1 \right)} \exp{\left( - \frac{1}{2} z^2 \right)} \, dz \right)^{\frac{1}{2}}  \left\lVert D_{P+1} \right\rVert_2 \label{eq:convergence_inequality}
\end{align}
for $y \geq 0$. Next we consider the following two cases for $P$ separately: $P \in \left( 0, 1 \right)$ and $P \in \mathbb{N}$. 

For any $P \in \left( 0,1 \right)$, $z^{2 \left( P-1 \right)}$ is monotonically decreasing, which yields
\begin{align*}
    \int_{y}^{+ \infty}  \exp{\left( -\frac{1}{4} z^2 \right)} \left\vert D_{P+1} \left( z \right) \right\vert z^{P-1} \, dz 
  \leq &  y^{P-1} \left( \int_y^{+\infty}  \exp{\left( - \frac{1}{2} z^2 \right)} \, dz \right)^{\frac{1}{2}} \left\lVert D_{P+1} \right\rVert_2 \\
     = & \left( 2 \pi \right)^{\frac{1}{4}} \left\lVert D_{P+1} \right\rVert_2 y^{P-1}  \left( 1- \Phi \left( y \right) \right)^{\frac{1}{2}}
\end{align*}
for $y > 0$, where $\Phi \left( y \right)$ is the distribution function of a standard normal random variable. Then, it follows that the sequence $\int_0^x \left\vert f_n \left( y \right) \right\vert \, dy$ with finite $x$ is bounded by
\begin{align*}
      \int_0^x \left\vert f_n \left( y \right) \right\vert \, dy  
      \leq b_n, 
\end{align*}
where
\begin{align*}
    b_n \coloneqq 2 \left( 2 \pi \right)^{\frac{1}{4}} \left\lVert D_{P+1} \right\rVert_2 x^{\frac{1}{2} \left( 1- P \right)} \frac{\Gamma \left( n+P \right) }{\Gamma \left( n+1 \right)}  \frac{1}{2n+P}  \left( 1- \Phi \left( \frac{2 n+P}{\sqrt{x}} \right) \right)^{\frac{1}{2}}.
\end{align*}
By the ratio test, we can deduce that the series $\sum_{n=0}^\infty b_n$ is convergent. In fact, we have 
\begin{align*}
    \left\vert \frac{b_{n+1}}{b_n} \right\vert = \frac{n+P}{n+1}  \frac{2n+P}{2 \left( n+1 \right) +P}  \left( \frac{1- \Phi \left( \frac{2 \left(n+1 \right)+P}{\sqrt{x}} \right)}{1- \Phi \left( \frac{2 n+P}{\sqrt{x}} \right)} \right)^{ \frac{1}{2}} \to 0, \quad \text{as} \quad n \to \infty.
\end{align*}
The comparison test implies that the series $\sum_{n=0}^\infty \int_0^x \left\vert f_n \left( y \right) \right\vert \, dy $ is also convergent for any finite $x$. 

For any $P \in \mathbb{N}$, the integral on the right hand side of \cref{eq:convergence_inequality} can be regarded as the moment of some transformation of a standard normal random variable $Z$, i.e.
\begin{align*}
    \int_y^{+\infty} z^{2 \left( P-1 \right)} \exp{\left( - \frac{1}{2} z^2 \right)} \, dz  
    & = \sqrt{2 \pi} \mathbb{E} \left[ Z^{2 \left( P-1 \right)} \mathbf{1}_{\left\{ Z \geq y \right\}}  \right] \\
    & \leq  \sqrt{2 \pi} \left( \mathbb{E} \left[ Z^{4 \left( P-1 \right)} \right] \right)^{\frac{1}{2}}   \left( \mathbb{E} \left[ \left( \mathbf{1}_{ \left\{ Z \geq y \right\}}  \right)^2 \right] \right)^{\frac{1}{2}}  \\
    & = \sqrt{2 \pi} \left( \mathbb{E} \left[ Z^{4 \left( P-1 \right)} \right] \right)^{\frac{1}{2}} \left( 1 - \Phi \left( y \right) \right)^{\frac{1}{2}}
\end{align*}
for $y \geq 0$, where $\mathbf{1}_{\left\{ z \geq y \right\}}$ is the indicator function and the inequality follows from H\"{o}lder's inequality. Hence, the above argument gives the bounds for $\int_0^x \left\vert f_n \left( y \right) \right\vert \, dy$ with $x < \infty$ as
\begin{align*}
       \int_0^x \left\vert f_n \left( y \right) \right\vert \, dy  \leq b_n,
\end{align*}
where
\begin{align*}
    b_n \coloneqq 2 \left( 2 \pi \right)^{\frac{1}{4}}   \left\lVert D_{P+1} \right\rVert_2
    \left(  \mathbb{E} \left[ Z^{4 \left( P-1 \right)} \right]  \right)^{\frac{1}{4}}  \frac{\Gamma \left( n+P \right) }{\Gamma \left( n+1 \right)} \left( 2n+P \right)^{-P}  \left(1- \Phi \left( \frac{2n+P}{\sqrt{x}} \right)  \right)^{\frac{1}{4}}.
\end{align*}
Similarly, the series $\sum_{n=0}^\infty b_n$ converges by the fact that 
\begin{align*}
    \left\vert \frac{b_{n+1}}{b_n} \right\vert =  \frac{n+P}{n+1} \left(  \frac{2n+P}{2 \left( n+1 \right) +P} \right)^P \left( \frac{1- \Phi \left( \frac{2 \left(n+1 \right)+P}{\sqrt{x}} \right)}{1- \Phi \left( \frac{2 n+P}{\sqrt{x}} \right)} \right)^{ \frac{1}{4}} \to 0, \quad \text{as} \quad n \to \infty,
\end{align*}
implying the convergence of the series $\sum_{n=0}^\infty \int_0^x \left\vert f_n \left( y \right) \right\vert \, dy$ for finite $x$ as well.

Thus, for fixed $ 0 \leq x < \infty$ and $P \in \left( 0, 1 \right) \cup \mathbb{N}$, we have $\sum_{n=0}^\infty \int_0^x \left\vert f_n \left( y \right) \right\vert \, dy < \infty$. Then, applying a corollary of the Dominated Convergence Theorem (Rudin \cite[Theorem $1.38$]{rudin1987real}) yields
\begin{align*}
    \sum_{n=0}^\infty \int_0^x  f_n \left( y \right)  \, dy =  \int_0^x \sum_{n=0}^\infty  f_n \left( y \right)  \, dy,
\end{align*}
where the integration and summation can be interchanged. This leads to the following convergent series expansion for the distribution function $F_{S^P}$: for any $x < \infty$,
\begin{align*}
 F_{S^P} \left( x \right) 
  & = \frac{1}{\sqrt{2 \pi}} \frac{2^P}{\Gamma \left(P\right)}   \sum_{n=0}^\infty \int_0^x  f_n \left( y \right)  \, dy \\
   & = \frac{1}{\sqrt{2 \pi}} \frac{2^P}{\Gamma \left(P\right)} \sum_{n=0}^\infty \frac{\Gamma \left( n+P \right)}{\Gamma \left( n+1 \right)} \int_0^x y^{-\frac{1}{2}\left( P+2 \right)} \exp{\left( -\frac{ \left( 2n+P \right)^2}{4y} \right)} D_{P+1} \left( \frac{2n+P}{\sqrt{y}} \right) \, dy  \\
    & =  \frac{1}{\sqrt{2 \pi}} \frac{2^{P+1}}{\Gamma \left( P \right)} \sum_{n=0}^\infty \frac{\Gamma \left( n+P \right) }{\Gamma \left( n+1 \right)} \left( 2n+P \right)^{-P} G \left( \frac{2 n+P}{\sqrt{x}} \right),
\end{align*}
where the function $G \left( y \right)$ is defined as
\begin{align}
    G \left( y \right) = \int_y^{+\infty} z^{P-1} \exp{\left( - \frac{1}{4} z^2 \right)} D_{P+1} \left( z \right) \, dz. \label{eq:G_function}
\end{align}

To evaluate the function $G$, we first follow the methods mentioned earlier to calculate the parabolic cylinder functions $D_{P+1}$ and hence its integrand. We may replace $D_{P+1} \left( z \right)$ by its convergent power series on the entire interval of integration to derive the corresponding series expansion for $G$. However, the power series converges too slowly to be of practical use for large $z$. Instead, we split the interval of integration $\left[ y, + \infty \right)$ into two small elements, say $\left[ y, y^* \right)$ and $\left[ y^*, +\infty \right)$ for some sufficiently large $y^* \geq y$, where we apply different representations for $D_{P+1} \left( z \right)$ depending on the value of $z$. Then, we have
\begin{align*}
    G \left( y \right) =  G_1 \left( y, y^* \right) + G_2 \left( y^* \right),
\end{align*}
where 
\begin{align*}
    G_1 \left( y, y^* \right) & \coloneqq  \int_y^{y*} z^{P-1} \exp{\left( - \frac{1}{4} z^2 \right)} D_{P+1} \left( z \right) \, dz, \\
    G_2 \left( y^* \right) & \coloneqq \int_{y^*}^{+\infty} z^{P-1} \exp{\left( - \frac{1}{4} z^2 \right)} D_{P+1} \left( z \right) \, dz.
\end{align*}
On $\left[ y^*, +\infty \right)$, the asymptotic expansion \cref{eq:Cylinder_asymp} is a convenient way to compute $D_{P+1} \left( z \right)$ and hence $G_2$. On $\left[y, y^* \right)$, the power series \cref{eq:Cylinder_power} will be useful for $G_1$. Next, we consider the integral on the two sub-intervals case by case.

On $\left[ y^*, +\infty \right)$, we approximate the parabolic cylinder function $D_{p+1} \left( z \right)$ by its asymptotic series \cref{eq:Cylinder_asymp} on the entire interval under consideration. The series is then multiplied by $z^{P-1} \exp{\left( -z^2 /4 \right)}$ and integrated term by term to generate a series approximation for the integral $G_2$. To confirm that the resulting series is the correct asymptotic expansion for the integral with large $y^*$, we note that
\begin{align*}
z^{P-1} \exp{\left( -\frac{1}{4} z^2 \right)} D_{P+1} \left( z \right) \sim z^{2 P}   \exp{\left( -\frac{1}{2} z^2 \right)} \sum_{k=0}^\infty \hat{a}_k z^{-2k}, \quad \text{as} \quad z \to +\infty,
\end{align*}
where $\hat{a}_k = \left( -1 \right)^k 2^{-k} \left( - \left( P+1 \right)\right)_{2k}  /k! $ for $k \geq 0$. By the definition for asymptotic expansions, we have for each $K$, 
\begin{align*}
    z^{P-1} \exp{\left( -\frac{1}{4} z^2 \right)} D_{P+1} \left( z \right)  - z^{2 P}  \exp{\left( -\frac{1}{2} z^2 \right)}  \sum_{k=0}^K \hat{a}_k z^{-2k} = \mathrm{o} \left( z^{2 P}  \exp{\left( -\frac{1}{2} z^2 \right)}  z^{-2 K} \right),
\end{align*}
as $z \to + \infty$, meaning that for any $\epsilon > 0$, there exists a $z_0 > 0$ such that for $z > z_0$,
\begin{align*}
\left\vert  z^{P-1} \exp{\left( -\frac{1}{4} z^2 \right)} D_{P+1} \left( z \right)  - z^{2 P}  \exp{\left( -\frac{1}{2} z^2 \right)}  \sum_{k=0}^K \hat{a}_k z^{-2k} \right\vert \leq \epsilon \left\vert  z^{2 P}  \exp{\left( -\frac{1}{2} z^2 \right)}  z^{-2 K} \right\vert.
\end{align*}
Then, properties of integration yield for any $y^* > z_0$,
\begin{align*}
\lefteqn{\left\vert G_2 \left( y^* \right) - \sum_{k=0}^K \hat{a}_k \int_{y^*}^{+\infty}  z^{2 P}  \exp{\left( -\frac{1}{2} z^2 \right)}  z^{-2 k} \, dz \right\vert } \\
   \leq & \int_{y^*}^{+\infty} \left\vert    z^{P-1} \exp{\left( -\frac{1}{4} z^2 \right)} D_{P+1} \left( z \right) -  z^{2 P}  \exp{\left( -\frac{1}{2} z^2 \right)} \sum_{k=0}^K \hat{a}_k z^{-2 k}           \right\vert          \, dz \\
   \leq & \epsilon \int_{y^*}^{+\infty} z^{2P} \exp{\left( -\frac{1}{2} z^2 \right)}  z^{-2 K}    \, dz.
\end{align*}
Hence, we have the following asymptotic relation: as $y^* \to +\infty$,
\begin{align*}
 G_2 \left( {y^*} \right) - \sum_{k=0}^K \hat{a}_k \int_{y^*}^{+\infty}  z^{2 P}  \exp{\left( -\frac{1}{2} z^2 \right)}  z^{-2 k} \, dz = \mathrm{o} \left(  \int_{y^*}^{+\infty} z^{2P} \exp{\left( -\frac{1}{2} z^2 \right)}  z^{-2 K}    \, dz \right),
\end{align*}
which further gives the asymptotic expansion
\begin{align*}
    G_2 \left( {y^*} \right) \sim  \sum_{k=0}^{\infty} \hat{a}_k \int_{y^*}^{+\infty}  z^{2 P}  \exp{\left( -\frac{1}{2} z^2 \right)}  z^{-2 k} \, dz, \quad \text{as} \quad {y^*} \to +\infty.
\end{align*}
Introducing a new variable $\zeta = z^2/2$, we find that
\begin{align*}
     G_2 \left( {y^*} \right) 
      & \sim \sum_{k=0}^{\infty} \hat{a}_k 2^{P-k-\frac{1}{2}}  \int_{\frac{\left( y^* \right)^2}{2}}^{+\infty} \zeta^{P-k-\frac{1}{2}} \exp{ \left( - \zeta \right)} \, d \zeta \\
     & \sim  \sum_{k=0}^{\infty} \hat{a}_k 2^{P-k-\frac{1}{2}} \Gamma \left( P-k+\frac{1}{2}, \frac{\left( y^* \right)^2}{2} \right)
\end{align*}
in the limit $y^* \to +\infty$, where $\Gamma\left( s,z \right)$ is the upper incomplete gamma function. Replacing $ \hat{a}_k$ by the explicit form given above generates the stated asymptotic expansion for $G_2$.

On $ \left[y , y^* \right)$, approximating the parabolic cylinder function $D_{P+1} \left( z \right)$ using the power series \cref{eq:Cylinder_power}, we can write
\begin{align*}
      G_1 \left( y, y^* \right)
    = & \int_{y}^{y^*} z^{P-1} \exp{\left(-\frac{1}{4} z^2 \right)}  \left( \sum_{k=0}^\infty \hat{d}_k \left( P \right) z^k \right) \, dz \\
    = & \sum_{k=0}^\infty \hat{d}_k \left( P \right)  \int_{y}^{y^*} z^{P+k-1} \exp{\left( -\frac{1}{4} z^2\right)} \, dz \\
     = & \sum_{k=0}^\infty \hat{d}_k \left( P \right) 2^{P+k-1} \int_{ \frac{y^2}{4}}^{\frac{{\left( y^* \right)}^2}{4}}    \zeta^{\frac{ P+k }{2} -1} \exp{\left( - \zeta \right)} \, d \zeta \\
     = & \sum_{k=0}^\infty \hat{d}_k \left( P \right) 2^{P+k-1} \left(   \Gamma \left(  \frac{ P+k }{2} , \frac{y^2}{4}  \right) - \Gamma \left( \frac{ P+k }{2} , \frac{\left( {y^*} \right)^2}{4}  \right) \right),
\end{align*}
where the interchange of integration and summation in the second step follows from the fact that the power series is uniformly convergent over the interval of integration and a change of variable $\zeta = z^2/4$ is applied for the third step. Notice that the above series is convergent for any $0 < y \leq y^{*}$.


\section{Proof of \cref{thm:leading_CDF_S^P}}
\label{sec:appendix_proof_3.7}
Recall from \cref{thm:Series_Exp_CDF_S^P} that the distribution function $F_{S^P}$ for $P \in \left(0, 1 \right) \cup \mathbb{N} $ takes the form 
\begin{align}
    F_{S^P} \left( x \right) =   \frac{1}{\sqrt{2 \pi}} \frac{2^{P+1}}{\Gamma \left( P \right)} \sum_{n=0}^\infty \frac{\Gamma \left( n+P \right)}{\Gamma \left( n+1 \right)} \left( 2n+P \right)^{-P} G \left( \frac{2 n+P}{\sqrt{x}} \right) \label{eq:series_F_S^P_small2}
\end{align}
for any $0 \leq x < \infty$, where the function $G$ has the asymptotic approximation
\begin{align*}
G \left( y \right) \sim \sum_{k=0}^\infty \left( -1 \right)^k
        \frac{ \left(- \left( P+1 \right) \right)_{2k} }{  k!}  
           2^{P-2k-\frac{1}{2}}  \Gamma \left( P-k+\frac{1}{2}, \frac{ y ^2}{2} \right), \quad \text{as} \quad y \to +\infty. 
\end{align*}
Then it follows from the definition for asymptotic expansions that
\begin{align*}
    G \left( y \right) = 2^{P-\frac{1}{2}} \Gamma \left( P + \frac{1}{2}, \frac{y^2}{2} \right) + \mathrm{o} \left(\Gamma \left( P + \frac{1}{2}, \frac{y^2}{2} \right) \right), \quad \text{as} \quad y \to +\infty.
\end{align*}
Further, by the asymptotic expansion for the incomplete gamma function (Abramowitz and Stegun \cite[formula $(6.5.32)$]{abramowitz1964handbook})
\begin{align*}
 \Gamma \left( s,z \right) \sim z^{s-1} \exp{ \left( -z \right)} \sum_{k=0}^\infty \frac{\Gamma \left( s \right)}{\Gamma \left( s-k \right)} z^{-k}, \quad \text{as} \quad z \to + \infty,   
\end{align*}
we have 
\begin{align*}
    \Gamma \left( P + \frac{1}{2}, \frac{y^2}{2} \right) = \left( \frac{y^2}{2} \right)^{P-\frac{1}{2}} \exp{\left( - \frac{y^2}{2} \right)} + \mathrm{o} \left( y^{2P-1} \exp{\left( - \frac{y^2}{2} \right)} \right), \quad \text{as} \quad y \to + \infty. 
\end{align*}
The above analysis yields 
\begin{align*}
    G \left( y \right) = y^{2P-1} \exp{\left( - \frac{y^2}{2} \right)} + \mathrm{o} \left( y^{2P-1} \exp{\left( - \frac{y^2}{2} \right)} \right), \quad \text{as} \quad y \to + \infty.
\end{align*}
Hence, we can write
\begin{align*}
    G \left( \frac{2n+P}{\sqrt{x}} \right) = & \left(2n+P \right)^{2P-1} x^{\frac{1}{2} -P} \exp{\left( - \frac{\left(2n+P \right)^2}{2x} \right)}  \\
    & + \mathrm{o} \left( \left(2n+P \right)^{2P-1} x^{\frac{1}{2} -P} \exp{\left( - \frac{\left(2n+P \right)^2}{2x} \right)} \right)
\end{align*}
in the limit $x \to 0^+$. The observation 
\begin{align*}
     \left(2n+P \right)^{2P-1} \frac{\exp{\left( - \frac{\left(2n+P \right)^2}{2x} \right)}}{\exp{\left( - \frac{P^2}{2x} \right)}} =  \left(2n+P \right)^{2P-1} \exp{\left( - \frac{4n^2 + 4nP}{2x} \right)} \to 0,  \quad \text{as} \quad x \to 0^+
\end{align*}
for any $n \geq 1$ establishes
\begin{align*}
     G \left( \frac{2n+P}{\sqrt{x}} \right)  = \mathrm{o} \left( G \left(  \frac{P}{ \sqrt{x}} \right) \right),  \quad \text{as} \quad x \to 0^+.
\end{align*}
Therefore, \cref{eq:series_F_S^P_small2} becomes
\begin{align*}
  F_{S^P} \left( x \right) & =   \frac{1}{\sqrt{2 \pi}} \frac{2^{P+1}}{\Gamma \left( P \right)}  \frac{\Gamma \left( P \right)}{\Gamma \left( 1 \right)} P ^{-P} G \left( \frac{P}{\sqrt{x}} \right) + \mathrm{o} \left( G \left( \frac{P}{\sqrt{x}} \right) \right) \\
  & = \frac{1}{\sqrt{2 \pi}} 2^{P+1} P^{P-1} x^{\frac{1}{2} - P} \exp{\left( - \frac{P^2}{2x} \right)} + \mathrm{o} \left( x^{\frac{1}{2} - P} \exp{\left( - \frac{P^2}{2x} \right)} \right) \\
  & \sim \frac{1}{\sqrt{2 \pi}} 2^{P+1} P^{P-1} x^{\frac{1}{2} - P} \exp{\left( - \frac{P^2}{2x} \right)}, \quad \text{as} \quad x \to 0^+.
\end{align*}

\section{Uniqueness and Existence of Saddle Points} 
\label{sec:appendix_unique_existence}
In this section, we show the existence and uniqueness of saddle points, which is used in the proof of \cref{thm:Asymp_Exp_pdf_Z^P} in \cref{sec:appendix_proof_3.2}.
\begin{lemma} 
\label{lem:existence_uniqueness}
For $\beta > - {1}/{3}  $, there exists a unique solution to $\rho^{\prime} \left( z; \beta \right) =0$ in the domain $\left\{ z \in \mathbb{C}:  \mathrm{Im} \left( z \right) < \pi^2  \right\}$, where
\begin{align*}
\rho \left( z; \beta \right) = \log{ \left( \frac{\sqrt{z \mathrm{i}}}{\sinh{\sqrt{z \mathrm{i}}}} \right)} + z \mathrm{i} \left( \frac{1}{6} + \frac{1}{2} \beta \right).
\end{align*}
Further, this unique solution lies on the imaginary axis and is at zero when $\beta =0$. 
\end{lemma}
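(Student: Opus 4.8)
The plan is to remove the complex nonlinearity by passing to the variable $u=zi$ and using the Mittag--Leffler expansion of $\sqrt u\coth\sqrt u$. Since $\mathrm{Im}(z)<\pi^2$ is exactly $\mathrm{Re}(zi)>-\pi^2$, set $\Omega\coloneqq\{u\in\mathbb{C}:\mathrm{Re}(u)>-\pi^2\}$ and $c\coloneqq\tfrac16+\tfrac\beta2$, so that $\beta>-1/3$ is precisely $c>0$. From the product representation $\sinh\sqrt u/\sqrt u=\prod_{n\geq 1}\bigl(1+u/(n^2\pi^2)\bigr)$ one obtains, for the branch used to define $\rho$, that
\begin{align*}
\frac{d}{du}\log\frac{\sqrt u}{\sinh\sqrt u}=-F(u),\qquad F(u)\coloneqq\sum_{n=1}^\infty\frac{1}{u+n^2\pi^2},
\end{align*}
where the series converges locally uniformly on $\Omega$ (on a compact subset of $\Omega$ one has $\mathrm{Re}(u+n^2\pi^2)\geq\delta+n^2\pi^2$ for some $\delta>0$, so the terms are eventually dominated by $(n^2\pi^2-C)^{-1}$), hence $F$ is analytic on $\Omega$. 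Writing $\rho(z;\beta)=\log\bigl(\sqrt{zi}/\sinh\sqrt{zi}\bigr)+zi\,c$ and applying the chain rule gives
\begin{align*}
\rho'(z;\beta)=i\bigl(c-F(zi)\bigr),
\end{align*}
so $z_0$ is a saddle point in $\{\mathrm{Im}(z)<\pi^2\}$ if and only if $u_0\coloneqq z_0 i\in\Omega$ satisfies $F(u_0)=c$.

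Next I would show that every solution $u_0\in\Omega$ of $F(u_0)=c$ is real. For $u=a+bi$ with $a>-\pi^2$ we have $a+n^2\pi^2>0$ for all $n\geq 1$, hence
\begin{align*}
\mathrm{Im}\,F(u)=-b\sum_{n=1}^\infty\frac{1}{(a+n^2\pi^2)^2+b^2},
\end{align*}
and the sum is strictly positive and finite. Since $c$ is real, $\mathrm{Im}\,F(u_0)=0$ forces $b=0$. This is the crux of the argument: it collapses the search from the two-dimensional region $\Omega$ to the real interval $(-\pi^2,\infty)$, which under $z=-iu$ corresponds exactly to $z$ lying on the imaginary axis.

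On $(-\pi^2,\infty)$ the function $F$ is real-valued and each summand $u\mapsto(u+n^2\pi^2)^{-1}$ is strictly decreasing, so $F$ is strictly decreasing there. Moreover $F(u)\to+\infty$ as $u\to(-\pi^2)^+$ (the $n=1$ term blows up while $\sum_{n\geq 2}(u+n^2\pi^2)^{-1}$ stays bounded) and $F(u)\to 0^+$ as $u\to+\infty$ (dominated convergence against $\sum(n^2\pi^2)^{-1}$). Hence $F:(-\pi^2,\infty)\to(0,\infty)$ is a strictly decreasing bijection, so $F(u_0)=c$ has exactly one solution $u_0\in(-\pi^2,\infty)$ whenever $c>0$, i.e. whenever $\beta>-1/3$. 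Translating back, $z_0=-i u_0$ is the unique saddle point in $\{\mathrm{Im}(z)<\pi^2\}$, it lies on the imaginary axis, and when $\beta=0$ we have $c=\tfrac16=\sum_{n\geq 1}(n\pi)^{-2}=F(0)$, forcing $u_0=0$ and therefore $z_0=0$.

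The only delicate point is the first step, namely confirming that the series-defined $F$ is analytic on all of $\Omega$ and genuinely equals $-\tfrac{d}{du}\log(\sqrt u/\sinh\sqrt u)$ for the principal branch fixed in the definition of $\rho$. I would settle this by checking the identity on the ray $u>0$, where it reduces to the elementary computation $\tfrac{d}{du}\log(\sqrt u/\sinh\sqrt u)=\tfrac{1-\sqrt u\coth\sqrt u}{2u}$ together with the classical partial-fraction expansion of $\coth$, and then extending it to $\Omega$ by analytic continuation, using that $\rho(\cdot;\beta)$ is already known to be analytic on $\{\mathrm{Im}(z)<\pi^2\}$ from the proof of \cref{thm:Asymp_Exp_pdf_Z^P}. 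Everything after that reduction is elementary real analysis.
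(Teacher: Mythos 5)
Your proof is correct and follows essentially the same route as the paper's: after the cosmetic change of variable $u=z\mathrm{i}$, your $F(u)=\sum_{n\ge1}(u+n^2\pi^2)^{-1}$ is exactly $-\tfrac12\theta(z)$ from the paper's argument, and both proofs use the partial-fraction expansion of $\coth$, the sign of the imaginary part to confine any root to the imaginary axis in $z$, and strict monotonicity together with the boundary limits at $u\to(-\pi^2)^+$ and $u\to+\infty$ to get existence and uniqueness, with $F(0)=1/6$ giving $z_0=0$ at $\beta=0$. No substantive differences to report.
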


\begin{proof}
Direct differentiation of $\rho\left( z; \beta \right)$ gives
\begin{align*}
   \rho^{\prime} \left( z; \beta \right) & = \frac{\sinh{\sqrt{z \mathrm{i}}}}{ \sqrt{z \mathrm{i}} }   \frac{ \sinh{ \sqrt{z \mathrm{i}}} -  \sqrt{z \mathrm{i}} \cosh{ \sqrt{z \mathrm{i}}}   }{ \sinh^2{ \sqrt{z \mathrm{i}} }   } \frac{\mathrm{i}}{2  \sqrt{z \mathrm{i}}} +  \mathrm{i} \left( \frac{1}{6} + \frac{1}{2} \beta \right) \\
   & = \frac{\mathrm{i}}{2} \left( \frac{ 1 - \sqrt{z \mathrm{i}} \coth{ \sqrt{z \mathrm{i}} }   }{ z \mathrm{i}} + \left(\frac{1}{3} + \beta \right)  \right).
\end{align*}
For convenience set $\hat{\beta} \coloneqq \beta + 1/3$ and 
\begin{align*}
    \theta\left( z \right) := \frac{1- \sqrt{z\mathrm{i}} \coth{\sqrt{z\mathrm{i}}}}{z\mathrm{i}}.
\end{align*}
Then $\rho^{\prime} \left( z; \beta \right)$ becomes
\begin{align*}
    \rho^{\prime} \left( z; \beta \right) = \frac{\mathrm{i}}{2} \left( \theta \left( z \right) + \hat{\beta} \right).
\end{align*}

First, we note $\theta\left( z \right)$ is well-defined for all $z \in \mathbb{C}$ except at a countable set of poles which we discuss now. From Weierstrass Product Theorem in complex variable theory we have the classical identity 
\begin{align*}
z \coth z = 1 + 2 z^2 \sum_{n=1}^{\infty} \frac{1}{z^2 + n^2 \pi^2}.
\end{align*}
Hence we know 
\begin{align}
    \theta \left( z \right) = -2 \sum_{n=1}^{\infty} \frac{1}{z \mathrm{i} + n^2 \pi^2}. \label{eq:func_theta}
\end{align}
Then we observe the only singularities of $\theta \left( z \right)$ and hence of $\rho^{\prime} \left( z; \beta \right) $ are at 
$z = n^2 \pi^2 \mathrm{i}$ for $n \geq 1$.

Second, if we set $z = u + \mathrm{i } v$ for $u, v \in \mathbb{R}$, equation (\ref{eq:func_theta}) can be written as 
\begin{align*}
    \theta \left( z \right) = -2 \sum_{n=1}^\infty \frac{1}{ u \mathrm{i} - v + n^2 \pi^2} =-2 \sum_{n=1}^\infty \frac{ n^2 \pi^2 -v - \mathrm{i} u }{  \left( n^2 \pi^2 -v \right)^2 + u^2 },
\end{align*}
the imaginary part of which is
\begin{align*}
    \mathrm{Im} \theta \left( z \right) = 2 u \sum_{n=1}^\infty \frac{1}{\left( n^2 \pi^2 -v \right)^2 + u^2}.
\end{align*}
At a zero of $\rho^{\prime} \left( z; \beta \right)$ we must have $\theta \left( z \right) = - \hat{\beta}$, where $\hat{\beta} \in \mathbb{R}$. In other words, $\theta \left( z \right)$ must be real, i.e. its imaginary part must be zero. Then from the above equation, we see this occurs if and only if $u = 0$, i.e. $z$ is on the imaginary axis. Hence all zeros of $\rho^{\prime} \left( z; \beta \right)$ lie on the imaginary axis.   

Third, now let us focus entirely on the imaginary axis and set $z = \mathrm{i} v$ for $v \in \mathbb{R}$. Then $\theta \left( z \right)$ becomes 
\begin{align*}
    \theta \left( \mathrm{i} v \right) = -2 \sum_{n=1}^\infty \frac{1}{  n^2 \pi^2 -v }.
\end{align*}
Since term by term $\left( n^2 \pi^2 - v \right)^{-1}$ decreases monotonically in magnitude as $v$ moves in the direction from $\pi^2$ to $-\infty$, so $\theta \left( \mathrm{i}v \right)$ increases monotonically as $v$ moves in the same direction. Further, each term $\left( n^2 \pi^2 - v \right)^{-1} > 0$ for $v < \pi^2$. Hence $\theta \left( \mathrm{i}v \right) < 0$ when $v < \pi^2$. If we set $v = - \nu$ with $\nu > 0$ then from its definition, we observe
\begin{align*}
    \theta \left( \mathrm{i} v \right) =\frac{1}{\nu} -  \frac{ \coth \sqrt{\nu}}{\sqrt{ \nu}} \to 0, \quad \text{as} \quad \nu \to +\infty
\end{align*}
since $\coth{\sqrt{\nu}} \to 1$ as $\nu \to +\infty$. This implies $\theta \left( \mathrm{i} v \right) \to 0$ as $v \to - \infty$. At $v=0 $ by using that $ \sum_{n=1}^\infty n^{-2} = \pi^2/6$, we see
\begin{align*}
\theta \left(  0 \right) = -\frac{1}{3}. 
\end{align*}
Thus $\theta \left( \mathrm{i} v \right)$ decreases monotonically from zero as $v \to -\infty$ to $-1/3$ at $v= 0$ and continues to decrease monotonically to $- \infty$ as $v \to \pi^2$.

Hence for $\beta > -1/3 $, i.e. $\hat{\beta} > 0$, there is one and only one solution to $\theta \left( z \right) = - \hat{\beta} $, i.e. to $\rho^{\prime} \left( z; \beta \right)  = 0$ in the region $\left\{ z \in \mathbb{C}:  \mathrm{Im} \left( z \right) < \pi^2  \right\}$. This unique solution must lie on the imaginary axis and is at zero when $\beta =0$. 
\end{proof}

\begin{remark}
We note that the condition $\beta > -1/3$ is satisfied in the domain where the density $f_{Z^P}$ is strictly positive, and hence this condition does not constitute a restriction.  
\end{remark}

\begin{remark}
\label{rem:beta_restriction}
We have shown the imaginary part $v = v \left( \beta \right)$ of the solution $z_0 = z_0 \left( \beta \right)$ to $\rho^{\prime} \left( z; \beta \right)  = 0$ is a monotonically increasing function of $\beta$. To ensure $v$ is within the radius of convergence of the Taylor series expansion of $\rho$ and $\rho^{\prime}$, we must require $v > - \pi^2$, and hence we require $- \left( \beta + 1/3 \right) = \theta \left( \mathrm{i} v \right) <\theta \left(- \mathrm{i} \pi^2 \right)  = \left( 1- \pi \coth{\pi} \right) / \pi^2 $, i.e. $\beta > -1/3 - \left( 1- \pi \coth{\pi} \right) / \pi^2$.
\end{remark}

\section*{Acknowledgements}
We would like to thank the referees for their useful comments and suggestions which helped to improve the original manuscript.

\bibliographystyle{siamplain}

\newpage
\section{Supplementary Materials}
In the tables below, we quote the Chebyshev coefficients of the approximations to the inverse distribution functions for the (standardised) sum ($Z^P$) $S^P$ . Note that the $u$ denotes the right boundary point of each regime.

\FloatBarrier
\begin{table}[H]
{\footnotesize
 \captionsetup{position=top} 
 \caption{Chebyshev coefficients $c_n$ for $P=1$.}
   \begin{center}
   \FloatBarrier
   \subfloat{ 
} 
   \end{center}
}
\end{table}



\end{document}